\documentclass[12pt]{article}
\usepackage[a4paper, total={6.5in, 9.5in}]{geometry}
\usepackage{graphicx}
\usepackage{subcaption}
\usepackage{booktabs} 
\usepackage{hyperref}
\usepackage{bbm}
\usepackage{tikz}
\usepackage{xcolor}
\usepackage{pifont}
\usepackage{float}
\usepackage{amsmath}
\usepackage{amssymb}
\usepackage{mathtools}
\usepackage{placeins}
\usepackage{amsthm}
\usepackage{mathrsfs, dsfont}
\usepackage{tikz-cd}
\usepackage{tikz}
\usetikzlibrary{bayesnet,positioning}
\usepackage{pgfplots}
\pgfplotsset{compat=1.17}
\usepackage{bm}
\usepackage{natbib}
\usepackage{algorithm}
\usepackage{algorithmic}
\usepackage{caption}
\theoremstyle{plain}
\newtheorem{theorem}{Theorem}[section]
\newtheorem{proposition}[theorem]{Proposition}
\newtheorem{lemma}[theorem]{Lemma}
\newtheorem{corollary}[theorem]{Corollary}
\theoremstyle{definition}
\newtheorem{definition}[theorem]{Definition}

\theoremstyle{remark}

\newcommand{\prob}[1]{\mathbb{P}\left(#1\right)}
\newcommand{\altprob}[1]{\mathbb{P}}

\newcommand{\mean}[1]{\mathbb{E}\left[\,#1\right]}

\def\simiid{\stackrel{\mbox{\scriptsize{iid}}}{\sim}}

\usetikzlibrary{decorations.pathreplacing}
\newcommand{\cmark}{\textcolor{green!60!black}{\ding{51}}} 
\newcommand{\xmark}{\textcolor{red!70!black}{\ding{55}}}   

\hypersetup{
    colorlinks=true,
    linkcolor=blue,
    citecolor=blue,
    filecolor=blue,      
    urlcolor=blue,
    pdfpagemode=FullScreen,
    }
    
\date{}

\title{\bf Complexity bounds for Dirichlet process \\ slice samplers}

\author{
Beatrice Franzolini,\\
\small{Department of Economics, Management and Statistics,
University of Milano-Bicocca}
\vspace*{0.2cm}\\
Francesco Gaffi\\
\small{Department of Economics, University of Bergamo}
}

\begin{document}
\maketitle

\begin{abstract}
Slice sampling is a standard Monte Carlo technique for Dirichlet process (DP)--based models, widely used in posterior simulation.
However, formal assessments of the scalability of posterior slice samplers have remained largely unexplored, primarily because the computational cost of a slice-sampling iteration is random and potentially unbounded.
In this work, we obtain high-probability bounds on the computational complexity of DP slice samplers. 
Our main results show that, uniformly across posterior cluster-growth regimes, the overhead induced by slice variables, relative to the number of clusters supported by the posterior, is $O_{\mathbb P}(\log n)$.  
As a consequence, even in worst-case configurations, super-linear blow-ups in per-iteration computational cost occur with vanishing probability.
Our analysis applies broadly to DP--based models without any likelihood-specific assumptions, still providing complexity guarantees for posterior sampling on arbitrary datasets. 
These results establish a theoretical foundation for assessing the practical scalability of slice sampling in DP-based models.
\end{abstract}

\section{Introduction}\label{sec:intro}

Dirichlet process (DP)--based models are widely used to define flexible latent-component structures in problems where the number of clusters, groups, or latent effects is unknown. 
Canonical examples include mixture models for density estimation and clustering \citep{neal2000markov}, stochastic block models for network data \citep{kemp_2006,xu2006infinite}, regression and random-effects models \citep{wade2025bayesian}, and species sampling problems \citep{balocchi2024species}. 
Beyond classical applications, DP-based constructions continue to be an active modeling tool in novel machine learning methodologies and applications, including domain adaptation \citep{ling2024bayesian}, open-set classification \citep{snell2024metalearned}, covariate-informed clustering \citep{chakrabartiglobal}, and online anomaly detection \citep{mei2026dpgiil}. 
In all these settings, inference involves exploring distributions over partitions whose complexity grows with the sample size, offering substantial modeling flexibility while posing significant computational challenges.

Markov chain Monte Carlo (MCMC) algorithms for DP-based models broadly fall into two main classes: marginal and conditional methods. Marginal algorithms integrate out the random probability measure, yielding exact inference for the induced partition structure, but typically rely on sequential, one-at-a-time updates and nontrivial bookkeeping \citep{neal2000markov}. 
While often effective for simple mixture models and small sample sizes, these features become increasingly restrictive in large datasets and in models with complex likelihoods, such as regression settings with covariate-dependent structure \citep[see, for instance,][]{dunson2008kernel} or relational data with non-exchangeable observations \citep[see, for instance,][]{cai2016edge, zhou2024bayesian}.
Conditional algorithms instead retain the random probability measure explicitly, commonly via the stick-breaking representation of the DP \citep{sethuraman1994constructive}. 
This enables joint updates of latent variables and often leads to simpler implementations. 
A widely used approach in this class is truncation of the stick-breaking representation at a fixed level, resulting in blocked Gibbs samplers that are computationally efficient and, for typical likelihood specifications, perform joint updates \citep{ishwaran2001gibbs}. 
However, truncation fundamentally alters the model: unless the truncation level grows with the sample size, truncated algorithms introduce a hard-threshold non-vanishing bias in the posterior distribution of the partition by assigning zero probability to configurations with more clusters than the truncation allows. 
Although average-case error bounds under the generative model are available \citep{ishwaran2002approximate, ishwaran2002exact, campbell2019truncated, li2021truncated}, they do not provide guarantees for a fixed observed dataset.

Slice sampling offers a principled alternative that avoids both marginalization and truncation while preserving convergence to the exact posterior \citep{neal2003slice, walker2007sampling, ge2015distributed}. 
By introducing auxiliary slice variables, slice samplers restrict attention to a finite subset of components at each MCMC iteration without imposing a fixed truncation level. 
Originally developed for DP mixture models in \citet{walker2007sampling}, slice sampling extends directly to a broad class of DP-based models, generalizations of DPs, and discrete random measures whenever a stick-breaking or weight-based representation is available \citep{kalli2011slice, zhu2020slice}. 
In practice, slice samplers combine the advantages of marginal and blocked methods: they target the true posterior distribution, typically allow joint updates, and require minimal bookkeeping.

Despite these advantages, slice samplers for the DP \citep[as defined for instance in][]{walker2007sampling,kalli2011slice,ge2015distributed} suffer from a fundamental theoretical limitation: their per-iteration computational cost is unbounded. 
The number of components that must be instantiated at a given iteration depends on the minimum slice variable, which can be arbitrarily close to zero with positive probability, forcing the number of steps in the algorithm to be arbitrarily large. 
Formally, let $(X_s)_{s\ge1}$ denote the Markov chain induced by the slice sampler, $C(X_s)$ the computational cost of iteration $s$ (in some unit of measure), and $\pi$ the stationary distribution. For any finite threshold $C<\infty$, define the high-cost set $A_C := \{x : C(x) > C\}$.
At stationarity, the marginal probability $\mathbb P_\pi(C(X_s)>C) =: \pi(A_C)$ is constant across iterations.
Defining the hitting time $\tau_C=\inf\{s\ge1: X_s\in A_C\}$, since $\pi(A_C)>0$, $\tau_C<\infty$ almost surely and
\[
\mathbb P_\pi\bigl(\exists\, s\le T : C(X_s)>C\bigr)
=
\mathbb P_\pi(\tau_C\le T)
\xrightarrow[T\to\infty]{} 1.
\]
Consequently, no deterministic upper bound on the computational cost of DP slice samplers exists. This lack of formal complexity guarantees has made it difficult to assess the scalability of slice-based inference.

In this work, we provide a formal analysis of the computational complexity of slice sampling for a broad class of DP--based models. 
Rather than seeking deterministic worst-case bounds, which are unattainable for slice samplers, we adopt a high-probability perspective and derive probabilistic bounds on the number of components instantiated at each iteration. 
Our main results show that, with arbitrarily high probability, the computational overhead introduced by the slice variables over the number of clusters supported by the posterior grows at most logarithmically with the sample size. 
Thus, even in unfavorable configurations, super-linear blow-ups in per-iteration cost occur with vanishing probability. 
Our analysis applies broadly to general DP-based models without relying on model-specific likelihood assumptions, while providing guarantees for any input data. 
By establishing explicit high-probability complexity guarantees, this work provides a theoretical foundation for the practical scalability of slice-based inference in DP-based models.

\section{Preliminaries on Dirichlet process and slice samplers}\label{sec:preliminaries}
We define a general class of models based on the DP as follows.

\begin{definition}[DP--based generative model]
\label{def:classofmodels}
Let $\bm{z}_n = (z_1,\ldots,z_n)$. We say that a random object $Y$ is generated from a DP--based model with sample size $n$ if
\begin{align*}
Y \mid \eta, \bm{z}_n &\sim \mathcal{L}(\bm{z}_n, \eta), \\
z_i \mid G &\overset{\mathrm{iid}}{\sim} G, \qquad i \in [n], \\
G &\sim \mathrm{DP}(\alpha, P_0),
\end{align*}
where $\mathcal{L}$ denotes a likelihood depending on the latent variables $\bm{z}_n$ and parameters $\eta$, $[n] = \{1,\ldots,n\}$, and $\mathrm{DP}(\alpha, P_0)$ is the law of a DP \citep{Fer(73)} with concentration parameter $\alpha > 0$ and non-atomic base measure $P_0$.
\end{definition}

For different choices of the likelihood $\mathcal{L}$, Definition~\ref{def:classofmodels} encompasses a wide range of models, including species sampling models, DP mixture models, DP stochastic block models, and related constructions.

A fundamental challenge in sampling (both \emph{a priori} and \emph{a posteriori}) from models in Definition~\ref{def:classofmodels} arises from the infinite-dimensional nature of the random probability measure $G$. 
Almost surely, $G$ admits the stick-breaking representation \citep{sethuraman1994constructive}
\begin{equation}\label{eq:DP_discreterep}
G(\mathrm{d}x) = \sum_{k = 1}^{\infty} \pi_k \delta_{\phi_k}(\mathrm{d}x),
\end{equation}
where the atoms $(\phi_k)_{k \geq 1}$ and the weights $(\pi_k)_{k \geq 1}$ are independent, with
\[
\phi_k \overset{\mathrm{iid}}{\sim} P_0,
\quad
\pi_k = V_k \prod_{\ell=1}^{k-1} (1 - V_\ell),
\quad
V_k \overset{\mathrm{iid}}{\sim} \mathrm{Beta}(1,\alpha).
\]
This representation highlights the intrinsic infinite-dimensional structure of $G$, which is the main source of both the flexibility of DP--based models and the computational challenges associated with performing inference under them.
The latent variables $\bm{z}_n$ in Definition~\ref{def:classofmodels} naturally induce a random partition of the index set $[n]$. 
Specifically, define an equivalence relation on $[n]$ by declaring $i \sim j$ if and only if $z_i = z_j$. 
The resulting equivalence classes correspond to clusters of latent variables sharing the same value and define a partition $\rho_n$ of $[n]$. 
Under the DP prior law on $G$, this random partition is exchangeable, and its \emph{a priori} distribution depends solely on $\alpha$ \citep{pitman1996some}. 
In the following, we encode such a partition also with a \emph{cluster‐label vector} $\bm{c}_n = (c_1,\ldots,c_n)$, such that $c_i \in [n]$ and $c_i=c$ means that $i$ belongs to cluster $c$ (in some order, \emph{e.g.}, order of appearance) according to $\rho_n$.

\begin{figure*}[tb!]
\centering
\resizebox{\textwidth}{!}{
\begin{tikzpicture}
  \begin{scope}[shift={(0,0)}]
    \draw[->] (0,0) -- (8,0) node[right] {$x$};
    \draw[->] (0,0) -- (0,5) node[above] {$G(\mathrm{d}x)$};
    \foreach \i/\pi in {1/1.7,2/1.0,3/1.4,4/0.7} {
      \fill[red!100] (\i-0.2,0) rectangle (\i+0.2,\pi);
      \draw[thick,black]  (\i-0.2,0) rectangle (\i+0.2,\pi);
    }
    \node at (5,0.25) {$\dots$};
    \fill[red!100] (6-0.2,0) rectangle (6+0.2,0.5);
    \draw[thick,black]  (6-0.2,0) rectangle (6+0.2,0.5);
    \node at (7,0.25) {$\dots$};
    \node at (4,4) {\Large $G(\mathrm{d}x)\!=\!\sum\limits_{k=1}^\infty\pi_k\delta_{\phi_k}(\mathrm{d}x)$};
    \node at (1, 1.9){\small $\pi_1$};
    \node at (2, 1.2) {\small $\pi_2$};
    \node at (3, 1.6) {\small $\pi_3$};
    \node at (4, 0.9){\small $\pi_4$};
    \node at (6, 0.7){\small $\pi_k$};
    \node at (1, -0.2){\small $\phi_1$};
    \node at (2, -0.2) {\small $\phi_2$};
    \node at (3, -0.2) {\small $\phi_3$};
    \node at (4, -0.2){\small $\phi_4$};
    \node at (6, -0.2){\small $\phi_k$};
  \end{scope}
  
  \begin{scope}[shift={(9,0)}]
    \draw[->] (0,0) -- (0,5) node[left] {$p(u)$};
    \draw[->] (0,0) -- (8,0) node[right] {$u$};

    \draw[thick,blue,dashed]
    (0.5,4.5) --
    (0.5,4) --
    (0.8,4) --
    (0.8,3.5) --
    (1,3.5) --
    (1,3) --
    (1.3,3) -- 
    (1.3,2.5) -- (1.75,2.5);

    \draw[line width=0.4mm, blue]
      (1.75,2.5) -- 
      (1.75,2) -- (2.25,2)
      -- (2.25,1.5) -- (3,1.5)
      -- (3,1) -- (4,1)
      -- (4,0.5) -- (4.75,0.5)
       -- (4.75,0);

    \foreach \x/\lbl in {1.75/$\pi_k$,
    2.25/$\pi_4$,3/$\pi_2$,4/$\pi_3$,4.75/$\pi_1$} {
      \draw (\x,0.1) -- (\x,-0.1) node[below] {\small \lbl};
    }
    \node at (1,-0.25) {$\dots$};

    \foreach \y in {1,2,3,4,5} {
      \draw (0.1,\y/2) -- (-0.1,\y/2) node[left] {\small \y};
    }

    \node[anchor=west] at (2.2,4) {\Large $p(u)=\sum\limits_{k=1}^{\infty}\mathbbm{1}(0<u<\pi_k)$};
  \end{scope}

  \begin{scope}[shift={(18,0)}]
    \draw[->] (0,0) -- (8,0) node[right] {$x$};
    \draw[->] (0,0) -- (0,5) node[above] {$u$};
    \foreach \i/\pi in {1/1.7,2/1.0,3/1.4,4/0.7} {
      \fill[gray!15] (\i-0.2,0) rectangle (\i+0.2,\pi);
      \draw[thick,black] (\i-0.2,0) rectangle (\i+0.2,\pi);
    }
    \fill[gray!15] (6-0.2,0) rectangle (6+0.2,0.5);
    \draw[thick,black] (6-0.2,0) rectangle (6+0.2,0.5);
    \node at (5,0.25) {$\dots$};
    \node at (7,0.25) {$\dots$};
    \def\u{1.1}
    \draw[blue,very thick,dashed] (0,\u) -- (8,\u) node[right] {\small $u=u_0$};
    \foreach \i/\pi in {1/1.7,3/1.4} {
      \fill[red!100] (\i-0.2,\u) rectangle (\i+0.2,\pi);
    }
    \node at (1, 1.9){\small $\pi_1$};
    \node at (2, 1.2) {\small $\pi_2$};
    \node at (3, 1.6) {\small $\pi_3$};
    \node at (4, 0.9){\small $\pi_4$};
    \node at (6, 0.7){\small $\pi_k$};
    \node at (1, -0.2){\Large $\bm{\phi_1}$};
    \node at (2, -0.2) {\small $\phi_2$};
    \node at (3, -0.2) {\Large $\bm{\phi_3}$};
    \node at (4, -0.2){\small $\phi_4$};
    \node at (6, -0.2){\small $\phi_k$};
    \node[anchor=west] at (1.2,4) {\Large Sample $u_0\sim p(u)$};
    \node[anchor=west] at (1.2,3.5) {\Large Then $x\sim\mathrm{Uniform}(\{\phi_k:\pi_k>u_0\})$};
  \end{scope}
\end{tikzpicture}}
\caption{\label{fig:sliceidea-inf}Distributions involved in the slice sampler. Sampling $x\sim G(\text{d}x)$ is equivalent to sampling $u\sim p(u)$ and $x\mid u \sim p(x\mid u)$. \emph{Left panel:} A realization of a discrete probability measure
$G(\mathrm{d}x) = \sum_{k=1}^{\infty} \pi_k \delta_{\phi_k}(\mathrm{d}x)$,
represented as point masses located at atoms $\phi_k$ with weights $\pi_k$.
Sampling $x \sim G$ corresponds to selecting one atom $\phi_k$ with probability $\pi_k$.
\emph{Center panel:} The marginal distribution of the auxiliary slice variable $u$,
given by $p(u) = \sum_{k\ge1} \mathbbm{1}(0<u<\pi_k)$, which is a decreasing step function
supported on $(0, \max_k \pi_k)$.
\emph{Right panel:} Slice-sampling mechanism.
First, a slice value $u_0$ is drawn from $p(u)$ (horizontal dashed line).
Conditionally on $u_0$, the variable $x$ is sampled uniformly from the finite set
$\{\phi_k : \pi_k > u_0\}$, corresponding to atoms whose weights exceed the slice level.}
\end{figure*}
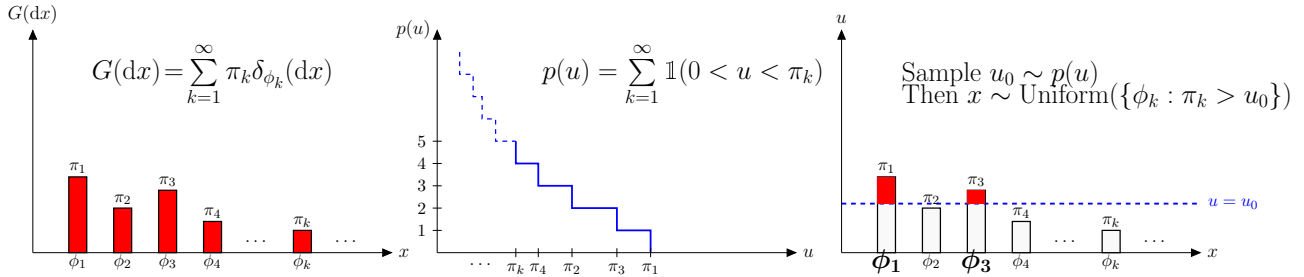
Slice sampling for DP--based models introduces auxiliary variables to restrict inference to a finite, data-dependent subset of latent components while preserving exactness of the target distribution (\emph{i.e.}, the posterior of $\rho_n$). 
The original formulation of slice sampling procedures to sample $x$ from a distribution with density $f$ on $\mathbb{X} \subset \mathbb{R}^m$ \citep{neal2003slice} introduces a latent slice variable $u$ with joint density $p(x,u) = \mathbbm{1}(0 < u < f(x))$ that leaves the marginal distribution of $x$ unchanged. 
Conditional on $u$, sampling $x$ reduces to sampling uniformly over the slice set $\{x : f(x) > u\}$.
This construction specializes naturally to discrete distributions \citep{walker2007sampling, kalli2011slice, ge2015distributed}. 
In particular, for a DP realization as in \eqref{eq:DP_discreterep}, conditionally on the latent slice variable $u$, a finite active set is defined as $A(u) = \{k : \pi_k > u\}$, and $x$ is sampled uniformly from $A(u)$. 
Figure~\ref{fig:sliceidea-inf} exemplifies the distributions and the sampling mechanism that define a slice sampler for a discrete distribution $G$. 
Algorithms~\ref{alg:generative} and \ref{alg:posterior} contain the slice sampling algorithms to sample, respectively, from the generative model and from the posterior distribution for any model following Definition~\ref{def:classofmodels}. 
In contrast with the procedure proposed in \citet{walker2007sampling} and \citet{kalli2011slice}, Algorithm~\ref{alg:posterior} relies on the posterior representation of the DP described in \citet{pitman1996some} and employed in the improved slice sampler of \citet{ge2015distributed}, which allows one to recover exchangeability of components, improving the mixing of the slice sampler and leading to the update of the weights of the allocated components according to $(\pi_1,\ldots,\pi_H,\pi^{\star})\sim\mathrm{Dirichlet}(n_1,\ldots,n_H, \alpha)$. 
See Section~\ref{sec:hpb} for a more detailed account of the implications of this choice.

\begin{algorithm}[htb!]
\caption{Slice‑Sampler a priori (generative mechanism)}\label{alg:generative}
\begin{algorithmic}[1]
\REQUIRE Hyperparameters $\alpha$, $P_0$, $\eta$, num. of\ iterations $T$
\STATE Initialize allocations $c_i$ for $i\in[n]$ 
\FOR{$t = 1,\dots,T$}
  \vspace{0.5ex}
  \STATE $H\gets$ number of clusters 
  \STATE Remap $(c_i)_{i=1}^{n}$ into $[H]^{n}$
  \STATE $\pi^{\star}\gets 1$
  \FOR{$h = 1,\dots,H$}
    \STATE Sample $\phi_h \sim P_0$
    \STATE Sample $v_h \sim \mathrm{Beta}(1,\alpha)$ 
    \STATE $\pi_h \gets v_h\times\pi^{\star}$
    \STATE $\pi^{\star} \gets \pi^{\star} - \pi_h$
  \ENDFOR

  \FOR{$i=1$ {\bfseries to} $n$}
    \STATE Sample $u_i \sim \mathrm{Uniform}(0,\,\pi_{c_i})$
  \ENDFOR

  \vspace{0.5ex}
  
  \STATE $u^* \gets \min_i u_i, \; K \gets H$
  \WHILE{$\pi^{\star} > u^*$}
    \STATE $K \gets K + 1$
    \STATE Sample $\phi_K \sim P_0$
    \STATE Sample $v_K \sim \mathrm{Beta}(1,\alpha)$ 
    \STATE $\pi_K \gets v_K\times\pi^{\star}$
    \STATE $\pi^{\star} \gets \pi^{\star} - \pi_K$
  \ENDWHILE

  \vspace{0.5ex}
  
  \FOR{$i=1$ {\bfseries to} $n$}
    \STATE $A_i \gets \{\,k\in[K]: \pi_k > u_i\}$
    \STATE Sample $c_i \sim \mathrm{Uniform}\bigl(A_i\bigr)$
    \STATE $z_i \gets \phi_{c_i}$
  \ENDFOR
  \STATE Sample $Y \sim \mathcal{L}(\bm{z}_n, \eta)$
\ENDFOR
\end{algorithmic}
\end{algorithm}

\begin{algorithm}[htb!]
\caption{Slice‑Sampler a posteriori}\label{alg:posterior}
\begin{algorithmic}[1]
\REQUIRE Hyperparameters $\alpha$, $P_0$, $\eta$, num. of\ iterations $T$
\STATE Initialize allocations $c_i$ for $i\in[n]$ 
\FOR{$t = 1,\dots,T$}
  \vspace{0.5ex}
  \STATE $H\gets$ number of clusters 
  \STATE Remap $(c_i)_{i=1}^{n}$ into an element in $[H]^{n}$
  \STATE $n_h \gets \sum_i \mathbbm{1}(c_i = h)$
  \STATE $(\pi_1,\ldots,\pi_H,\pi^{\star})\sim\mathrm{Dirichlet}(n_1,\ldots,n_H, \alpha)$
  \FOR{$h=1$ {\bfseries to} $H$}
    \STATE Sample $\phi_h$ from its full conditional
  \ENDFOR
  \FOR{$i=1$ {\bfseries to} $n$}
    \STATE Sample $u_i \sim \mathrm{Uniform}(0,\,\pi_{c_i})$
  \ENDFOR

  \vspace{0.5ex}
  
  \STATE $u^* \gets \min_i u_i, \; K \gets H$
  \WHILE{$\pi^{\star} > u^*$}
    \STATE $K \gets K + 1$
    \STATE Sample $\phi_K \sim P_0$
    \STATE Sample $v_K \sim \mathrm{Beta}(1,\alpha)$ 
    \STATE $\pi_K \gets v_K\times\pi^{\star}$
    \STATE $\pi^{\star} \gets \pi^{\star} - \pi_K$
  \ENDWHILE

  \vspace{0.5ex}
  
  \FOR{$i=1$ {\bfseries to} $n$}
    \STATE $A_i \gets \{\,k\in[K]: \pi_k > u_i\}$
    \STATE Sample \small{$c_i \sim \mathrm{Cat}\bigl(w_k \propto \mathrm{Lik}(Y;z_{i} = \phi_k),\,  k \in A_i\bigr)^{\star}$}
    \STATE $z_i \gets \phi_{c_i}$
  \ENDFOR
\ENDFOR
\end{algorithmic}
\footnotesize{${\star}$ $\mathrm{Lik}(Y;z_{i} = \phi_k)$ denotes the likelihood evaluated at $z_{i} = \phi_k$ and current state of the other parameters. Typically, this step does not require full evaluation of the likelihood (see Section~\ref{sec:compcost}).}
\end{algorithm}

While all steps involve sampling from relatively simple distributions, the core of the algorithm arguably lies in determining how many weights and atoms must be instantiated.
To guarantee that the set $A(u_i)$ can be identified exactly for each $i$, a sufficient number of components must be instantiated so that all weights satisfying $\pi_k > u_i$ are available. 
In Neal’s original slice sampler \citep{neal2003slice}, the slice set is located via generic expansion procedures that are well-suited to continuous targets but possibly inefficient when $G$ is discrete. 
The practical difficulty of controlling the number of instantiated components under this approach for discrete distributions is evident, for instance, in the exact slice sampler for hierarchical DP proposed by \citet{amini2019exact}, which relies on additional mechanisms to manage the growth of the active set.

In this regard, \citet{walker2007sampling} observes that a finite number $K$ of components sufficient to allocate the latent variables in $\bm{z}$ at each iteration is
\begin{equation}\label{eq:K_walker}
K = \min\left\{k\in\mathbb{N} : \sum_{h = k+1}^\infty \pi_h < u_{\min}\right\}
\end{equation}
with $u_{\min}:=\min_iu_i$.
By this definition, $K$ is finite and any weight beyond the $K$-th component necessarily satisfies $\pi_j \leq u_i$ for every $i$ and $j>K$. 
These components can be safely ignored at the current iteration for sampling $\bm{z}$.
To implement this approach, one may dynamically determine the truncation level $K$ during the sampling of the stick-breaking weights by checking the condition $\sum_{k=1}^{K} \pi_k > 1 - u_{\min}$ and stopping as soon as it is met, as detailed in steps 16--22 of Algorithm~\ref{alg:generative} and steps 14--20 of Algorithm~\ref{alg:posterior}.
In this way, the algorithm avoids accept-reject steps, retrospective sampling \citep{papaspiliopoulos2008retrospective}, or any approximated truncated representation.

However, $u_{\min}$ can be arbitrarily close to zero with positive probability at each iteration, and therefore, the condition 
\[
\sum_{j=1}^K \pi_j > 1 - u_{\min}
\]
may force $K$ to be arbitrarily large, so that the sampler must update an unbounded number of components (an effect that becomes more pronounced as the tails of the Dirichlet process become heavier). 
Therefore, it is important to assess the probability of actually exceeding a specific computational threshold at each iteration and how such probability varies with the sample size $n$ growing. 
In Section~\ref{sec:hpb}, we derive \emph{a posteriori} high‐probability bounds on computational cost, valid in any cluster-growth regimes, providing guarantees that hold a posteriori for any dataset. 
All the proofs are given in Appendix \ref{app:proofs}. 

\section{High-probability bounds for slice sampler complexity}\label{sec:hpb}

The computational cost of Algorithm~\ref{alg:posterior} depends on the truncation level $K$ instantiated at each iteration. 
In particular, computational complexity is driven by the final \textbf{for}-loop updating the allocations. 
It requires, for each observation $i\in[n]$, evaluating the likelihood for all components $k\in A_i\subseteq[K]$. 
In the worst case, this step entails up to $n\times K$ likelihood evaluations per iteration. 
Consequently, while the slice sampler avoids fixed truncation, its practical scalability hinges on probabilistic control of the magnitude of $K$, which directly governs the dominant per-iteration computational cost.
In the following, we show that the performance of the slice sampler for DP--based models is safeguarded by high‐probability bounds. 
To derive the result, the main quantity under study is the dynamically-determined truncation level $K$ and its behavior as $n\rightarrow\infty$, therefore from now on we replace the notation $K$ with $K_n$ to highlight dependence on the sample size. 

The truncation level $K_n$, as defined by \citet{walker2007sampling} and reported in equation~\eqref{eq:K_walker}, is the smallest index such that the remaining stick-breaking mass falls below the minimum slice variable $u_{\min}$. 
When the stick-breaking weights entering this definition are sampled from the Dirichlet process prior and one conditions on a fixed value of $u_{\min}$, the distribution of $K_n$ admits an explicit characterization: by a classical result of \citet{muliere1998approximating}, one has $K_n-1 \mid u_{\min}\sim \mathrm{Poisson}(\alpha \log(1/u_{\min}))$. 
This result serves as a useful reference point for understanding how the slice level controls the number of instantiated components under prior sampling. 
However, its scope is inherently limited: it is based on prior sampling for the weights, does not account for the configuration of the allocated components, and, crucially, does not describe how $u_{\min}$ itself behaves as the sample size grows.

In practical implementations of slice-based Gibbs samplers, both \emph{a priori} and \emph{a posteriori}, the truncation level $K_n$ is subject to an additional structural constraint: it must always be at least as large as the maximum label appearing in the current cluster allocation. 
Otherwise, the slice variables $u_i$ cannot be sampled (see line~13 of Algorithm~\ref{alg:generative} and line~11 of Algorithm~\ref{alg:posterior}). 
The improved slice sampler of \citet{ge2015distributed} exploits the exchangeability of the posterior representation of the DP described in \citet{pitman1996some} to enforce this constraint optimally, by remapping cluster labels so that the maximum label coincides with the number of occupied clusters. 
This relabeling step leads to a more efficient Gibbs sampler without altering the target posterior distribution. 
Differently from an approach relying solely on a fixed stick-breaking ordering, the improved slice sampler of \citet{ge2015distributed} avoids the forced instantiation and sampling of weights corresponding to empty components before entering the \textbf{while}-loop. 
Denoting by $H_n$ the number of clusters at the current iteration, the effective truncation level used by the algorithm can therefore be written as
\begin{equation}\label{eq:K_alg}
K_n = \min\left\{k \geq H_n : \sum_{h = k+1}^\infty \pi_h < u_{\min} \right\}.
\end{equation}

In posterior inference, while the formal definition of $K_n$ in~\eqref{eq:K_alg} remains unchanged, the probabilistic structure governing it differs substantially from the prior-based setting. 
Both the weights and the slice variables are sampled from their posterior distributions, conditionally on the currently visited partition configuration, which itself depends on the observed data. 
As a consequence, the tail mass appearing in the definition of $K_n$ is governed by posterior rather than prior-distributed stick-breaking factors, and the minimum slice variable $u_{\min}$ has a data-dependent distribution. 

Understanding the computational complexity of posterior slice sampling therefore requires controlling the joint behavior of the sampled weights and the minimum slice variable $u_{\min}$ as the sample size increases, without relying on prior laws or conditioning on a fixed value of $u_{\min}$. 
We start investigating the law of the minimum slicing variable conditioning on the partition configuration visited by the chain and marginalizing the weights. 
The next Proposition highlights how merging two clusters always reduces the conditional probability of observing minimum slicing variables below any given threshold.

\begin{proposition}[Merging two clusters increases the survival probability of $u_{\min}$]
\label{prop:merge_two_clusters}
For any $\rho_n$ partition of $[n]$ with cluster sizes
$(n_1,\dots,n_H)$, $\sum_{h=1}^H n_h=n$, let $(\pi_1,\dots,\pi_H,\pi^\star)\mid \rho_n
\sim \mathrm{Dirichlet}(n_1,\dots,n_H,\alpha)$, let $u_i\mid \pi_{c_i}\sim \mathrm{Uniform}(0,\pi_{c_i})$ independently, and
$u_{\min}=\min_{1\le i\le n}u_i$.  

\noindent If $\rho_n^{(r\oplus s)}$ is the partition obtained from $\rho_n$ by merging two (distinct) clusters $r$ and $s$ into a single cluster of size $n_r+n_s$ and leaving the other clusters unchanged, then, for any $x\in(0,1)$
\[
\prob{u_{\min}\le x \,\middle|\, \rho_n^{(r\oplus s)}}
\;\le\;
\prob{u_{\min}\le x \mid \rho_n}.
\]
\end{proposition}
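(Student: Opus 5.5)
The plan is to condition on the weights, write the survival function of $u_{\min}$ in closed form, and then reduce the comparison to the two merged coordinates via the aggregation property of the Dirichlet distribution. Throughout, write $(t)_+=\max(t,0)$. Given $\rho_n$ and $(\pi_1,\dots,\pi_H,\pi^\star)$, the $u_i$ are independent and $\mathrm{Uniform}(0,\pi_{c_i})$, so for $x\in(0,1)$
\[
\prob{u_{\min}> x \,\middle|\, \rho_n,\pi} \;=\; \prod_{h=1}^H \Bigl(1-\tfrac{x}{\pi_h}\Bigr)_+^{n_h}.
\]
The claim is equivalent to
\[
\mean{\textstyle\prod_{h} (1-x/\pi_h)_+^{n_h} \,\middle|\, \rho_n} \;\le\; \mean{\textstyle\prod_{h} (1-x/\pi'_h)_+^{n'_h} \,\middle|\, \rho_n^{(r\oplus s)}},
\]
where the primes refer to the merged configuration.

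Next I would couple the two Dirichlet laws. Let $(\pi_1,\dots,\pi_H,\pi^\star)\sim\mathrm{Dirichlet}(n_1,\dots,n_H,\alpha)$ and set $S=\pi_r+\pi_s$. By the aggregation property, the vector obtained by replacing $(\pi_r,\pi_s)$ with $S$ and keeping the other coordinates is $\mathrm{Dirichlet}$ with parameters $(n_h)_{h\ne r,s}$, $n_r+n_s$, $\alpha$. This is exactly the weight law under $\rho_n^{(r\oplus s)}$, with $S$ playing the role of the merged weight. Hence both sides can be evaluated on the same probability space. The factors for $h\neq r,s$ are identical and nonnegative, so it suffices to prove the pointwise inequality
\[
\Bigl(1-\tfrac{x}{\pi_r}\Bigr)_+^{n_r}\Bigl(1-\tfrac{x}{\pi_s}\Bigr)_+^{n_s} \;\le\; \Bigl(1-\tfrac{x}{S}\Bigr)_+^{n_r+n_s}.
\]

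This inequality is immediate. Since $\pi_r\le S$, we have $x/\pi_r\ge x/S$, and $t\mapsto(1-t)_+$ is nonincreasing, so $(1-x/\pi_r)_+\le(1-x/S)_+$. The same argument gives $(1-x/\pi_s)_+\le(1-x/S)_+$. All terms are nonnegative, so raising to the powers $n_r$ and $n_s$ and multiplying preserves the inequalities. Multiplying by the common nonnegative factor and taking expectations gives survival under $\rho_n$ at most survival under $\rho_n^{(r\oplus s)}$. Taking complements yields the stated inequality for $\prob{u_{\min}\le x\mid\cdot}$.

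The only step needing care is the coupling: one must check that the aggregation property delivers the merged partition's joint law, including $\pi^\star$ and the untouched clusters, so that the two expectations can be compared pointwise. Once that holds, the argument needs no finer facts such as the independence of $\pi_r/S$ from the rest. Beyond this, there is no real obstacle: the result is a monotonicity statement that holds pathwise under the coupling.
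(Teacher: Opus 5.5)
Your proof is correct and follows the paper's argument: the same conditional survival product, the same Dirichlet aggregation property, and the same key bound that each split factor $(1-x/\pi_r)_+$ and $(1-x/\pi_s)_+$ is dominated by $(1-x/(\pi_r+\pi_s))_+$. The only difference is that the paper also invokes neutrality to integrate out the ratio $\pi_r/(\pi_r+\pi_s)\sim\mathrm{Beta}(n_r,n_s)$ conditionally on the aggregated vector, and your pathwise coupling correctly shows that step is unnecessary.
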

Intuitively, Proposition~\ref{prop:merge_two_clusters} reflects the fact that splitting a cluster into two divides the associated weight, yielding smaller weights and hence narrower supports for the corresponding slice variables. 
This reduces the probability that slice variables exceed a given threshold. 
Proofs are provided in Appendix~\ref{app:proofs}.
As a direct consequence of Proposition~\ref{prop:merge_two_clusters}, we have that the partition with $n$ clusters of size $1$ induces the lowest survival probability for $u_{\min}$. 
\begin{corollary}[Singleton partition yields the lowest survival probability of $u_{\min}$]
\label{cor:umin_singleton_worst}
Let $\rho_n^{\mathrm{sing}}$ denote the singleton partition of $[n]$, \emph{i.e.}, the partition with $n$ clusters of size $1$. 
Then, for every $x\in(0,1)$ and every partition $\rho_n$,
\[
\prob{u_{\min} > x \,\middle|\, \rho_n^{\mathrm{sing}}}
\;\le\;
\prob{u_{\min} > x \,\middle|\, \rho_n}.
\]
\end{corollary}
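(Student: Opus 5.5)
The plan is to derive the corollary from Proposition~\ref{prop:merge_two_clusters}, applied repeatedly along a chain of merges that starts at the singleton partition and ends at an arbitrary partition $\rho_n$. Fix $x\in(0,1)$ and a partition $\rho_n$ with $H$ clusters of sizes $(n_1,\dots,n_H)$. By complementation,
\[
\prob{u_{\min} > x \mid \rho}=1-\prob{u_{\min} \le x \mid \rho}
\]
for every partition $\rho$. Proposition~\ref{prop:merge_two_clusters} therefore says that merging two clusters can only increase $\prob{u_{\min} > x \mid \cdot}$. It thus suffices to show that $\rho_n$ is reachable from $\rho_n^{\mathrm{sing}}$ by a finite sequence of pairwise merges.

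To build the sequence, write the blocks of $\rho_n$ as $B_1,\dots,B_H$. Starting from $\rho^{(0)}=\rho_n^{\mathrm{sing}}$, I would process each block $B_h$ with $|B_h|\ge 2$ in turn. Within $B_h$, merge the singleton $\{i\}$ for its second element into the cluster containing its first element, and continue adding one element at a time. Each step merges two distinct clusters of the current partition, so it is a valid instance of the operation $r\oplus s$ in Proposition~\ref{prop:merge_two_clusters}. Clusters outside the current block are left unchanged.

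After $\sum_h (|B_h|-1)=n-H$ steps, the resulting partition $\rho^{(n-H)}$ equals $\rho_n$. Applying the Proposition at each step gives the chain
\[
\prob{u_{\min}>x\mid\rho^{(0)}}\le\prob{u_{\min}>x\mid\rho^{(1)}}\le\cdots\le\prob{u_{\min}>x\mid\rho^{(n-H)}},
\]
which is the claimed inequality. If $\rho_n$ is itself the singleton partition there are no merges and the bound holds with equality.

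I do not expect a genuine obstacle, since all the analytic work is contained in Proposition~\ref{prop:merge_two_clusters}. The one point to check is that the conditional law of $u_{\min}$ given a partition depends only on the multiset of cluster sizes, through the $\mathrm{Dirichlet}(n_1,\dots,n_H,\alpha)$ law and the independent uniforms. Because of this, relabelling clusters after each merge is harmless, and the intermediate partitions need only be specified up to relabelling.
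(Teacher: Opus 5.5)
Your proposal is correct and follows essentially the same route as the paper: build a finite chain of pairwise merges from $\rho_n^{\mathrm{sing}}$ to $\rho_n$, then chain the inequalities from Proposition~\ref{prop:merge_two_clusters}, read via complementation as monotonicity of the survival probability. Your explicit construction of the $n-H$ merges and your remark that the law of $u_{\min}$ depends only on the cluster sizes add a little more detail than the paper's proof, but the argument is the same.
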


Once it is formally established that the singleton partition represents the worst-case scenario in terms of controlling the mass around zero of the minimum slice variable, we can state our main result on the tail of $K_n$, which holds uniformly over all partitions visited by the posterior algorithm. 

\begin{theorem}[High‐probability bound on dynamic truncation level]\label{thm:bigtheorem}
Let $K_n$ and $H_n$ be respectively the truncation level in \eqref{eq:K_alg} and the number of occupied clusters at a given iteration of Algorithm \ref{alg:posterior} when run for $n$ input data.
Then, for every $\delta \in (0,1)$ there exist constants $B^{(1)}_\alpha,\,B^{(2)}_\alpha$ (independent of $\delta$ and $n$) such that for any $n\geq2$
\[
  \altprob\bigl(K_n - H_n\le C_{\delta}\,\log n \mid \rho_n\bigr) \;\ge\; 1 - \delta\,, \qquad \forall \rho_n
\]
with $C_\delta=B^{(1)}_\alpha+B^{(2)}_\alpha\log(1/\delta)$, and $B^{(1)}_\alpha,\,B^{(2)}_\alpha\asymp \alpha$. 

\noindent In particular, $K_n - H_n=O_\mathbb{P}(\log n)$ and, in the worst-case scenario of $H_n = O(n)$, $K_n = O_\mathbb{P}(n)$.
\end{theorem}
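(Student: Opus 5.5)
Conditionally on $\rho_n$ and on the posterior draw $(\pi_1,\dots,\pi_H,\pi^\star)$, the weights created in the \textbf{while}-loop are $\pi^\star V_1$, $\pi^\star V_2(1-V_1)$, and so on, with $V_j\simiid\mathrm{Beta}(1,\alpha)$ independent of everything else. After $m$ new components the remaining mass is $R_m=\pi^\star\prod_{j=1}^m(1-V_j)$. By \eqref{eq:K_alg}, $K_n-H_n$ is the smallest $m\ge0$ with $R_m<u_{\min}$, so
\[
\{K_n-H_n>m\}\subseteq\{R_m\ge u_{\min}\}.
\]
The plan is to split on a threshold $x$:
\[
\prob{K_n-H_n>m\mid\rho_n}\le \prob{u_{\min}\le x\mid\rho_n}+\prob{\textstyle\prod_{j=1}^m(1-V_j)\ge x},
\]
where we have used $\pi^\star\le1$. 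We then take $x=n^{-\beta}$ and $m=\lceil C\log n\rceil$.

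\textbf{Tail of the stick product.} Since $1-V_j\sim\mathrm{Beta}(\alpha,1)$, each $E_j=-\log(1-V_j)$ is Exponential$(\alpha)$. Hence $\prod_{j=1}^m(1-V_j)\ge x$ is the event that a Gamma$(m,\alpha)$ variable is at most $\log(1/x)$, which is the same as $\mathrm{Poisson}(\alpha\log(1/x))\ge m$, recovering the \citet{muliere1998approximating} picture. A Chernoff bound for the Poisson (or Gamma) lower tail with $\lambda=\alpha\beta\log n$ gives
\[
\prob{\mathrm{Poisson}(\lambda)\ge m}\le e^{-\lambda}(e\lambda/m)^m .
\]
If $m\ge e^2\lambda+\log(2/\delta)$, this is at most $\delta/2$. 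That requires $C\gtrsim e^2\alpha\beta$ plus a $\log(1/\delta)/\log n$ term; since $n\ge2$, the latter is absorbed into $\log(1/\delta)/\log 2\cdot\log n$.

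\textbf{Tail of $u_{\min}$, uniformly over partitions.} By Corollary~\ref{cor:umin_singleton_worst}, $\prob{u_{\min}\le x\mid\rho_n}\le\prob{u_{\min}\le x\mid\rho_n^{\mathrm{sing}}}$, so it suffices to bound the singleton case. There $(\pi_1,\dots,\pi_n,\pi^\star)\sim\mathrm{Dirichlet}(1,\dots,1,\alpha)$ and $u_i=\pi_iW_i$ with $W_i\simiid\mathrm{Unif}(0,1)$. A union bound gives
\[
\prob{u_{\min}\le x}\le n\,\prob{\pi_1W_1\le x}.
\]
Here $\pi_1\sim\mathrm{Beta}(1,n-1+\alpha)$, and
\[
\prob{\pi_1W_1\le x}=\mean{\min(1,x/\pi_1)}\le x\,\mean{\pi_1^{-1}}+\prob{\pi_1\le x}.
\]
Because $\mathrm{Beta}(1,b)$ has density at most $b$ near $0$, $\prob{\pi_1\le x}\le(n+\alpha)x$. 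The quantity $\mean{\pi_1^{-1}}$ is infinite, since the shape parameter equals $1$, so truncate instead:
\[
\mean{\min(1,x/\pi_1)}=\int_0^1 \prob{\pi_1<x/t}\,dt\le (n+\alpha)x\bigl(1+\log(1/x)\bigr).
\]
This yields
\[
\prob{u_{\min}\le x\mid\rho_n}\le n(n+\alpha)x(1+\log(1/x)).
\]
With $x=n^{-\beta}$ and $\beta=3+\log(2/\delta)/\log 2$ (or similar), this is at most $\delta/2$ after adjusting constants. Note that $\beta$ depends on $\delta$ only through an additive $\log(1/\delta)$, so it is linear in $\log(1/\delta)$.

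\textbf{Combining the two bounds.} Plug $\beta=\beta_0+\beta_1\log(1/\delta)$ into $C\approx e^2\alpha\beta+O(\log(1/\delta))$. This gives $C_\delta=B^{(1)}_\alpha+B^{(2)}_\alpha\log(1/\delta)$. The first coefficient is $B^{(1)}_\alpha\propto\alpha$. The additive $\log(2/\delta)$ term from the Chernoff step is not automatically proportional to $\alpha$; to get $B^{(2)}_\alpha\asymp\alpha$ I would redo the Chernoff bound with $m=\kappa\lambda$, where $\kappa$ is a constant, so that the exponent becomes $\lambda(\kappa\log(\kappa/e)+1)$, which scales with $\alpha\beta\log n$. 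The uniformity over $\rho_n$ comes entirely from the Corollary, and the $O_{\mathbb P}$ statements then follow directly.

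\textbf{Main obstacle.} The delicate step is bounding $u_{\min}$ in the singleton case. The bound must be polynomial in $n$ with a $\log(1/x)$ factor, while avoiding $\mean{1/\pi_1}=\infty$, and the resulting dependence on $\delta$ must stay logarithmic so that $C_\delta$ has the claimed form. Careful bookkeeping of how the constants depend on $\alpha$, including the regime of small $\alpha$, is the secondary difficulty.
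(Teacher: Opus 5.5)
Your proposal is correct and is essentially the paper's proof. It uses the same split $\mathbb P(K_n-H_n>m\mid\rho_n)\le\mathbb P(u_{\min}\le x\mid\rho_n)+\mathbb P(\prod_{j\le m}(1-V_j)\ge x)$ via $\pi^\star\le1$, the same Poisson$(\alpha\log(1/x))$ representation of the stick product, and the same singleton reduction plus union bound with $\pi_i\sim\mathrm{Beta}(1,\alpha+n-1)$ giving $n(\alpha+n-1)x(1+\log(1/x))$. The differences are cosmetic: a multiplicative Chernoff bound instead of Bernstein, and $x=n^{-\beta}$ instead of $x(n,\delta)=\delta/[4n(\alpha+n-1)\log(2en(\alpha+n-1)/\delta)]$.

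Some bookkeeping needs fixing. You must take $m=\lfloor C\log n\rfloor$ rather than the ceiling. Your explicit $\beta=3+\log(2/\delta)/\log 2$ does not bring $n(n+\alpha)x(1+\log(1/x))$ below $\delta/2$ at $n=2$. You need $\beta_1>1/\log 2$ and $\beta_0\gtrsim\log_2(1+\alpha)$, which puts an $\alpha\log(1+\alpha)$ term into $B^{(1)}_\alpha$, just as in the paper.

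Drop the $m=\kappa\lambda$ idea, because $B^{(2)}_\alpha=O(\alpha)$ as $\alpha\to0$ is impossible. At $n=2$ it would force $\mathbb P(K_2-H_2\ge1\mid\rho_2^{\mathrm{sing}})=\mathbb P(\pi^\star\ge u_{\min}\mid\rho_2^{\mathrm{sing}})$ to be exponentially small in $1/\alpha$. With $\pi^\star\sim\mathrm{Beta}(\alpha,2)$, however, this probability is of order $\alpha$. The paper's own $B^{(2)}_\alpha=(6\alpha+1)/\log 2$ keeps an $\alpha$-free term, so ``$\asymp\alpha$'' can only be read loosely.
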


The strength of providing: (i) an explicit constant $C_\delta$ with logarithmic growth in $\delta$ and (ii) uniformity in $n$ for the high-probability bound is exemplified in the following Corollary, which establishes exponential tails and, consequently, uniformly bounded moments for the slice overhead.

\begin{corollary}[Exponential tails of the slice overhead]\label{cor:exp_tail}
Let $(K_n,H_n)$ be defined as in Theorem~\ref{thm:bigtheorem}.  
There exist constants $B^{(1)}_\alpha,B^{(2)}_\alpha>0$ such that for all $n\ge2$ and all $t\ge0$,
\[
\prob{
\frac{K_n-H_n}{\log n}
>
B^{(1)}_\alpha + B^{(2)}_\alpha\, t\,\Big|\, \rho_n}
\le e^{-t}\,, \qquad \forall \rho_n.
\]
In particular, for any $p\ge1$, there exists a constant $C_{p,\alpha}>0$ such that
\[
\sup_{n\ge2}
\mathbb{E}\!\left[\left(\frac{K_n-H_n}{\log n}\right)^p\,\Big|\,\rho_n\right]
\le C_{p,\alpha}\,, \qquad \forall \rho_n.
\]
\end{corollary}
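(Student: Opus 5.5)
The plan is to derive Corollary~\ref{cor:exp_tail} directly from Theorem~\ref{thm:bigtheorem} by reparametrising the confidence level, and then to obtain the moment bound by integrating the resulting tail. The key feature we exploit is that the constants $B^{(1)}_\alpha$ and $B^{(2)}_\alpha$ in Theorem~\ref{thm:bigtheorem} are independent of both $\delta$ and $n$, and that the bound holds for every $\rho_n$.

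\textbf{Step 1: exponential tail.} Fix $t>0$ and apply Theorem~\ref{thm:bigtheorem} with $\delta=e^{-t}\in(0,1)$. Then $C_\delta=B^{(1)}_\alpha+B^{(2)}_\alpha t$, so for all $n\ge2$ and all $\rho_n$,
\[
\prob{K_n-H_n> (B^{(1)}_\alpha+B^{(2)}_\alpha t)\log n \mid \rho_n}\le e^{-t}.
\]
Since $\log n>0$ for $n\ge2$, dividing by $\log n$ gives the first display. The case $t=0$ is trivial, because the right-hand side equals $1$.

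\textbf{Step 2: moment bound.} Set $W_n=(K_n-H_n)/\log n$. By definition \eqref{eq:K_alg} we have $K_n\ge H_n$, so $W_n\ge0$. For $w>B^{(1)}_\alpha$, write $w=B^{(1)}_\alpha+B^{(2)}_\alpha t$; Step~1 then gives
\[
\prob{W_n>w\mid\rho_n}\le \exp\!\bigl(-(w-B^{(1)}_\alpha)/B^{(2)}_\alpha\bigr).
\]
The layer-cake formula $\mathbb{E}[W_n^p\mid\rho_n]=\int_0^\infty p w^{p-1}\prob{W_n>w\mid\rho_n}\,\mathrm{d}w$ now splits into two pieces:
\begin{itemize}
\item on $[0,B^{(1)}_\alpha]$, bound the probability by $1$, contributing $(B^{(1)}_\alpha)^p$;
\item on $[B^{(1)}_\alpha,\infty)$, substitute $w=B^{(1)}_\alpha+B^{(2)}_\alpha t$, giving at most $\int_0^\infty p\,(B^{(1)}_\alpha+B^{(2)}_\alpha t)^{p-1}B^{(2)}_\alpha e^{-t}\,\mathrm{d}t$.
\end{itemize}
The second integral is finite and equals a polynomial combination of $B^{(1)}_\alpha$, $B^{(2)}_\alpha$ and $\Gamma(k+1)$ for $k\le p-1$. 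Alternatively, $(a+b)^{p-1}\le 2^{p-1}(a^{p-1}+b^{p-1})$ yields the cruder bound $(B^{(1)}_\alpha)^p+p2^{p-1}B^{(2)}_\alpha\bigl[(B^{(1)}_\alpha)^{p-1}+(B^{(2)}_\alpha)^{p-1}\Gamma(p)\bigr]$. Call this quantity $C_{p,\alpha}$. It depends on neither $n$ nor $\rho_n$, so it bounds the supremum over $n\ge2$ uniformly in $\rho_n$.

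\textbf{Main obstacle.} There is essentially none, since all the difficulty sits in Theorem~\ref{thm:bigtheorem}. The only point requiring care is the uniformity: the same constants must serve every $\delta\in(0,1)$, every $n\ge2$ and every $\rho_n$. This is exactly what the theorem guarantees, and it is what allows the substitution $\delta=e^{-t}$ to hold simultaneously for all $t$.
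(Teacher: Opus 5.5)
Your proposal is correct and follows the paper's own argument: substitute $\delta=e^{-t}$ into Theorem~\ref{thm:bigtheorem}, using that $B^{(1)}_\alpha,B^{(2)}_\alpha$ do not depend on $\delta$, $n$ or $\rho_n$, and then integrate the resulting exponential tail with the layer-cake formula. The only cosmetic difference is that you split the layer-cake integral at $B^{(1)}_\alpha$, whereas the paper applies it to $(X_n-B^{(1)}_\alpha)_+$ and then uses $(a+b)^p\le 2^{p-1}(a^p+b^p)$; note that your exact binomial-expansion remark needs integer $p$, but your cruder explicit bound already covers all real $p\ge1$.
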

Moreover, the uniform-in-$\rho_n$ nature of the high-probability bound allows for further derivation of an almost-sure bound under infinite-data coupling, implying that super-logarithmic slice overheads cannot occur infinitely often as the sample size grows.
\begin{corollary}[Almost-sure control of super-logarithmic slice overheads]\label{cor:bc}
    Let $(K_n,H_n)$ be defined as in Theorem~\ref{thm:bigtheorem}. There exists a constant $D_\alpha>0$ such that
    \begin{equation*}
        \prob{\limsup_{n\to\infty}\left\{K_n-H_n>D_\alpha\log\frac{1}{\delta_n}\log n\right\} }=0
    \end{equation*}
    for any summable sequence $(\delta_n)_{n\geq1}\subset(0,1/2)$.
\end{corollary}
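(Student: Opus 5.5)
The plan is a first Borel--Cantelli argument applied to the events $E_n:=\{K_n-H_n>D_\alpha\log(1/\delta_n)\log n\}$. Under the infinite-data coupling there is one probability space carrying, for each $n$, the state $(\rho_n,K_n,H_n)$ of the sampler run on the first $n$ observations. The key is that Theorem~\ref{thm:bigtheorem} holds uniformly in $\rho_n$. Hence a conditional bound can be integrated against whatever (possibly data-dependent, possibly non-stationary) law the visited partition has. This gives an unconditional bound $\mathbb P(E_n)\le \delta_n$, or a constant multiple of it, and summability of $(\delta_n)$ then yields $\mathbb P(\limsup_n E_n)=0$. No independence across $n$ is needed.

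\textbf{Step 1: choose $D_\alpha$.} Apply Theorem~\ref{thm:bigtheorem} with $\delta=\delta_n$. This gives
\[
\mathbb P\bigl(K_n-H_n> (B^{(1)}_\alpha+B^{(2)}_\alpha\log(1/\delta_n))\log n\mid\rho_n\bigr)\le\delta_n \qquad \forall\rho_n .
\]
Because $\delta_n<1/2$, we have $\log(1/\delta_n)>\log 2$, so $B^{(1)}_\alpha\le (B^{(1)}_\alpha/\log 2)\log(1/\delta_n)$. It therefore suffices to set $D_\alpha:=B^{(1)}_\alpha/\log 2+B^{(2)}_\alpha$, which does not depend on $n$ or on the sequence. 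This is exactly where the restriction $\delta_n\in(0,1/2)$ enters: it lets the additive constant be absorbed into the multiplicative factor. With this choice, $(B^{(1)}_\alpha+B^{(2)}_\alpha\log(1/\delta_n))\log n\le D_\alpha\log(1/\delta_n)\log n$. By monotonicity of events, $\mathbb P(E_n\mid\rho_n)\le\delta_n$ for every $\rho_n$ and every $n\ge2$.

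\textbf{Step 2: remove the conditioning.} By the tower property,
\[
\mathbb P(E_n)=\mathbb E\bigl[\mathbb P(E_n\mid\rho_n)\bigr]\le\delta_n .
\]
This holds because the bound is uniform over all partitions of $[n]$. It is irrelevant how the chain reached $\rho_n$ or how the data were generated. The case $n=1$ is a single event and does not affect the $\limsup$.

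\textbf{Step 3: conclude.} We get $\sum_n\mathbb P(E_n)\le 1+\sum_n\delta_n<\infty$, and Borel--Cantelli gives the claim.

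The main obstacle is conceptual rather than computational: making sure the conditional statement of Theorem~\ref{thm:bigtheorem} transfers to the coupled process. The concern is that the theorem conditions on $\rho_n$ only, while under the coupling the weights and slice variables at stage $n$ might also depend on other quantities, such as past iterations, data, $\phi$'s, or $\eta$. I would address this by noting that in Algorithm~\ref{alg:posterior} the weights $(\pi_1,\dots,\pi_H,\pi^\star)$, the slices $u_i$ and the extra sticks $v_K$ are drawn fresh given the current partition. Hence $K_n-H_n$ is conditionally independent of everything else given $\rho_n$, and the bound conditional on the full past equals the bound conditional on $\rho_n$. If the coupling were defined differently, the same uniformity would still allow conditioning on any sigma-field with respect to which $\rho_n$ is measurable, provided the fresh-draw structure holds.
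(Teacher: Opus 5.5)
Your proposal is correct and matches the paper's proof: you integrate the uniform-in-$\rho_n$ bound of Theorem~\ref{thm:bigtheorem} at $\delta=\delta_n$, absorb $B^{(1)}_\alpha$ via $\log(1/\delta_n)>\log 2$ to obtain the same constant $D_\alpha=B^{(1)}_\alpha/\log 2+B^{(2)}_\alpha$, and conclude with the first Borel--Cantelli lemma. Your remarks on the fresh-draw structure under the coupling and on the $n=1$ term spell out details the paper leaves implicit; they do not change the argument.
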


We close this section by showing that the logarithmic rate in Theorem~\ref{thm:bigtheorem} is tight as a uniform upper bound on the overhead. Indeed, no $o(\log n)$ rate can hold uniformly over all partitions, as shown by the next proposition.

\begin{proposition}[Tightness of the logarithmic slice-overhead bound]\label{prop:tightness}
Let $(K_n,H_n)$ be defined as in Theorem~\ref{thm:bigtheorem}.
Then there exist constants $c_\alpha,\eta_\alpha>0$ such that for all sufficiently large $n$,
\[
\prob{K_n - H_n \ge c_\alpha \log n \,\middle|\, \rho_n^{\mathrm{sing}}}
\ge \eta_\alpha.
\]
Consequently, for any deterministic sequence $r_n = o(\log n)$ and any $M>0$,
\[
\liminf_{n\to\infty} \sup_{\rho_n}
\prob{K_n - H_n > M r_n \,\middle|\, \rho_n}
> 0.
\]
\end{proposition}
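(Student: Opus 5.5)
The plan is to use the worst case from Corollary~\ref{cor:umin_singleton_worst}, the singleton partition $\rho_n^{\mathrm{sing}}$, and to compute the law of $K_n-H_n$ there almost exactly. Under $\rho_n^{\mathrm{sing}}$ we have $H_n=n$ and $(\pi_1,\dots,\pi_n,\pi^\star)\sim\mathrm{Dirichlet}(1,\dots,1,\alpha)$.

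\textbf{Step 1: reduce the overhead to a Poisson count.} After the $n$ allocated weights, the \textbf{while}-loop breaks the residual mass $\pi^\star$ with i.i.d.\ factors $V_j\sim\mathrm{Beta}(1,\alpha)$, drawn independently of $(\pi,u)$. After $m$ extra sticks the residual mass is $\pi^\star\prod_{j\le m}(1-V_j)$. Hence $K_n-H_n$ is the first $m\ge 0$ for which this residual falls below $u_{\min}$. Set $E_j=-\log(1-V_j)$, which are i.i.d.\ $\mathrm{Exp}(\alpha)$. Then, conditionally on $L:=\log(\pi^\star/u_{\min})$,
\[
K_n-H_n=\min\Bigl\{m\ge0:\textstyle\sum_{j\le m}E_j> L\Bigr\}\;\ge\; N(L),
\]
where $N$ is a rate-$\alpha$ Poisson process independent of $L$. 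This is the posterior analogue of the \citet{muliere1998approximating} fact. It therefore suffices to show that $L\ge \tfrac12\log n$ with probability bounded below.

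\textbf{Step 2: lower-bound $L$ by a gamma representation.} Write $\pi_i=G_i/S$ and $\pi^\star=G/S$, with $G_i\simiid\mathrm{Exp}(1)$, $G\sim\mathrm{Gamma}(\alpha)$ independent, and $S=G+\sum_i G_i$. Write $u_i=\pi_iW_i$ with $W_i\simiid\mathrm{Uniform}(0,1)$. The normalisation $S$ cancels:
\[
\pi^\star/u_{\min}=G/M_n, \qquad M_n:=\min_i G_iW_i .
\]
The variables $G_iW_i$ are i.i.d.\ with $\prob{G_1W_1\le t}\sim t\log(1/t)$ as $t\downarrow0$. Therefore
\[
\prob{M_n>1/n}=\bigl(1-\prob{G_1W_1\le 1/n}\bigr)^n\approx \exp(-\log n)\to0 .
\]
Choose $g_0>0$ with $\prob{G\ge g_0}\ge 1/2$. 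Since $G$ is independent of $M_n$, we get for large $n$
\[
\prob{L\ge \log n+\log g_0}\ge \tfrac12-o(1),
\]
and hence $\prob{L\ge\tfrac12\log n}\ge 1/3$.

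\textbf{Step 3: concentrate the Poisson count and conclude.} On the event $\{L\ge\tfrac12\log n\}$, the count $N(L)$ stochastically dominates $\mathrm{Poisson}(\lambda_n)$ with $\lambda_n=\tfrac{\alpha}{2}\log n$. Chebyshev gives $\prob{N\ge\lambda_n/2}\ge 1-4/\lambda_n\to1$. Taking $c_\alpha=\alpha/4$ and $\eta_\alpha=1/4$ then yields the first claim for all large $n$.

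For the consequence, take $r_n=o(\log n)$ and $M>0$. Eventually $Mr_n< c_\alpha\log n$, so
\[
\sup_{\rho_n}\prob{K_n-H_n>Mr_n\mid\rho_n}\ge \prob{K_n-H_n\ge c_\alpha\log n\mid\rho_n^{\mathrm{sing}}}\ge\eta_\alpha ,
\]
and the $\liminf$ is at least $\eta_\alpha>0$.

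\textbf{Main obstacle.} The delicate point is Step 2: getting the small-ball behaviour of $u_{\min}$ right without losing the independence needed in Step 1. The gamma representation handles both. It makes $\pi^\star/u_{\min}$ free of $S$, and it keeps $G$ independent of $M_n$. For $M_n$ only a crude bound is needed, namely $\prob{G_1W_1\le 1/n}\ge c\,(\log n)/n$, which follows by integrating $\min(1,t/g)$ against $e^{-g}$.
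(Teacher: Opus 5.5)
Your proof is correct. Its skeleton matches the paper's: both write the overhead as a rate-$\alpha$ Poisson count at level $\log(\pi^\star/u_{\min})$ (the paper writes $K_n=K_n^0(u_{\min}/\pi^\star)$ and compares with the fixed level $K_n^0(b/n)$). Both use the independence of the while-loop sticks from $(\pi,u)$ to factor the probability, and both finish with Chebyshev on a Poisson variable of mean of order $\alpha\log n$.

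The real difference is how you show $\pi^\star/u_{\min}\gtrsim n$ with non-vanishing probability. The paper treats numerator and denominator separately:
\begin{itemize}
\item it uses $n\pi^\star\Rightarrow\mathrm{Gamma}(\alpha,1)$ to get $\pi^\star\ge a/n$;
\item it proves Lemma~\ref{lemma:min_unif_simplez}, a simplex-volume computation, to get $\min_i\pi_{c_i}\le ab/n^2$;
\item it drops the slice randomness via $u_{\min}\le\min_i\pi_{c_i}$;
\item it conditions on $\pi^\star$, using neutrality of $p_i=\pi_{c_i}/(1-\pi^\star)$, to combine the two events.
\end{itemize}

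Your gamma representation cancels the normaliser, so $\pi^\star/u_{\min}=G/M_n$ with $G$ and $M_n$ exactly independent. A one-line small-ball estimate for $G_1W_1$ then replaces both the simplex lemma and the weak-convergence step. You do not even need the uniform slices there: $M_n\le\min_iG_i\sim\mathrm{Exp}(n)$ already gives $\mathbb{P}(M_n\le b/n)\ge1-e^{-b}$, which mirrors the paper's use of $u_{\min}\le\min_i\pi_{c_i}$.

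Your route also makes a stronger conclusion immediate. For every fixed $g_0>0$ you have $\mathbb{P}(L\ge\tfrac12\log n)\ge\mathbb{P}(G\ge g_0)\,\mathbb{P}(M_n\le 1/n)\to\mathbb{P}(G\ge g_0)$. Letting $g_0\downarrow0$ shows $\mathbb{P}(K_n-H_n\ge\tfrac{\alpha}{4}\log n\mid\rho_n^{\mathrm{sing}})\to1$, not merely that it stays above a positive constant.
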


\FloatBarrier
\section{Comparison with alternative algorithms for mixture models}\label{sec:compcost}

\begin{table*}[h!]
    \centering
    \resizebox{0.7\textwidth}{!}{
    \begin{tabular}{l|c|c|c|c}
         \toprule
         &CRP&BGS--$L$ &BGS--$n$& Slice  \\
         \midrule
         \midrule
         Scalability by posterior cluster growth\\
         \midrule
         $H_n = O(n)$& $O(n^2)$ &$\Theta(n)$ & $\Theta(n^2)$ &$O_{\mathbb{P}}(n^2)$\\ 
         $H_n = O(\log n)$&$O(n \log n)$&$\Theta(n)$& $\Theta(n^2)$ &$O_{\mathbb{P}}(n \log n)$\\  
         $H_n = O(1)$& $O(n)$&$\Theta(n)$& $\Theta(n^2)$ &$O_{\mathbb{P}}(n\log n)$\\ 
        \midrule
Exact posterior partition target & \cmark & \xmark & \xmark & \cmark \\
\midrule
No hard-threshold bias            & \cmark & \xmark & \cmark & \cmark \\
\midrule
No bookkeeping                    & \xmark & \cmark & \cmark & \cmark \\
\midrule
Joint Updates                     & \xmark & \cmark & \cmark & \cmark \\
         \bottomrule
    \end{tabular}}
    \caption{Comparison of posterior sampling algorithms for DP mixture models.
The table reports per-iteration computational complexity as a function of the sample size $n$ and the number of clusters $H_n$ of the partition visited by the chain, together with key qualitative properties of each method.
Since the number of clusters $H_n$ visited by the chain can grow at most linearly in $n$, the first row represents the worst-case cluster growth regime and thus provides worst-case computational guarantees.}
    \label{tab:placeholder}
\end{table*}
To translate the results of the previous section into explicit high-probability statements on the scalability of slice sampling for DP models, we focus, for concreteness, on DP mixtures of univariate Normals with fixed variance. Specifically, we consider models of the form
\[
Y_i \mid z_i \overset{\mathrm{ind}}{\sim} \mathcal{N}(z_i, \sigma^2), 
\quad 
z_i \mid G \overset{\mathrm{iid}}{\sim} G, 
\quad 
G \sim \mathrm{DP}(\alpha, P_0),
\]
for $i \in [n]$, where $\mathcal{N}(\mu, \sigma^2)$ denotes a Normal distribution with mean $\mu$ and variance $\sigma^2$.
This setting serves as a representative working example in which the cost of likelihood evaluation at step 23 of Algorithm~\ref{alg:posterior} is constant per component.
However, the arguments developed below extend directly to the broader class of DP--based models described in Definition~\ref{def:classofmodels}, with the understanding that model-specific likelihoods may introduce different per-evaluation computational costs as a function of $n$.
Table~\ref{tab:placeholder} summarizes the computational scalability and qualitative properties of several MCMC posterior sampling strategies for DP mixture models, highlighting fundamental trade-offs between exactness, scalability, and implementation complexity.

Marginal samplers based on the Chinese restaurant process (CRP) representation target the exact posterior distribution over partitions and avoid truncation bias. Their dominant per-iteration cost scales as $n\times H_n$, corresponding to the reassignment of each observation across all currently occupied clusters. While this scaling captures the leading-order behavior, the associated constant depends on implementation details. In particular, when cluster-specific parameters are analytically integrated out from the full-conditionals of the latent cluster-label vector $\bm{c}_n$, each reassignment requires evaluating a predictive likelihood that typically involves an integral with respect to the base measure of the DP. Moreover, bookkeeping operations needed to maintain sufficient statistics and cluster counts, while not altering the $n\times H_n$ scaling, can substantially increase the constant factor and limit practical scalability to large datasets, while assignments of observations to different clusters, \emph{i.e.}, sampling the elements in $\bm{c}_n$, can only be performed one-at-a-time.

Blocked Gibbs samplers with fixed truncation levels (BGS--$L$) enable simpler implementations and joint updates, \emph{i.e.}, the assignments to different clusters can be performed in parallel, but at the cost of introducing a hard truncation. 
This is a model misspecification, not just an approximation error, and leads to a non‐vanishing bias. 
In this regard, \cite{ishwaran2002approximate} establishes in their Theorem 1 and Corollary 1 that the marginal density error of an $L$-truncated approximation of the DP mixture of Normals is exponentially accurate for the posterior of the partition. 
Formally, denote by $\pi_{L}$ the posterior of the truncated model, and by $\pi_{\infty}$ and $m_{\infty}$ the posterior and the marginal likelihood for the DP mixture of Normals. \citet{ishwaran2002approximate} proves that
\begin{equation*}
\begin{aligned}
\int_{\mathbb{R}^{n}}\sum_{\rho_n}\left|\,\pi_{L}(\rho_n\mid\boldsymbol{Y})-\pi_{\infty}(\rho_n\mid\boldsymbol{Y})\,\right|\,m_{\infty}(\boldsymbol{Y})\text{d}\boldsymbol{Y}
  =O(4\,n\,e^{-(L-1)/\alpha}).  
\end{aligned}
\end{equation*}
While this result gives a useful average-case rate under the prior predictive, it clearly does not guarantee small error for any fixed dataset (and, in particular, for the worst-case scenario) since large discrepancies may occur on regions of the data space with small $m_{\infty}$-mass. The truncated model simply cannot represent more than $L$ clusters: $\pi_L(\rho_n)=0$ on all partitions with more than $L$ clusters, yet these partitions may have substantial $\pi_{\infty}$-mass.  
Since the truncation level is fixed by design, their per-iteration cost is deterministic and scales as $nL$. 
One way to avoid the systematic underestimation of the probability of partitions with $H_n>L$ in truncated blocked algorithms is to let the truncation level be exactly $n$ (BGS-$n$). This strategy, however, leads to a $\Theta(n^2)$ per-iteration computational cost in all scenarios, while the chain still does not target the exact posterior.

On the other hand, the scalability of slice-based samplers is effectively safeguarded by our high-probability bound: the excess number of sampled components beyond those occupied by the data grows at most logarithmically with high probability. 
Consequently, as summarized in Table~\ref{tab:placeholder}, slice samplers retain exactness, easy or no bookkeeping, and joint updates while achieving favorable scalability in high-probability, especially in regimes where the number of clusters grows slowly with the sample size.

\section{Numerical experiments}\label{sec:numericalexperiments}
First, in this section, we present a numerical study aimed at empirically comparing the computational performance of six posterior sampling algorithms for DP mixtures of univariate Normals. The methods considered are: the slice sampler described in Algorithm~\ref{alg:posterior}; a variant of the slice sampler in which the atoms $\phi_k$ are analytically marginalized out; two blocked Gibbs samplers based on finite dimensional Dirichlet distributions with truncation levels $L=10$ and $L=n$, respectively; and two marginal samplers based on the CRP predictive scheme, one sampling the cluster allocations $\bm{c}_n$ conditionally on the atoms $\phi_k$ and one integrating the atoms out analytically. 

Data are generated from three equally sized clusters, with observations in each cluster drawn independently from a Normal distribution with unit variance and means equal to $-3$, $0$, and $3$. 
Experiments are repeated for sample sizes $n \in \{150, 300, 600, 1500, 3000, 7500, 12000\}$. For all methods, the base measure of the DP is taken to be the standard Normal distribution, and the concentration parameter $\alpha$ is assigned a $\mathrm{Gamma}(3,\,3\log n)$ prior, and, thus $\mathbb{E}[\alpha] = 1 / \log(n)$ a priori.
For each algorithm and sample size, we run $10{,}000$ MCMC iterations. 
All chains are initialized using a $k$-means clustering solution with five clusters, obtained via the \texttt{R} function \texttt{kmeans}. 

\begin{figure}[htb!]
\centering
\begin{tikzpicture}[scale=0.7]
  \begin{axis}[
    width=12cm, height=8cm,
    xmin=150, xmax=12000,
    ymin=0, ymax=110,
    xlabel={Total sample size $n$},
    ylabel={Time per 1000 iteration (seconds)},
    xmode=log,
    log ticks with fixed point,
    grid=both,
    legend style={
      at={(0.25,0.5)},
      anchor=south,
      draw=none,
      font=\small,
    },
    legend cell align=left,
  ]

    \addplot[blue, mark=o] coordinates {
      (150,   1.26)
      (300,   2.24)
      (600,   4.19)
      (1500,  9.96)
      (3000,  19.69)
      (7500,  49.80)
      (12000, 85.73)
    };
    \addlegendentry{Slice sampler}

    \addplot[red, mark=square*] coordinates {
      (150,   1.46)
      (300,   2.65)
      (600,   5.03)
      (1500,  12.09)
      (3000,  23.97)
      (7500,  62.73)
    };
    \addlegendentry{Slice sampler no atoms}

    \addplot[green!60!black, mark=triangle*] coordinates {
      (150,   1.21)
      (300,   2.17)
      (600,   3.98)
      (1500,  9.56)
      (3000,  18.74)
      (7500,  50.61)
      (12000, 86.60)
    };
    \addlegendentry{BGS $L=10$}

    \addplot[orange, mark=diamond*] coordinates {
      (150,   3.76)
      (300,   11.96)
      (600,   33.59)
    };
    \addlegendentry{BGS $L=n$}

    \addplot[purple, mark=asterisk] coordinates {
      (150,   1.79)
      (300,   3.41)
      (600,   6.38)
      (1500, 16.34)
      (3000, 30.65)
    };
    \addlegendentry{CRP with atoms}

    \addplot[brown, mark=star] coordinates {
      (150,   2.49)
      (300,   4.91)
      (600,   9.11)
      (1500,  23.20)
      (3000,  45.94)
    };
    \addlegendentry{CRP}

  \end{axis}
\end{tikzpicture}
\caption{Wall-clock time per 1000 iterations in seconds computed as average over 10{,}000 iterations (including burn-in) as a function of input size (x-axis on log scale). Codes are in \textsc{R} and run on a laptop with 13th Gen Intel(R) Core(TM) i7-1370P.} 
\label{fig:time_main}
\end{figure}
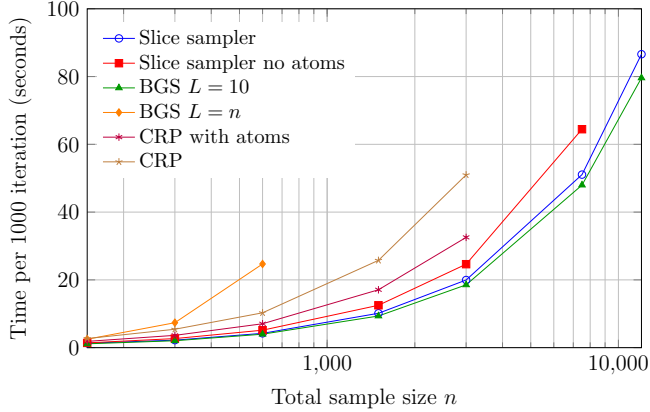

\begin{figure}[htb!]
\centering
\begin{tikzpicture}[scale=0.7]
  \begin{axis}[
    width=12cm, height=8cm,
    xmin=150, xmax=12000,
    ymin=0, ymax=450,
    xlabel={Total sample size $n$},
    ylabel={ESS per second},
    xmode=log,
    log ticks with fixed point,
    grid=both,
    legend style={
      at={(0.75,0.5)},
      anchor=south,
      draw=none,
      font=\small,
    },
    legend cell align=left,
  ]

    \addplot[blue, mark=o] coordinates {
      (150,   432)
      (300,   369)
      (600,   210)
      (1500,  20.7)
      (3000,  35.9)
      (7500,  13.1)
      (12000, 1.62)
    };
    \addlegendentry{Slice sampler}

    \addplot[red, mark=square*] coordinates {
      (150,   340)
      (300,   63)
      (600,   146)
      (1500,  8.38)
      (3000,  21)
      (7500,  6.20)
    };
    \addlegendentry{Slice sampler no atoms}

    \addplot[green!60!black, mark=triangle*] coordinates {
      (150,   427)
      (300,   148)
      (600,   190)
      (1500,  14.5)
      (3000,  14.2)
      (7500,  6.2)
      (12000, 3.92)
    };
    \addlegendentry{BGS $L=10$}

    \addplot[orange, mark=diamond*] coordinates {
      (150,   145)
      (300,   32)
      (600,   25)
    };
    \addlegendentry{BGS $L=n$}

    \addplot[purple, mark=asterisk] coordinates {
      (150,   337)
      (300,   84.7)
      (600,   129)
      (1500,  12.1)
      (3000,  8.42)
    };
    \addlegendentry{CRP with atoms}

    \addplot[brown, mark=star] coordinates {
      (150,   269)
      (300,   91)
      (600,   92)
      (1500,  17.5)
      (3000,  7.31)
    };
    \addlegendentry{CRP}

  \end{axis}
\end{tikzpicture}
\caption{ESS (w.r.t. the log likelihood) per second computed as average over the last 5{,}000 iterations (excluding a burn-in of 5{,}000) as a function of input size (x-axis on log scale). Codes are in \textsc{R} and run on a laptop with 13th Gen Intel(R) Core(TM) i7-1370P.} 
\label{fig:ESS_main}
\end{figure}
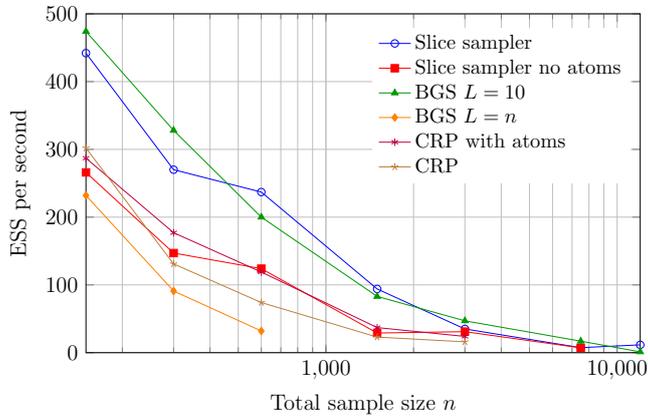

All algorithms are implemented in \texttt{R} and are made available as supplementary material. 
Figures~\ref{fig:time_main} and~\ref{fig:ESS_main} report, respectively, the wall-clock time required for $1{,}000$ iterations (in seconds) and the effective sample size (ESS) per second as functions of the input size. 
For some methods, results at larger sample sizes are unavailable due to computational constraints; we declare an algorithm to be infeasible at a given sample size if it requires more than one second to complete the first ten iterations on a laptop with 13th Gen Intel(R) Core(TM) i7-1370P.
Marginal CRP-based samplers exhibit the steepest growth in computational time, becoming infeasible already at moderate sample sizes, while blocked Gibbs samplers display predictable scaling behavior. 
In contrast, slice-based samplers achieve a more favorable trade-off: although their per-iteration cost increases with $n$, the growth is substantially slower than that of marginal samplers and comparable to that of blocked Gibbs samplers with $L=10$ components, which, however, do not target the correct posterior. 
This behavior is also reflected in high ESS per second. 
Marginalizing the atoms within the slice sampler appears to increase the computational time per iteration but does not improve mixing sufficiently, particularly for small datasets, leading to lower ESS per second for small sample sizes compared to the original version. 
Overall, these empirical results align with the high-probability complexity bounds established in Section~\ref{sec:hpb}, supporting the scalability of slice-based approaches for posterior inference in DP--based models.

To conclude, we empirically assess the limitations of blocked Gibbs samplers compared to slice samplers with a second numerical experiment. 
Here, the synthetic data-generating mechanism follows a \emph{perturbed Zipf} model, which entails a steadily increasing cluster-growth regime, up to large sample sizes, while still having finite support over the cluster labels. 
Specifically, for each sample size $n \in \{150, 300, 600, 1500, 3000\}$, latent cluster labels are generated independently from a discrete distribution on $\{1,\dots,50\}$, with the probability of cluster $c$ proportional to $c^{-0.5}$. 
This induces a heavy-tailed cluster-size distribution with a growing number of small but non-negligible clusters, providing a challenging setting, in particular for truncated algorithms.
Conditional on the latent labels, observations are drawn independently from Normal distributions with unit variance and cluster-specific means proportional to the label index. 
With the exception of the data-generating process, all other settings coincide with those of the previous numerical study. 

The results are reported in Figure~\ref{fig:zipf}. 
As expected, the inferential performance of the blocked Gibbs sampler deteriorates rapidly as the sample size increases, with the chain exhibiting very limited mixing and remaining almost constant on a single partition. 
In contrast, the slice sampler consistently achieves higher and increasing adjusted Rand index values across all sample sizes, indicating good recovery of the true clustering even in this challenging scenario.

Additional results on posterior inference performance and computational aspects for both numerical studies are reported in Appendix~\ref{app:num}, together with results from analogous settings in which the concentration parameter is kept fixed.

\begin{figure}
    \centering
    \includegraphics[width=0.5\linewidth]{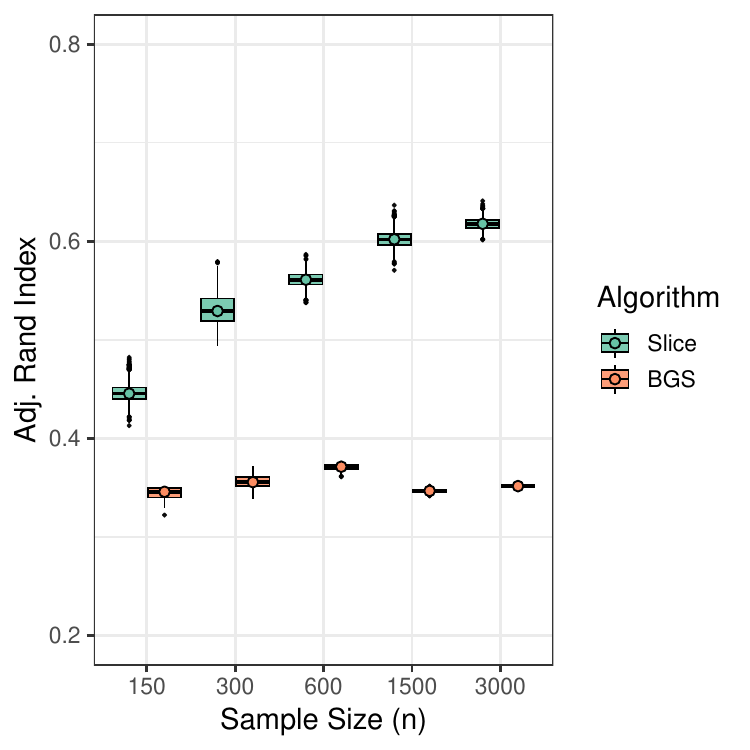}
    \caption{Boxplots of the adjusted Rand index between the true clustering and the partitions visited by the MCMC algorithms, for each algorithm and sample size. Results are computed after a burn-in period of 5,000 iterations and thinning every 2 iterations. }
    \label{fig:zipf}
\end{figure}

\section{Conclusion}\label{sec:conclusion}

Slice samplers provide an appealing strategy for posterior simulation in DP–based models, combining joint updates, minimal bookkeeping, and the avoidance of fixed truncation. 
At the same time, their practical scalability has long remained theoretically unclear due to the random and unbounded nature of their per-iteration computational cost. 
This work addresses this gap by providing the first high-probability complexity guarantees for DP slice samplers that hold \emph{a posteriori}, uniformly over all partitions visited by the Markov chain and for any observed dataset.
Our main results establish that the dynamically instantiated truncation level exceeds the number of occupied clusters by at most a logarithmic factor in the sample size, with arbitrarily high probability. 
As a consequence, even in worst-case cluster-growth regimes, super-linear increases in per-iteration cost occur with vanishing probability. 
These guarantees formalize and explain the favorable empirical scalability of slice-based algorithms observed in practice, while preserving exact targeting of the posterior distribution over partitions.

An additional feature of our results is that all constants appearing in the high-probability bounds are explicit in the DP concentration parameter~$\alpha$. 
This opens the door to extensions to adaptive and data-driven DP formulations, such as those considered by \citet{tsili} and \citet{lizhen}, where $\alpha$ is learned from the data or evolves over time. 
More broadly, the techniques developed here suggest a promising avenue for extending high-probability complexity guarantees to slice-based algorithms for Pitman–Yor processes \citep[see][]{canale2022importance}, general species sampling processes \citep[for instance building upon][]{mena2025exact}, and structured DP-based models, including hierarchical Dirichlet processes, sticky constructions, and temporally evolving partition models. 
In many of these settings, the conditional formulation underlying slice samplers leads to substantially simpler full conditional distributions than those arising in marginal approaches, hence the computational gains relative to alternative inference schemes are expected to be even more pronounced.

\section*{Software and data}
The code used in this work is publicly available at \url{https://github.com/beatricefranzolini/DPalg}.

\section*{Acknowledgments}
The authors are grateful to Antonio Canale and Ramsés H. Mena for valuable discussions on this work.

\bibliographystyle{apalike}
\bibliography{biblio}

\newpage
\appendix
\onecolumn
\section{Proofs}\label{app:proofs}
\subsection{Proof of Proposition~\ref{prop:merge_two_clusters}}

Fix $h\in[H]$ and let $I_h:=\{i: c_i=h\}$ with $|I_h|=n_h$. Conditionally on $\pi_h$,
the slice variables $(u_i)_{i\in I_h}\simiid\mathrm{Uniform}(0,\pi_h)$. Hence
\[
\prob{\min_{i\in I_h}u_i>x \,\Big|\, \pi_h}
= \prob{\bigcap_{i\in I_h}\{u_i>x\}\,\Big|\, \pi_h}
= \prod_{i\in I_h}\prob{ u_i>x\mid \pi_h}.
\]
For $u\sim\mathrm{Uniform}(0,\pi_h)$ and $x\in(0,1)$,
\[
\prob{u>x\mid \pi_h}
=
\max\!\left(1-\frac{x}{\pi_h},0\right),
\]
so that
\[
\prob{\min_{i\in I_h}u_i>x \,\Big|\, \pi_h}
=\left[\max\!\left(1-\frac{x}{\pi_h},0\right)\right]^{n_h}
=\max\!\left(\left(1-\frac{x}{\pi_h}\right)^{n_h},0\right).
\]
Now, consider the survival probability conditional on weights. 
Grouping indices by cluster, we have
\[
\prob{u_{\min}>x \mid \pi_{1:H},\rho_n}
=
\prod_{h=1}^H \prob{\min_{i:c_i = h} u_i>x \,\Big|\, \pi_h}
=
\prod_{h=1}^H \max\left(\Bigl(1-\tfrac{x}{\pi_h}\Bigr)^{\,n_h},0\right) ,
\]
Denoting $(\cdot)_+=\max(\cdot,0)$ and taking expectation w.r.t.\ the Dirichlet posterior law for the weights, \emph{i.e.}, $(\pi_1, \ldots, \pi_{H_n}, \pi^{\star})\mid\rho_n\sim \mathrm{Dirichlet}(n_1,\ldots, n_{H}, \alpha)$, yields
\begin{equation}\label{eq:surv_dirichlet}
\prob{u_{\min}>x \mid \rho_n}
=
\mathbb{E}\!\left[\,
\prod_{h=1}^H \Bigl(1-\tfrac{x}{\pi_h}\Bigr)_+^{\,n_h}
\ \Bigm|\ \rho_n
\right].
\end{equation}

Now merge clusters $r$ and $s$. 
Let $\tilde\pi:=\pi_r+\pi_s$ and $S:=\pi_r/\tilde\pi$.
By the aggregation and neutrality properties of the Dirichlet distribution,
\begin{equation}\label{eq:aggr_weigths}
(\tilde\pi,\pi_{-rs},\pi^\star)\mid \rho_n
\sim \mathrm{Dirichlet}(n_r+n_s,\ (n_h)_{h\neq r,s},\ \alpha),
\end{equation}
$S\mid \tilde\pi, \rho_n\sim \mathrm{Beta}(n_r,n_s)$, and $S$ is independent of $(\tilde\pi,\pi_{-rs},\pi^\star)$. 
Note that the distribution in \eqref{eq:aggr_weigths} coincides with the Dirichlet posterior law for the weights when conditioning on the partition $\rho_n^{(r\oplus s)}$. 
Now, conditioning on 
$(\tilde\pi,\pi_{-rs},\pi^\star)$ and integrating $S$,

\begin{align}
\altprob\!(u_{\min}>x \mid \rho_n)
&=
\mathbb{E}\!\left[
\prod_{h\neq r,s}\Bigl(1-\tfrac{x}{\pi_h}\Bigr)_+^{\,n_h}\,
\mathbb{E}\!\left[
\Bigl(1-\tfrac{x}{S\tilde\pi}\Bigr)_+^{\,n_r}
\Bigl(1-\tfrac{x}{(1-S)\tilde\pi}\Bigr)_+^{\,n_s}
\,\Big|\,\tilde\pi,\rho_n\right]
\,\Big|\,\rho_n\right],
\label{eq:split_expectation}
\end{align}

where the outer expectation is w.r.t.\ $(\tilde\pi,\pi_{-rs},\pi^\star)$.

Under the merged partition $\rho_n^{(r\oplus s)}$, the two clusters are replaced by a single cluster of size $n_r+n_s$ with weight $\tilde\pi$, while all other weights have the same joint law as in the aggregated representation above, and, in particular, weights follow the distribution in \eqref{eq:aggr_weigths}. 
Hence, analogously to
\eqref{eq:surv_dirichlet},
\begin{equation}\label{eq:merged_surv}
\prob{u_{\min}>x \mid \rho_n^{(r\oplus s)}}
=
\mathbb{E}\!\left[
\Bigl\{\prod_{h\neq r,s}\Bigl(1-\tfrac{x}{\pi_h}\Bigr)_+^{\,n_h}\Bigr\}\,
\Bigl(1-\tfrac{x}{\tilde\pi}\Bigr)_+^{\,n_r+n_s}
\,\Big|\,\rho_n^{(r\oplus s)}\right].
\end{equation}

It therefore suffices to show that
\begin{equation}\label{eq:key_merge_ineq}
\mathbb{E}\!\left[
\Bigl(1-\tfrac{x}{S\tilde\pi}\Bigr)_+^{\,n_r}
\Bigl(1-\tfrac{x}{(1-S)\tilde\pi}\Bigr)_+^{\,n_s}
\,\Big|\, \tilde\pi, \rho_n\right]
\ \le\
\Bigl(1-\tfrac{x}{\tilde\pi}\Bigr)_+^{\,n_r+n_s} \qquad \text{a.s.}\,
\end{equation}
Indeed, plugging \eqref{eq:key_merge_ineq} into \eqref{eq:split_expectation} and comparing with \eqref{eq:merged_surv} yields $$\prob{u_{\min}>x \mid \rho_n}\le \prob{u_{\min}>x \,\middle|\, \rho_n^{(r\oplus s)}}$$, which is equivalent to the desired CDF inequality.

To verify \eqref{eq:key_merge_ineq}, note first that if $\tilde\pi\le x$ then both sides are zero. 
Assume $\tilde\pi>x$ and define $y:=x/\tilde\pi\in(0,1)$. The left-hand side integrand vanishes unless jointly $S>y$ and $1-S>y$, \emph{i.e.}, $y\leq S\leq 1-y$. 
On this event, $\frac{y}{S}\ge y$ and $\frac{y}{1-S}\ge y$, which implies
\[
1-\frac{y}{S}\le 1-y
\qquad\text{and}\qquad
1-\frac{y}{1-S}\le 1-y.
\]
Therefore, for all $S\in(y,1-y)$,
\[
F(S) = \Bigl(1-\tfrac{y}{S}\Bigr)^{\,n_r}
\Bigl(1-\tfrac{y}{(1-S)}\Bigr)^{\,n_s}
\le (1-y)^{n_r}(1-y)^{n_s}=(1-y)^{n_r+n_s},
\]
and since $F(S)=0$ for $S\notin(y,1-y)$, the bound holds for all $S\in(0,1)$.
Taking expectation with respect to $S\sim\mathrm{Beta}(n_r,n_s)$ yields
\[
\mathbb{E}\bigl[F(S)\bigr]\le (1-y)^{n_r+n_s}
=\Bigl(1-\tfrac{x}{\tilde\pi}\Bigr)^{n_r+n_s}.
\]
which is exactly \eqref{eq:key_merge_ineq}.

\subsection{Proof of Corollary~\ref{cor:umin_singleton_worst}}
Fix a partition $\rho_n$ with $H$ clusters. Starting from the singleton partition $\rho_n^{\mathrm{sing}}$, one can obtain $\rho_n$ by a finite sequence of merges of two clusters: at each step, merge two blocks until exactly the blocks of $\rho_n$ are formed. Denote the resulting sequence of partitions by
\[
\rho_n^{(0)}=\rho_n^{\mathrm{sing}},\ \rho_n^{(1)},\ \ldots,\ \rho_n^{(m)}=\rho_n.
\]
By Proposition~\ref{prop:merge_two_clusters}, each merge weakly increases the survival probability of $u_{\min}$, \emph{i.e.}, for every $j=1,\ldots,m$,
\[
\prob{u_{\min}>x \,\Big|\, \rho_n^{(j-1)}}
\;\le\;
\prob{u_{\min}>x \,\Big|\, \rho_n^{(j)}}.
\]
Chaining these inequalities yields
\[
\prob{u_{\min}>x \,\middle|\, \rho_n^{\mathrm{sing}}}
\;\le\;
\prob{u_{\min}>x \,\middle|\, \rho_n}.
\]

\subsection{Proof of Theorem~\ref{thm:bigtheorem}}
First, we state and prove the following technical lemmas.
\begin{lemma}\label{lemma:1}
    Let $Z\sim\mathrm{Poisson}(\alpha\log(1/x))$ for any $\alpha>0$ and any $x\in(0,1)$. Then
    \[\prob{Z\geq 3\alpha\log(1/x)+\log(2/\delta)}\leq\frac{\delta}{2}\]
\end{lemma}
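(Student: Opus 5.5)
The plan is a Chernoff bound for the Poisson distribution. Write $\lambda:=\alpha\log(1/x)>0$ and $t:=3\lambda+\log(2/\delta)$. For any $\theta>0$, Markov's inequality applied to $e^{\theta Z}$, together with the Poisson moment generating function, gives
\[
\prob{Z\ge t}\le \exp\bigl(\lambda(e^{\theta}-1)-\theta t\bigr).
\]
Rather than optimizing over $\theta$, I will choose $\theta=1$. This gives the bound
\[
\exp\bigl(\lambda(e-1)-3\lambda-\log(2/\delta)\bigr)=\frac{\delta}{2}\,e^{\lambda(e-4)}.
\]
Because $e-4<0$ and $\lambda>0$, the factor $e^{\lambda(e-4)}$ is at most $1$. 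Hence $\prob{Z\ge t}\le\delta/2$, which is the claim.

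The only point that needs checking is the coefficient $3$: the choice $\theta=1$ works precisely because $e-1\le 3$. Any $\theta$ with $e^{\theta}-1\le 3\theta$ would also work, so there is no real obstacle here. The argument is uniform in $\alpha$, $x\in(0,1)$ and $\delta\in(0,1)$, because the only input is $\lambda\ge0$.

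In the proof of Theorem~\ref{thm:bigtheorem}, I expect this lemma to be applied conditionally on $u_{\min}=x$. There, $K_n-H_n$ should be dominated by a Poisson variable with mean $\alpha\log(1/x)$, using the neutrality of the Dirichlet remainder $\pi^\star$ together with the result of \citet{muliere1998approximating}. The remaining $\delta/2$ of failure probability will then be spent on a lower bound for $u_{\min}$ of order $n^{-c}$, which Corollary~\ref{cor:umin_singleton_worst} supplies.
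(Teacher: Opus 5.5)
Your proof is correct, and it takes a different, more elementary route than the paper.

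\textbf{Comparison with the paper's route.} The paper starts from the Bernstein-type Poisson tail $\mathbb{P}(Z\ge\lambda+t)\le\exp\{-t^2/(2(\lambda+t/3))\}$. It justifies this by applying Bernstein's inequality to sums of $\mathrm{Bernoulli}(\lambda/n)$ variables and passing to the limit. It then solves the quadratic $t^2-\tfrac23\log(2/\delta)\,t-2\lambda\log(2/\delta)\ge0$. Finally, it bounds the positive root by $\log(2/\delta)+\tfrac32\lambda$, using $\sqrt{a+b}\le\sqrt a+\sqrt b$ and a weighted AM--GM step.

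You instead evaluate the exact Poisson moment generating function at the fixed value $\theta=1$. This gives the bound in one line, with no external inequality and no limiting argument. It is also sharper. Your computation shows that the threshold $(e-1)\lambda+\log(2/\delta)$ already suffices, with the extra factor $e^{-(4-e)\lambda}$ to spare. The paper's computation only yields $\tfrac52\lambda+\log(2/\delta)$, and both are then relaxed to $3\lambda+\log(2/\delta)$. Carried through Lemma~\ref{lemma:2}, your constant would slightly shrink $B^{(1)}_\alpha$ and $B^{(2)}_\alpha$ in Theorem~\ref{thm:bigtheorem}. The paper's route buys only generality: it applies to any variable with a sub-gamma (Bernstein-type) tail, without using the exact Poisson MGF. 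That generality is not needed here.

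\textbf{Inaccurate side remark.} It is not true that any $\theta$ with $e^\theta-1\le 3\theta$ works. For general $\theta$ your bound reads $(\delta/2)^{\theta}\exp\{\lambda(e^\theta-1-3\theta)\}$. For $\theta<1$ the prefactor $(\delta/2)^\theta$ exceeds $\delta/2$, so the bound fails to give $\delta/2$ as $\lambda\to0$. You need $\theta\ge1$ as well. This does not affect your proof, which uses $\theta=1$.

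\textbf{Your guess about the application.} Your closing paragraph is outside the lemma and differs from what the paper does. The paper applies the lemma at a deterministic level $x=x(n,\delta)$ to the comparison variable $K_n^0(x)-H_n-1$, not conditionally on $u_{\min}=x$. Also, Corollary~\ref{cor:umin_singleton_worst} only reduces the $u_{\min}$ bound to the singleton partition. The quantitative estimate, of order $\delta/(n^2\log n)$, still needs a union bound and the $\mathrm{Beta}(1,\alpha+n-1)$ marginal of each weight.
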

\begin{proof}[Proof of Lemma~\ref{lemma:1}]
We leverage the following Chernoff-style result for Poisson random variables: if $Z\sim\mathrm{Poisson}(\lambda)$ then 
\begin{equation*}
\prob{Z\geq\lambda+t}\leq\exp\left\{-\frac{t^2}{2(\lambda+t/3)}\right\}
\end{equation*}
for any $t>0$. 
This can be easily proved, \emph{e.g.} by applying the Bernstein inequality in Equation 2.10 of \citet{conc} to a sequence of $n$ independent $\mathrm{Bernoulli}(\lambda/n)$ and then passing to the limit. 
Hence, by taking $\lambda=\alpha\log(1/x)$, for any $t>0$, $x\in(0,1)$, and $\alpha>0$, we have
\begin{equation}\label{eq:chern}
\prob{Z\geq \alpha\log(1/x)+t}\leq \exp\left\{-\frac{t^2}{2(\alpha\log(1/x)+t/3)}\right\}.
\end{equation}

\noindent The r.h.s. of \eqref{eq:chern} is bounded by $\delta/2$ if and only if
\begin{equation}\label{eq:quad}t^2-\frac{2}{3}\log(2/\delta)t-2\log(2/\delta)\alpha\log(1/x) \ge 0\end{equation} for any $x\in(0,1)$. The positive root of \eqref{eq:quad} is
\begin{equation}\label{eq:tplus}
\begin{aligned}
   t_+=&\frac{1}{3}\log(2/\delta)+\frac{1}{2}\sqrt{\frac{4}{9}\log^2(2/\delta)+8\alpha\log(2/\delta)\log(1/x)}\\ \le& \frac{2}{3}\log(2/\delta)+\sqrt{2\log(2/\delta)\alpha\log(1/x)} \le \log(2/\delta)+\frac{3\alpha}{2}\log(1/x)
   \end{aligned}
\end{equation}
where we used that $\sqrt{y_1+y_2}\leq\sqrt{y_1}+\sqrt{y_2}$ and that $\sqrt{2y_1y_2}\le \eta y_1/2+y_2/\eta$ for any $\eta>0$, and in particular, we set $\eta = 2/3$. 
Since the l.h.s of \eqref{eq:quad} is increasing around $t_+$, from \eqref{eq:chern} and \eqref{eq:tplus} we have that 
\begin{equation*}
    \prob{Z\geq 3\alpha\log(1/x)+\log(2/\delta)}\le\frac{\delta}{2}
\end{equation*}
for any $x\in(0,1)$.
\end{proof}

\begin{lemma}\label{lemma:2}
For any $n\geq 2$, $\alpha>0$, and $\delta\in(0,1)$, if
    \begin{equation}
    x(n,\delta):=\frac{\delta}{4n(\alpha+n-1)\log(2n(\alpha+n-1)e/\delta)}
\end{equation}
then we have: $x(n,\delta)<1$,
\begin{equation}\label{eq:xprop1}n(\alpha+n-1)x(n,\delta)[1+\log(1/x(n,\delta))]\le\frac{\delta}{2},\end{equation}
and
\begin{equation}\label{eq:xprop2}
1+3\alpha\log(1/x(n,\delta))+\log(2/\delta)\le C_\delta\log n\end{equation}
where
\begin{equation}\label{eq:cdelta}
   C_\delta
=
B_\alpha^{(1)}
+
B_\alpha^{(2)}\,\log\frac{1}{\delta}\quad
\text{with}\quad
B_\alpha^{(2)}=\frac{6\alpha+1}{\log 2},\quad
B^{(1)}_\alpha
=
12\alpha
+
\frac{
1
+3\alpha\log\!\big(8e(1+\alpha)^2\big)
+\log 2
}{\log 2}.
\end{equation}
\end{lemma}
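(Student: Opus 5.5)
The three claims are elementary consequences of the explicit form of $x(n,\delta)$. The plan is to introduce shorthand, check each claim by direct manipulation of logarithms, and convert additive constants into multiples of $\log n$ using $n\ge2$.

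\textbf{Step 1: shorthand and $x(n,\delta)<1$.} Set $M:=n(\alpha+n-1)$ and $L:=\log(2Me/\delta)$, so that $x(n,\delta)=\delta/(4ML)$. Since $n\ge2$, we have $M\ge 2(\alpha+1)>2$. Together with $\delta<1$ this gives $L\ge \log(4e)>1$. Hence $x(n,\delta)<\delta/4<1$.

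\textbf{Step 2: inequality \eqref{eq:xprop1}.} Here
\[
M x\,[1+\log(1/x)]=\frac{\delta}{4L}\,[1+\log(1/x)],
\]
so it suffices to show $1+\log(1/x)\le 2L$. Writing $\log(1/x)=\log(4ML/\delta)=\log(2M/\delta)+\log(2L)$ gives $1+\log(1/x)=L+\log(2L)$. It therefore remains to check $\log(2L)\le L$, i.e.\ $2L\le e^L$, for $L\ge1$. This holds at $L=1$ since $2<e$. The function $e^L-2L$ has derivative $e^L-2>0$ for $L\ge1$, so it stays positive.

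\textbf{Step 3: inequality \eqref{eq:xprop2}.} First bound $\log(1/x)$ by a quantity polynomial in $n$.
\begin{itemize}
\item Use $\log L\le L$ together with $L=\log(2eM/\delta)\le 2eM/\delta$, which gives $\log L\le\log(2eM/\delta)$.
\item Combining, $\log(1/x)=\log(4M/\delta)+\log L\le \log(8eM^2/\delta^2)$.
\item Since $\alpha+n-1\le n(1+\alpha)$, we have $M\le(1+\alpha)n^2$.
\end{itemize}
Therefore
\[
\log(1/x)\le 4\log n+2\log(1/\delta)+\log\bigl(8e(1+\alpha)^2\bigr).
\]
Substituting, and using $\log(2/\delta)=\log2+\log(1/\delta)$,
\[
1+3\alpha\log(1/x)+\log(2/\delta)\le 12\alpha\log n+(6\alpha+1)\log(1/\delta)+\bigl[1+3\alpha\log(8e(1+\alpha)^2)+\log2\bigr].
\]
Finally, since $n\ge2$ we have $\log n/\log 2\ge1$. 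Multiplying the constant bracket and the term $(6\alpha+1)\log(1/\delta)$ by $\log n/\log2$ only increases them. This yields exactly $C_\delta\log n$ with $B^{(1)}_\alpha,B^{(2)}_\alpha$ as in \eqref{eq:cdelta}.

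\textbf{Main obstacle.} The only delicate point is handling the $\log L$ term in $\log(1/x)$ without losing the explicit constants. The crude bound $\log L\le\log(2eM/\delta)$ from Step 3 is what produces the factor $M^2$, and hence $12\alpha$ and $\log(8e(1+\alpha)^2)$ in $B^{(1)}_\alpha$. The constants in \eqref{eq:cdelta} indicate that this is the bookkeeping the statement intends.
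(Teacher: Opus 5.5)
Your proof is correct and follows the paper's argument essentially line by line. Your $L=\log(2Me/\delta)$ is the paper's $\log(e/r)$ with $r=\delta/(2M)$, so Step 2 reduces to the same inequality $\log(2L)\le L$. Step 3 uses the same bounds $\log L\le L$ and $\alpha+n-1\le(1+\alpha)n$, and then absorbs the constants via $\log 2\le\log n$, exactly as the paper does.
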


\begin{proof}[Proof of Lemma \ref{lemma:2}]
    Let $r=\frac{\delta}{2n(n+\alpha-1)}$. For any $n\geq2$, we have $0<r<1$, since $2n(\alpha+n-1)>1$, hence $x(n,\delta)<1$. Moreover
\begin{equation}\label{eq:r}
\begin{aligned}
x(n,\delta)\left[1+\log\frac{1}{x(n,\delta)}\right]=&\frac{r}{2\log(e/r)}\left[1+\log\left(\frac{2\log(e/r)}{r}\right)\right]\\=& \frac{r}{2\log(e/r)}\left[\log(e/r)+\log\left(2\log(e/r)\right)\right]\le r
\end{aligned}
\end{equation}
since $\log(2y)\le y$ for any $y>0$. Substituting back $r$, we have \eqref{eq:xprop1}.

For \eqref{eq:xprop2}, first observe that
\begin{equation}\label{eq:chat1}
\log\frac{1}{x(n,\delta)}
=\log(4/\delta)
+\log n
+\log(\alpha+n-1)
+\log\log\!\Big(\frac{2\,n(\alpha+n-1)e}{\delta}\Big).
\end{equation}
Since $\alpha+n-1\le(1+\alpha)n$ for $n\ge 2$, we have $\log(\alpha+n-1)\le\log(1+\alpha)+\log n$, hence
\begin{equation}\label{eq:chat2}
    \log\log\!\Big(\frac{2\,n(\alpha+n-1)e}{\delta}\Big)
\le
\log\!\Big(\frac{2\,n(\alpha+n-1)e}{\delta}\Big)
\le
2\log n+\log\frac{2e(1+\alpha)}{\delta}.
\end{equation}
Combining \eqref{eq:chat1} and \eqref{eq:chat2} yields, for all $n\ge2$,
\begin{equation*}
\log\frac{1}{x(n,\delta)}
\le
4\log n
+
\log\!\Big(\frac{8e(1+\alpha)^2}{\delta^2}\Big).
\end{equation*}

Therefore
\begin{equation}\label{eq:chat3}
1+3\alpha\log\frac{1}{x(n,\delta)}+\log\frac{2}{\delta}
\le
12\alpha\log n
+
\Big(
1
+3\alpha\log\!\Big(\frac{8e(1+\alpha)^2}{\delta^2}\Big)
+\log\frac{2}{\delta}
\Big)
\le C_\delta\log n
\end{equation}
where $C_\delta$ is as defined in \eqref{eq:cdelta}, and for the second inequality in \eqref{eq:chat3} we used that $\log2 \le \log n$ to collect all the terms in one constant.
\end{proof}

We can now prove the main result.

\begin{proof}[Proof of Theorem~\ref{thm:bigtheorem}]
Let $\rho_n$ be the random partition of $[n]$ induced by $(z_1,\ldots,z_n)$ at a given iteration of Algorithm \ref{alg:posterior}, and let $H_n$ denote the number of clusters.  
Let $(c_1,\dots,c_n) \in [n]^n$ be any cluster‐label vector compatible with $\rho_n$. 
Conditionally on $\rho_n$, draw weights $(\pi_h)_{h\ge1}$ from the Dirichlet process posterior (as detailed in steps 4--6 of Algorithm~\ref{alg:posterior}) and let
\[
  u_i \;\sim\; \mathrm{Uniform}(0,\pi_{\,c_i}), 
  \qquad
  u_{\min} \;=\;\min_{1 \le i \le n} u_i.
\]
For each integer $k \ge 1$, define
\[
  R_k =\sum_{h \geq k+1} \pi_h,
  \; \text{ and } \; 
  K_n=\min\{\,k\geq H_n : \, R_k < u_{\min}\}.
\]
Recalling the \emph{a posteriori} representation of allocated components, \emph{i.e.}, $(\pi_1,\ldots,\pi_H,\pi^{\star})\sim\mathrm{Dirichlet}(n_1,\ldots,n_H, \alpha)$, with $\alpha$ being the concentration parameter, and the stick-breaking construction of the weights $(\pi_h)_{h\geq1}$, for any $k\geq H_n$ we have 
\begin{equation*}
    R_k= \pi^{\star}\times\prod_{i=H_n+1}^{k}(1-V_i)
\end{equation*} 
where we use the convention that $\prod_{i=h+1}^{h} y_i := 1$ for any $h$ and $(y_i)_i$. 
The distribution of each stick-breaking weight $V_i$, for $i>H_n$, conditionally on $\rho_n$, is $V_i \sim \mathrm{Beta}(1,\;\alpha)$.
So that 
\[
K_n=\min\left\{k\geq H_n:\pi^{\star}\times\prod_{i=H_n+1}^{k}(1-V_i)<u_{\min}\right\}.
\]

Let us define for any $x\in(0,1)$
\begin{equation}\label{eq:kx}
    K_n(x):=\min\left\{k>H_n:\pi^{\star}\times\prod_{i=H_n+1}^{k}(1-V_i)<x\right\}
\end{equation}
and
\begin{equation}\label{eq:k0}
    K_n^0(x):=\min\left\{k>H_n:\prod_{i=H_n+1}^k(1-V_i)<x\right\}
\end{equation}
where, since the products are decreasing in $k$, both the sets in \eqref{eq:kx} and \eqref{eq:k0} are non-empty for any $x$.
Similarly, since $\pi^{\star}<1$ a.s., we have
\begin{equation*}
\left\{k>H_n:\pi^{\star}\times\prod_{i=H_n+1}^{k}(1-V_i)<x\right\}\supseteq\left\{k>H_n:\prod_{i=H_n+1}^k(1-V_i)<x\right\}
\end{equation*}
for any $x\in(0,1)$, taking the minimum
\begin{equation}\label{eq:kx<k0}
 K_n(x)\leq K^0_n(x)\quad\text{a.s.}\quad\forall\,x\in(0,1).   
\end{equation}
It is easy to see that $u_{\min}>x$ implies
\begin{equation*}
    \left\{k>H_n:\pi^{\star}\times\prod_{i=H_n+1}^{k}(1-V_i)<x\right\}\subseteq\left\{k\geq H_n:\pi^{\star}\times\prod_{i=H_n+1}^{k}(1-V_i)<u_{\min}\right\}.
\end{equation*}
 Hence, using again a minimum argument, $K_n(x)\geq K_n$. By negation then, $K_n(x)<K_n$ implies $u_{\min}\leq x$, which means
\begin{equation}\label{eq:probumin}
    \prob{K_n(x)<K_n \mid \rho_n}\leq\prob{u_{\min}\leq x \mid \rho_n}
\end{equation}
for any $x\in(0,1)$.

Now, for any non-negative quantity $A_{n,\delta}$, possibly dependent on $n$ and $\delta$, we have
\begin{equation*}
    \begin{aligned}
\prob{K_n>A_{n,\delta}\mid \rho_n}&\leq\prob{K_n^0(x)>A_{n,\delta}\mid \rho_n}+\prob{K_n>K^0_n(x)\mid \rho_n}\\
&\leq \prob{K_n^0(x)\geq A_{n,\delta}\mid \rho_n}+\prob{K_n>K_n(x)\mid \rho_n}\\
&\leq \prob{K_n^0(x)\geq A_{n,\delta}\mid \rho_n}+\prob{u_{\min}\leq x\mid \rho_n}
    \end{aligned}
\end{equation*}
where we used in the first inequality the law of total probability disintegrating with respect to the event $\{K_n^{0}(x)\geq K_n\}$ and its complement, in the second \eqref{eq:kx<k0} and in the third \eqref{eq:probumin}. Hence, choosing $A_{n,\delta}=H_n +C_\delta \log n$, we just need to find $x=x(n,\delta)$ such that,  for any $n\geq 2$
\begin{equation}\label{eq:wts}
    \prob{K^0_n(x(n,\delta))\ge H_n +C_\delta \log n \mid \rho_n}\leq \frac{\delta}{2}\quad\text{and}\quad\prob{u_{\min}\leq x(n,\delta)\mid \rho_n}\leq\frac{\delta}{2}.
\end{equation}

For the first inequality, we notice, as in \citet{muliere1998approximating} and \citet{walker2007sampling}, that, conditionally on $\rho_n$, we have $-\log(1-V_i)\simiid\mathrm{exp}(\alpha)$ for $i>H_n$, which makes $\sum_{i=H_n+1}^k-\log(1-V_i)$ the arrival times of a homogeneous Poisson process of rate $\alpha$. Hence, by its definition in \eqref{eq:k0}, we have
\begin{equation}\label{eq:poisson}
    K^0_n(x)-H_n-1\mid\rho_n\sim\mathrm{Poisson}\left(\alpha\log(1/x)\right)
\end{equation}
for any $x\in(0,1)$. Applying Lemma \ref{lemma:1}, we have
\begin{equation}\label{eq:boundpoissonx}
   \prob{K^0_n(x)-H_n\geq 1+3\alpha\log(1/x)+\log(2/\delta)\mid\rho_n}\le\frac{\delta}{2} 
\end{equation}
for any $x\in(0,1)$, and the bound is independent of $\rho_n$.

Now, for the second inequality in \eqref{eq:wts}, we note that by Proposition~\ref{prop:merge_two_clusters} and Corollary~\ref{cor:umin_singleton_worst}
\begin{equation}\label{eq:worse}
\prob{u_{\min}\leq x\mid\rho_n} \leq \prob{u_{\min}\leq x\mid\rho^{\mathrm{sing}}}
\end{equation}
where $\rho^{\mathrm{sing}}$ is the singleton partition and $(c_1^{\mathrm{sing}},\dots,c_n^{\mathrm{sing}})$ a correspondent cluster-label vector, \emph{i.e.}, $c^{\mathrm{sing}}_i\neq c^{\mathrm{sing}}_j$ for any $i\neq j\in[n]$.
Then, we use the following minimum argument and union bound
\begin{equation}\label{eq:union}
\prob{u_{\min}\leq x\mid\rho^{\mathrm{sing}}}\leq\prob{\bigcup_{i=1}^n\left\{u_i\leq x\right\}\,\Big|\,\rho^{\mathrm{sing}}}\leq \sum_{i=1}^n\prob{u_i\leq x\mid\rho^{\mathrm{sing}}}.  
\end{equation}
Focusing on each summand on the r.h.s. of \eqref{eq:union},
let $ w_i: = \pi_{c^{\mathrm{sing}}_i}$, \emph{i.e.}, the weight associated to the allocation of the $i$-th observation, so that the slice variable $u^{\mathrm{sing}}_i$, conditionally on $w_i$, is uniform on $(0,w_i)$. 
In particular, for any $x\in(0,1)$ we have 
\begin{equation*}
\prob{u_i \leq x \mid w_i,\rho^{\mathrm{sing}}} =
\min\left(1, \frac{x}{w_i}\right).
\end{equation*}
Moreover,
\begin{equation*}\left(w_1,\ldots, w_n, 1 - \sum_{i=1}^{n} w_i\right)\,\Big|\,\,\rho^{\mathrm{sing}} \sim \mathrm{Dirichlet}(1,\ldots,1,\alpha)\end{equation*} and, in particular, $w_i\mid  \rho^{\mathrm{sing}}\sim \mathrm{Beta}(1, \alpha + n -1)$.
Thus, for any $x\in(0,1)$
\begin{equation}\label{eq:betabound}
\begin{aligned}
    \prob{u_i \le x\mid\rho^{\mathrm{sing}}} 
&= (\alpha+n-1)\left[\int_0^x (1-w)^{\alpha+n-2} \, dw + x\int_x^1 \frac{(1-w)^{\alpha+n-2}}{w} \, dw\right]\\
&\leq(\alpha+n-1)x\left[1+\log(1/x))\right]
\end{aligned}
\end{equation}
where we use that $1-w<1$. 
Combining \eqref{eq:worse}, \eqref{eq:union} and \eqref{eq:betabound} yields
\begin{equation}\label{eq:boundumin}
\prob{u_{\min} \le x \mid \rho_n}\le 
n \, (\alpha + n - 1)\, x \left[ 1 + \log\left(1/x\right)\right].
\end{equation}
We note that the bound is again independent of $\rho_n$.

\noindent Finally, we simply apply Lemma \ref{lemma:2} to choose $x=x(n,\delta)\in(0,1)$ such that, for any $n\ge 2$, the lower-bound in the probability in \eqref{eq:boundpoissonx} is upper-bounded by $C_{\delta}\log n$, and the r.h.s. of \eqref{eq:boundumin} is upper-bounded by $\delta/2$.
In fact, going back to \eqref{eq:boundpoissonx}, from \eqref{eq:xprop2} we get
\begin{equation*}
\begin{aligned}
&\prob{K^0_n\left(x(n,\delta)\right) - H_n>C_\delta\log n \mid \rho_n} \\
\leq
 \,&\prob{K_n^0(x(n,\delta)) - H_n >  1+3\alpha\log\frac{1}{x(n,\delta)}+\log\frac{2}{\delta}\,\Big|\, \rho_n}\leq \frac{\delta}{2}
 \end{aligned}
\end{equation*}
which is the first bound in \eqref{eq:wts}, while combining   \eqref{eq:boundumin} and \eqref{eq:xprop1} gives the second one.

To sum up, we proved that, uniformly over any partition $\rho_n$ visited by the posterior algorithm at any considered iteration, for any $\delta\in(0,1)$ there exists $C_\delta$ such that, for any $n\geq2$, $\prob{K_n - H_n>C_\delta\,\log n \mid \rho_n}\leq \delta$,
where we give an explicit expression for the constant $C_\delta$ in \eqref{eq:cdelta}. 
In particular, we have $K_n - H_n=O_\mathbb{P}(\log n)$ and in the worst-case scenario of $H_n = O(n)$, $K_n = O_\mathbb{P}(n)$.
\end{proof}

\subsection{Proof of Corollary \ref{cor:exp_tail}} 
Let $X_n:=\frac{K_n-H_n}{\log n}$. In the bound obtained in Theorem \ref{thm:bigtheorem} one can simply take $\delta=e^{-t}$ for $t>0$ and obtain, for any $n\geq2$,
\begin{equation}\label{eq:exp_tail}
    \prob{X_n>B_\alpha^{(1)}+B_\alpha^{(2)}t\,\Big|\,\rho_n}\le e^{-t}
\end{equation}
for any visited partition $\rho_n$, \emph{i.e.}, an exponential bound on the tails of the normalized overhead, uniform in $n$ and in $\rho_n$.
To obtain the uniform bound on the $p$-th moment of $X_n$, we leverage the tail moment identity: for any nonnegative random variable $Z$
\[
\mean{Z^p}
=
\int_0^\infty p s^{p-1}\,\prob{Z>s}\,ds
\]
which we apply for $Z=(X_n-B_\alpha^{(1)})_+$, obtaining
\begin{equation}\label{eq:intsurv}
\begin{aligned}
    \mean{\left(X_n-B_\alpha^{(1)}\right)_+^p\,\Big|\,\rho_n}=&\int_0^\infty
p s^{p-1}\,
\prob{X_n > B^{(1)}_\alpha + s\,\Big|\,\rho_n}
\,ds\\
\le&\int_0^\infty
p s^{p-1}
e^{-s/B^{(2)}_\alpha}
\,ds
=
p\,\Gamma(p)\,(B^{(2)}_\alpha)^p
\end{aligned}
\end{equation}
where we applied the bound in \eqref{eq:exp_tail} with $s=tB^{(2)}_\alpha$. Now we simply observe that

\[
X_n = \bigl(X_n - B^{(1)}_\alpha\bigr)_+ + \min\!\left(X_n,B^{(1)}_\alpha\right),
\]
and use the inequality $(a+b)^p \le 2^{p-1}(a^p+b^p)$ valid for all $a,b\ge0$ and $p\ge1$. This yields
\[
X_n^p
\le
2^{p-1}\Bigl\{
\bigl(X_n - B^{(1)}_\alpha\bigr)_+^p
+
\bigl(B^{(1)}_\alpha\bigr)^p
\Bigr\}
\]
Hence, taking conditional expectation and using \eqref{eq:intsurv}, we have
\[
\mean{X_n^p\mid\rho_n}
\le
2^{p-1}
\Bigl\{
(B^{(1)}_\alpha)^p
+
p\Gamma(p)\,(B^{(2)}_\alpha)^p
\Bigr\}:=C_{\alpha,p}
\]
which, taking the supremum over $n\ge2$, gives
\[
\sup_{n\ge2}\mean{X_n^p\mid\rho_n}
\le
C_{\alpha,p}
\]
for any $\rho_n$.

\subsection{Proof of Corollary \ref{cor:bc}}

Given the result in Theorem \ref{thm:bigtheorem}, taking expectation w.r.t. the partition $\rho_n$ we have
\begin{equation}\label{eq:thm}
\prob{K_n-H_n>\left[B_\alpha^{(1)}+B_\alpha^{(2)}\log(1/\delta_n)\right]\log n}\le\delta_n
\end{equation}
for any vanishing sequence $(\delta_n)_{n\ge1}\subset(0,1/2)$. Taking $D_\alpha:=\frac{B^{(1)}_\alpha}{\log2}+B^{(2)}_\alpha$ we have
\begin{equation*}
    \left[B_\alpha^{(1)}+B_\alpha^{(2)}\log(1/\delta_n)\right]\log n\leq D_\alpha\log(1/\delta_n)\log n
\end{equation*}
since $\log2<\log(1/\delta_n)$. Hence \eqref{eq:thm} becomes
\begin{equation*}
  \prob{K_n-H_n>D_\alpha\log(1/\delta_n)\log n}\le\delta_n. 
\end{equation*}
Finally, if $\sum_{n\ge1}\delta_n<\infty$, then by Borel-Cantelli
\begin{equation*}
    \prob{\limsup_{n\to\infty}\left\{K_n-H_n>D_\alpha\log(1/\delta_n)\log n\right\}}=0.
\end{equation*}

\subsection{Proof of Proposition~\ref{prop:tightness}}
First, we state and prove the following lemma. 
\begin{lemma}\label{lemma:min_unif_simplez}
If $\left(p_1,\ldots,p_n\right)\sim \mathrm{Dirichlet}(1,\ldots,1)$, then, for $t\le 1/n$,
$\mathbb{P}\!\left(\min_{1\le i\le n} p_i>t\right)=(1-nt)^{n-1}.
$ 
\end{lemma}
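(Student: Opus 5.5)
The plan is to realize the flat Dirichlet law through normalized i.i.d.\ exponentials, or equivalently through spacings of uniform order statistics, and then compute the survival function of the minimum directly.

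\textbf{Representation.} Recall that $(p_1,\ldots,p_n)\sim\mathrm{Dirichlet}(1,\ldots,1)$ is the uniform distribution on the simplex $\Delta_{n-1}=\{p\in[0,1]^n:\sum_i p_i=1\}$. Its density with respect to Lebesgue measure on the first $n-1$ coordinates is the constant $(n-1)!$, since the simplex $\{p_1,\ldots,p_{n-1}\ge0,\ \sum_{i<n}p_i\le1\}$ has volume $1/(n-1)!$. Consequently,
\[
\prob{\min_i p_i>t}=(n-1)!\,\mathrm{Vol}_{n-1}\Bigl\{(p_1,\ldots,p_{n-1}): p_i>t\ \forall i<n,\ 1-\textstyle\sum_{i<n}p_i>t\Bigr\}.
\]

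\textbf{Shift.} Substitute $q_i=p_i-t$ for $i\le n-1$. This is a translation, so it preserves volume. The constraints become $q_i>0$ and $\sum_{i<n}q_i<1-nt$. For $t\le 1/n$ the quantity $1-nt$ is nonnegative, and the region is the scaled simplex $(1-nt)\cdot\{q\ge0,\ \sum q_i\le1\}$. Its volume is $(1-nt)^{n-1}/(n-1)!$. Multiplying by $(n-1)!$ gives $(1-nt)^{n-1}$, as claimed.

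\textbf{Main obstacle and fallback.} The only point needing care is the boundary case $t=1/n$. There the region degenerates to a null set, and the formula correctly returns $0$, consistent with $\min_i p_i\le 1/n$ always. An alternative route is via uniform spacings, using the classical fact that the minimum spacing of $n-1$ uniforms on $[0,1]$ has survival function $(1-nt)_+^{n-1}$. I would nonetheless present the volume argument, since it is self-contained. The resulting lemma will then be used to lower-bound the probability that $u_{\min}$ is polynomially small under $\rho_n^{\mathrm{sing}}$, from which the Poisson representation \eqref{eq:poisson} yields the tightness statement.
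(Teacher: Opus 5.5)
Your argument is correct and is essentially the paper's: the paper likewise writes $\{\min_i p_i>t\}$ as the image of the simplex under the affine map $r_i\mapsto t+(1-nt)r_i$ and reads off the volume factor $(1-nt)^{n-1}$. The only difference is bookkeeping: you work in the first $n-1$ coordinates with the explicit density $(n-1)!$ and split the map into a translation and a dilation, whereas the paper normalizes the simplex to unit volume (both arguments tacitly take $t\ge 0$).
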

\begin{proof}[Proof of Lemma~\ref{lemma:min_unif_simplez}]
    We know that $(p_1,\ldots,p_n)$ is uniformly distributed on the simplex
\[
\Delta_{n-1}:=\left\{(p_1,\ldots,p_n): p_i\ge 0,\ \sum_{i=1}^n p_i=1\right\}.
\]
For $t\le 1/n$ the event $\{\min_{1\le i\le n} p_i>t\}$ corresponds to the set
\[
A_t:=\left\{(p_1,\ldots,p_n): p_i>t,\ \sum_{i=1}^n p_i=1\right\}.
\]
Now, set
\[
r_i=\frac{p_i-t}{1-nt}, \qquad i=1,\ldots,n
\]
so that 
\[
A_t:=\left\{(p_1,\ldots,p_n): p_i=t+(1-nt)r_i \text{ and } r_i>0,\ \sum_{i=1}^n r_i=1\right\}.
\]
Thus $A_t$ is the image of $\Delta_{n-1}$ under the affine map
\[
(r_1,\ldots,r_n)\mapsto \bigl(t+(1-nt)r_1,\ldots,t+(1-nt)r_n\bigr).
\]
Since $\Delta_{n-1}$ has dimension $n-1$ and volume $1$, the volume of $A_t$ is
$(1-nt)^{n-1}$. Therefore
\[
\prob{\min_{1\le i\le n} p_i>t\middle|\,\pi^\star,\rho_n^{\mathrm{sing}}}=(1-nt)^{n-1}.
\]
\end{proof}
We can now prove the tightness result.
\begin{proof}[Proof of Proposition~\ref{prop:tightness}]
From the proof of Theorem~\ref{thm:bigtheorem}, for any $b>0$, we have $K_n^0(b/n)-H_n-1 \mid \rho_n^{\mathrm{sing}}
\sim \text{Poisson}\left(\alpha\log\frac{n}{b}\right)$ and, in particular,
\[
\mathbb{E}\!\left[K_n^0(b/n)-H_n-1 \mid \rho_n^{\mathrm{sing}}\right]
=
\mathrm{Var}\!\left(K_n^0(b/n)-H_n-1 \mid \rho_n^{\mathrm{sing}}\right)
=
\alpha\log\frac{n}{b}.
\]
Therefore, for any $c_\alpha\in(0,\alpha)$, for $n$ large enough, $c_\alpha\log n < \alpha\log\frac{n}{b}$ and by Chebyshev's inequality,
\[
\begin{aligned}
&\prob{K_n^0(b/n)-H_n<c_\alpha\log n \,\middle|\, \rho_n^{\mathrm{sing}}}
=
\prob{K_n^0(b/n)-H_n-1<c_\alpha\log n-1 \,\middle|\, \rho_n^{\mathrm{sing}}}\\
&\le
\prob{\left|K_n^0(b/n)-H_n-1-\alpha\log\frac{n}{b}\right|
\ge \alpha\log\frac{n}{b}-c_\alpha\log n-1 \,\middle|\, \rho_n^{\mathrm{sing}}}\\
&\le
\frac{\alpha\log(n/b)}
{\left(\alpha\log(n/b)-c_\alpha\log n-1\right)^2}
= \frac{\alpha\log n - \alpha \log b}
{\left((\alpha-c_\alpha)\log n-\alpha\log b-1\right)^2}\to 0.
\end{aligned}
\]
That is, $\prob{K_n^0(b/n)-H_n\ge c_\alpha\log n \,\middle|\, \rho_n^{\mathrm{sing}}}\to 1$ and, in particular, for all sufficiently large $n$,
\[
\prob{K_n^0(b/n)-H_n\ge c_\alpha\log n \,\middle|\, \rho_n^{\mathrm{sing}}}\ge \frac12.
\]

Now consider the event $E_n = \left\{\pi^\star\ge \frac{a}{n}\right\}\cap\left\{u_{\min}\le \frac{ab}{n^2}\right\}$ for some $a>0$. Note that, from the proof of Theorem~\ref{thm:bigtheorem}, we have $K_n = K_n^{0}(u_{\min} / \pi^{\star})$, and thus
$K_n \ge K_n^0(b/n)$ on $E_n$. 
Moreover, conditionally on $\rho_n^{\mathrm{sing}}$, $K_n^0(b/n)$ depends only on $(V_i)_{i>H_n}$, while $E_n$ depends only on $(\pi_1,\ldots,\pi_H,\pi^\star)$ and $(u_i)$, which are independent by construction; hence $E_n$ and $K_n^0(b/n)$ are conditionally independent.
Thus, for $n$ large enough,
\[
\mathbb{P}(K_n-H_n\ge c_\alpha\log n \mid \rho_n^{\mathrm{sing}}) \geq
\mathbb{P}(E_n\mid \rho_n^{\mathrm{sing}}) \mathbb{P}(K_n^0(b/n)-H_n\ge c_\alpha\log n \mid \rho_n^{\mathrm{sing}})\ge \frac12\mathbb{P}(E_n\mid \rho_n^{\mathrm{sing}}).
\]
It remains to show that $\mathbb{P}(E_n\mid \rho_n^{\mathrm{sing}})$ is bounded away from $0$. 
Since $\pi^\star\sim \text{Beta}(\alpha,n)$, we have $n\pi^\star \overset{d}\rightarrow \text{Gamma}(\alpha,1)$,
so, for any fixed $a>0$,
\[
\mathbb{P}\left(\pi^\star\ge \frac an \,\middle|\, \rho_n^{\mathrm{sing}}\right)\rightarrow 
1-F(a)>0
\]
where $F(a)$ denotes the cdf of a $\text{Gamma}(\alpha,1)$.
Moreover, let $p_i=\pi_{c_i} / (1 - \pi^{\star})$, then
$\left(p_1,\ldots,p_n\right)\mid\pi^\star\sim \text{Dirichlet}(1,\ldots,1)$ and by Lemma~\ref{lemma:min_unif_simplez},
for $t\le 1/n$,
\[
\mathbb{P}\!\left(\min_{1\le i\le n} p_i>t\middle|\,\pi^\star,\rho_n^{\mathrm{sing}}\right)=(1-nt)^{n-1}.
\]
Now note that $\min\limits_{1\le i\le n} \pi_{c_i}= (1-\pi^\star)\min\limits_{1\le i\le n} p_i$.
Therefore
\[
\mathbb{P}\!\left(\min\limits_{1\le i\le n} \pi_{c_i} \le \frac{ab}{n^2}\,\middle|\,\pi^\star,\rho_n^{\mathrm{sing}}\right)
=
\mathbb{P}\!\left(\min_{1\le i\le n} p_i \le \frac{ab}{(1-\pi^\star)n^2}\,\middle|\,\pi^\star,\rho_n^{\mathrm{sing}}\right)
\ge
\mathbb{P}\!\left(\min_{1\le i\le n} p_i \le \frac{ab}{n^2}\,\middle|\,\pi^\star,\rho_n^{\mathrm{sing}}\right).
\]
Taking $t=ab/n^2$, we obtain
\[
\mathbb{P}\!\left(\min_{1\le i\le n} p_i \le \frac{ab}{n^2}\,\middle|\,\pi^\star,\rho_n^{\mathrm{sing}}\right)
=
1-\left(1-\frac{ab}{n}\right)^{n-1}
\to 1-e^{-ab}>0.
\]

Now, let $i^\star=\underset{1\le i\le n}{\text{argmin}} \,\pi_{c_i}$. Since
$u_{i^\star}\mid \pi_{c_{i^\star}}\sim \text{Unif}(0,\min\limits_{1\le i\le n} \pi_{c_i})$, we have
\[
\mathbb{P}\!\left(u_{\min}\le \frac{ab}{n^2}\,\middle|\,\min\limits_{1\le i\le n} \pi_{c_i} \leq \frac{ab}{n^2}\right) = 1.
\]
Therefore
\[
\mathbb{P}\!\left(u_{\min}\le \frac{ab}{n^2}\,\middle|\,\pi^\star,\rho_n^{\mathrm{sing}}\right)
\ge
\mathbb{P}\!\left( \min\limits_{1\le i\le n} \pi_{c_i}\le \frac{ab}{n^2}\,\middle|\,\pi^\star,\rho_n^{\mathrm{sing}}\right) > 0 \qquad\mathbb P(\cdot\mid\rho_n^{\mathrm{sing}})\text{-a.s.}
\]
Combining the previous displays, there exists $\eta_{a,b}>0$ such that for all sufficiently large $n$,
\[
\prob{u_{\min}\le \frac{ab}{n^2}\,\middle|\, \pi^\star,\rho_n^{\mathrm{sing}}}\ge \eta_{a,b}
\qquad\mathbb P(\cdot\mid\rho_n^{\mathrm{sing}})\text{-a.s.}
\]
To conclude, note that
\[
\prob{E_n \mid \rho_n^{\mathrm{sing}}}
=
\mathbb{E}\!\left[
\mathbf{1}_{\{\pi^\star \ge a/n\}}
\prob{u_{\min}\le \frac{ab}{n^2}\,\middle|\, \pi^\star,\rho_n^{\mathrm{sing}}}
\,\middle|\, \rho_n^{\mathrm{sing}}
\right].
\]
and thus, for $n$ large enough,
$\prob{E_n \mid \rho_n^{\mathrm{sing}}}
\ge
\eta_{ab}\,(1 - F(a) )$,
where the right-hand side is bounded away from $0$.

 The second claim follows immediately.
\end{proof}
\section{Additional details on numerical experiments}\label{app:num}
\subsection{Three equally-balanced clusters}
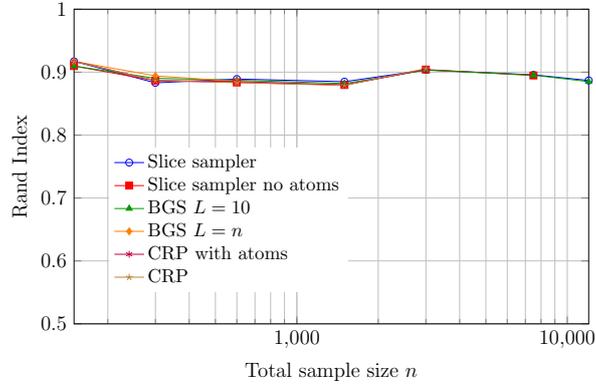
\begin{figure}[htb!]
\centering
\begin{tikzpicture}[scale=0.7]
  \begin{axis}[
    width=12cm, height=8cm,
    xmin=150, xmax=12000,
    ymin=0.5, ymax=1,
    xlabel={Total sample size $n$},
    ylabel={Rand Index},
    xmode=log,
    log ticks with fixed point,
    grid=both,
    legend style={
      at={(0.25,0.1)},
      anchor=south,
      draw=none,
      font=\small,
    },
    legend cell align=left,
  ]

    \addplot[blue, mark=o] coordinates {
      (150,   0.9172)
      (300,   0.8648)
      (600,   0.8765)
      (1500,  0.8950)
      (3000,  0.8893)
      (7500,  0.8895)
      (12000, 0.8941)
    };
    \addlegendentry{Slice sampler}

    \addplot[red, mark=square*] coordinates {
      (150,   0.9172)
      (300,   0.8616)
      (600,   0.8765)
      (1500,  0.8920)
      (3000,  0.8890)
      (7500,  0.8902)
    };
    \addlegendentry{Slice sampler no atoms}

    \addplot[green!60!black, mark=triangle*] coordinates {
      (150,   0.9172)
      (300,   0.8616)
      (600,   0.8765)
      (1500,  0.8935)
      (3000,  0.8890)
      (7500,  0.8911)
      (12000, 0.8933)
    };
    \addlegendentry{BGS $L=10$}

    \addplot[orange, mark=diamond*] coordinates {
      (150,   0.9172)
      (300,   0.8678)
      (600,   0.8782)
    };
    \addlegendentry{BGS $L=n$}

    \addplot[purple, mark=asterisk] coordinates {
      (150,   0.9172)
      (300,   0.8778)
      (600,   0.8765)
      (1500,  0.8927)
      (3000,  0.8897)
    };
    \addlegendentry{CRP with atoms}

    \addplot[brown, mark=star] coordinates {
      (150,   0.9172)
      (300,   0.8616)
      (600,   0.8782)
      (1500,  0.8957)
      (3000,  0.8882)
    };
    \addlegendentry{CRP}

  \end{axis}
\end{tikzpicture}
\caption{\textbf{Three equally-balanced clusters}: Rand index between the partition's point estimate and the true clustering configuration. The point estimate is obtained by minimizing the Binder loss function.} 
\label{fig:ESS}
\end{figure}

\hfill
\begin{figure}[htb!]
    \centering

    \begin{subfigure}[t]{0.32\textwidth}
        \centering
        \includegraphics[width=0.85\linewidth]{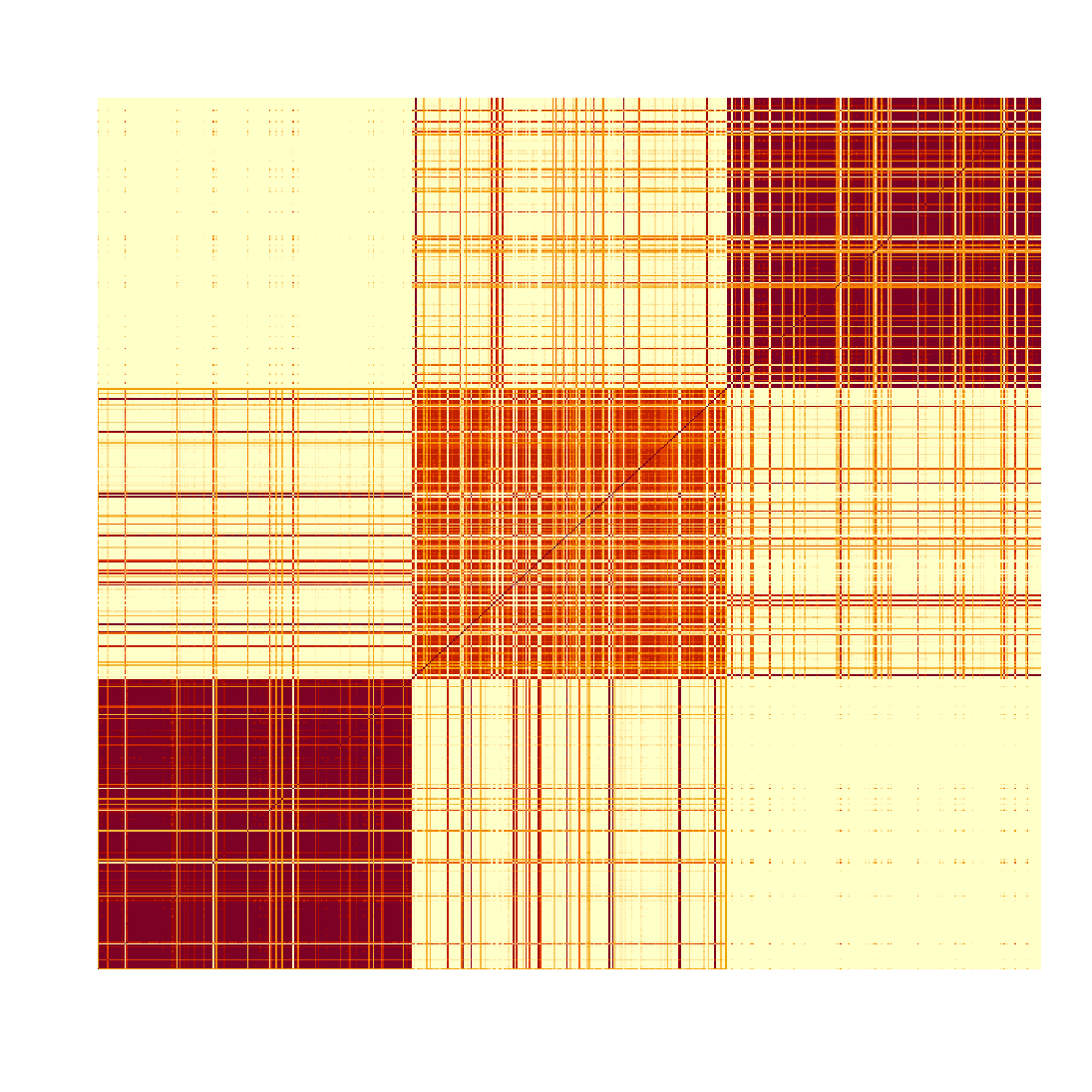}
        \caption{Slice $n=600$}
    \end{subfigure}
    \hfill
    \begin{subfigure}[t]{0.32\textwidth}
        \centering
        \includegraphics[width=0.85\linewidth]{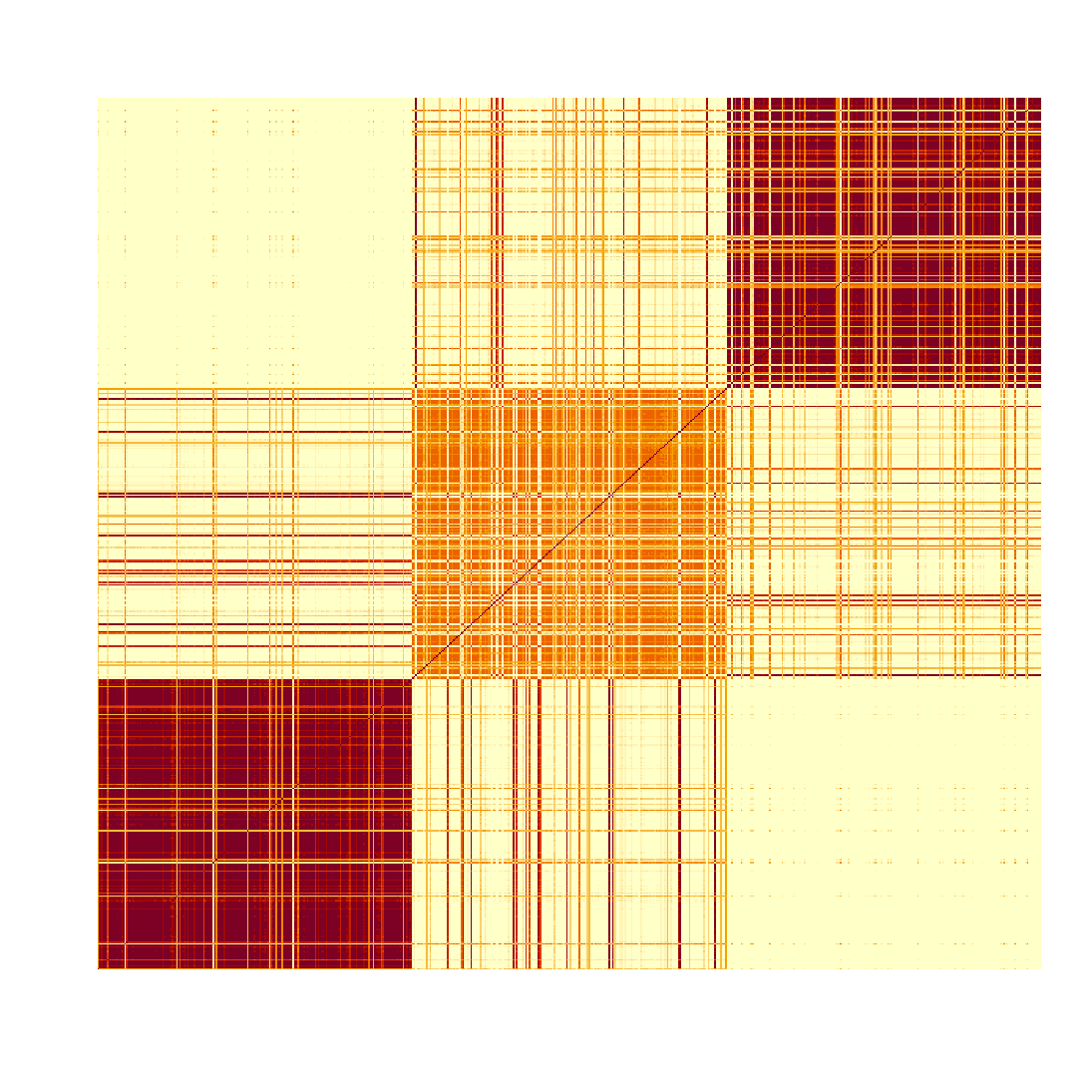}
        \caption{BGS $L=10$ $n=600$}
    \end{subfigure}
    \hfill
    \begin{subfigure}[t]{0.32\textwidth}
        \centering
        \includegraphics[width=0.85\linewidth]{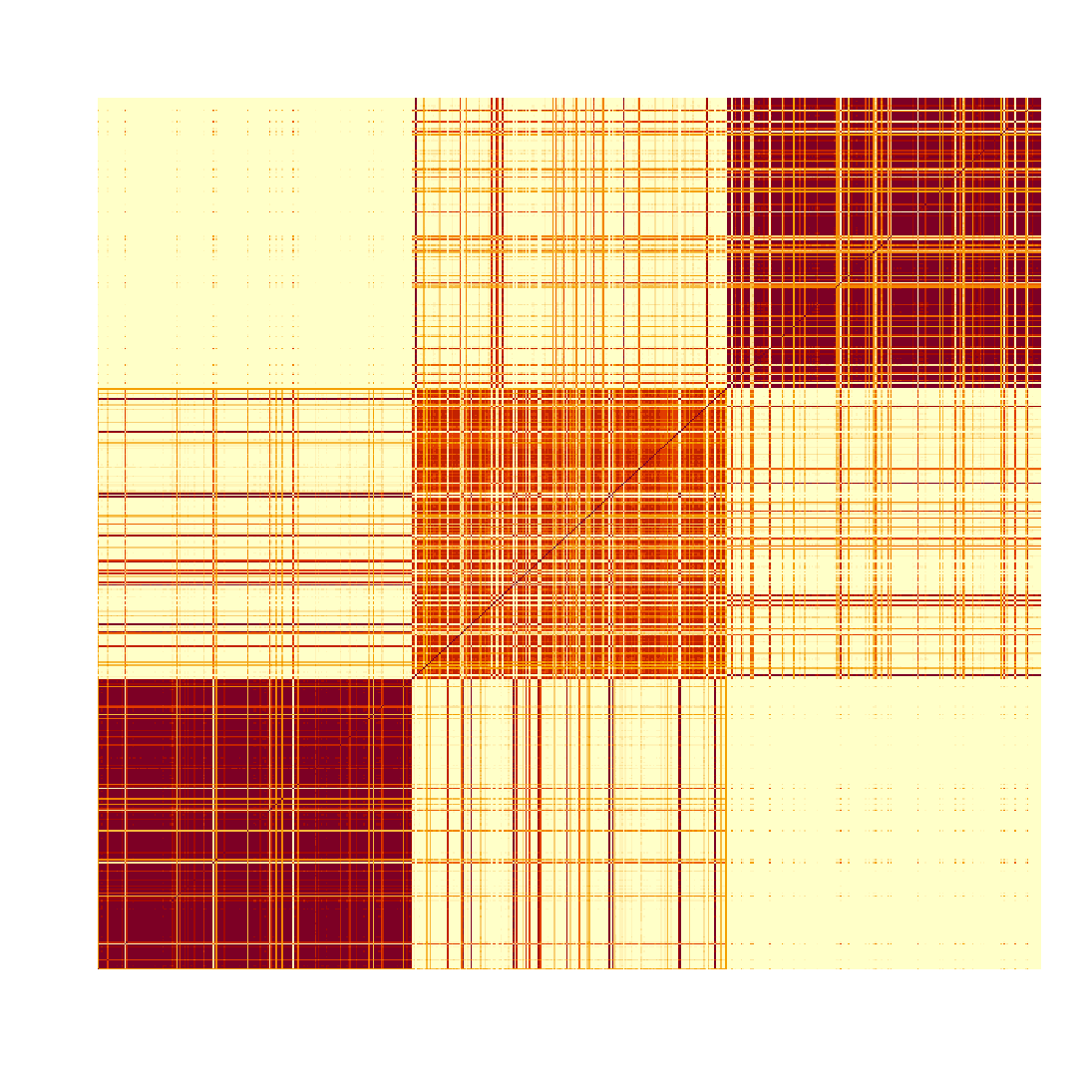}
        \caption{CRP with atoms $n=600$}
    \end{subfigure}

    \vspace{0.5em}

    \begin{subfigure}[t]{0.32\textwidth}
        \centering
        \includegraphics[width=0.85\linewidth]{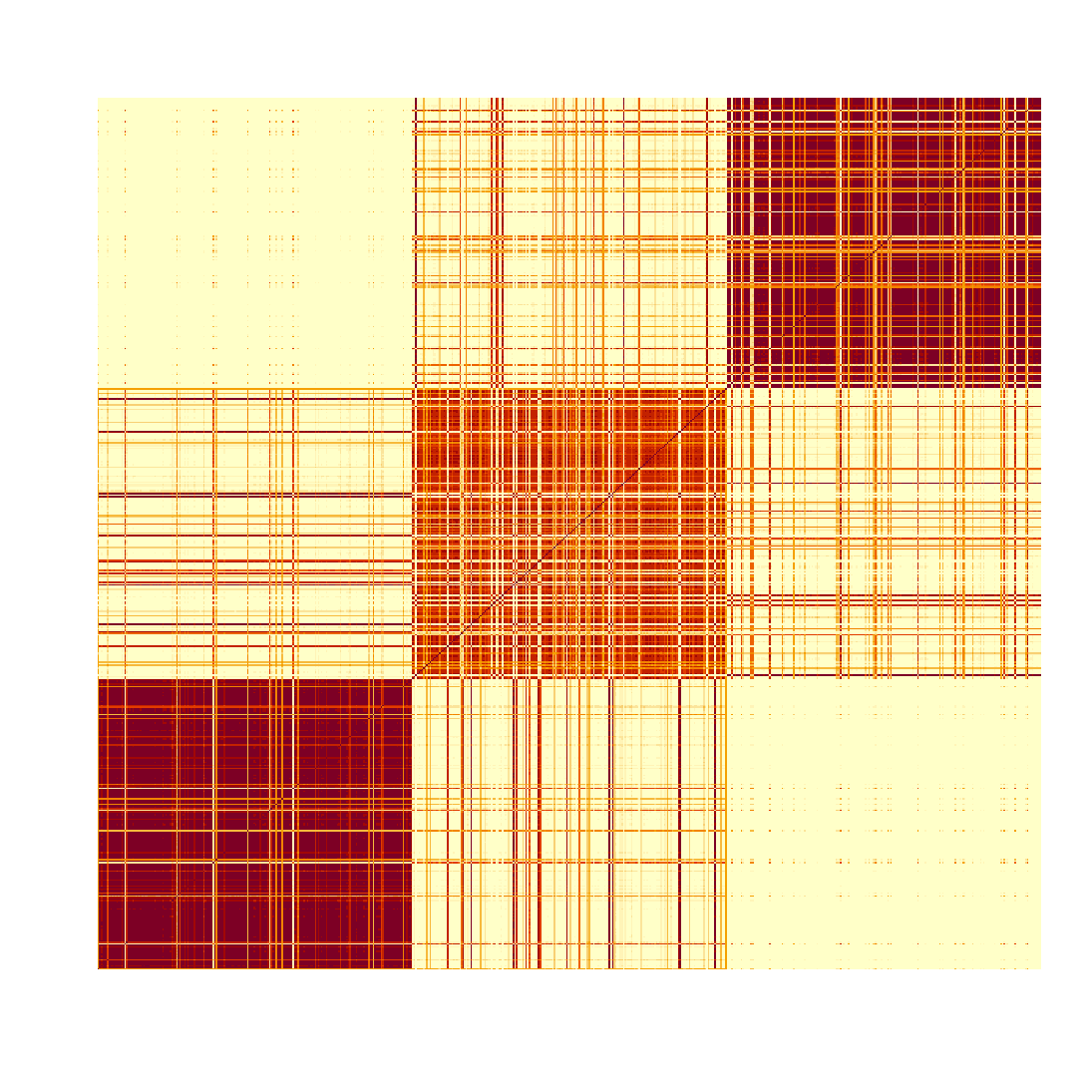}
        \caption{Slice no atoms $n=600$}
    \end{subfigure}
    \hfill
    \begin{subfigure}[t]{0.32\textwidth}
        \centering
        \includegraphics[width=0.85\linewidth]{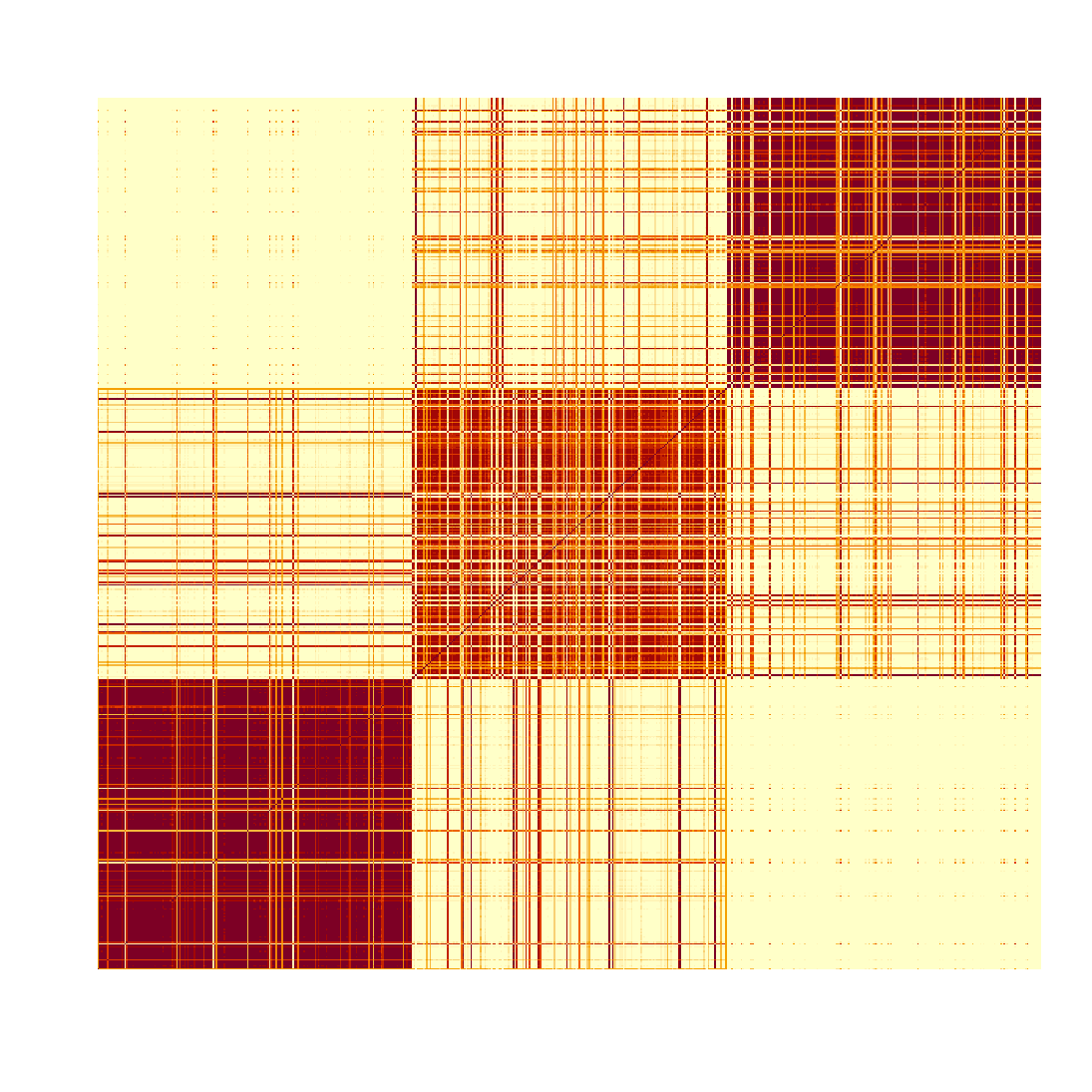}
        \caption{BGS $L=600$, $n=600$}
    \end{subfigure}
    \hfill
    \begin{subfigure}[t]{0.32\textwidth}
        \centering
        \includegraphics[width=0.85\linewidth]{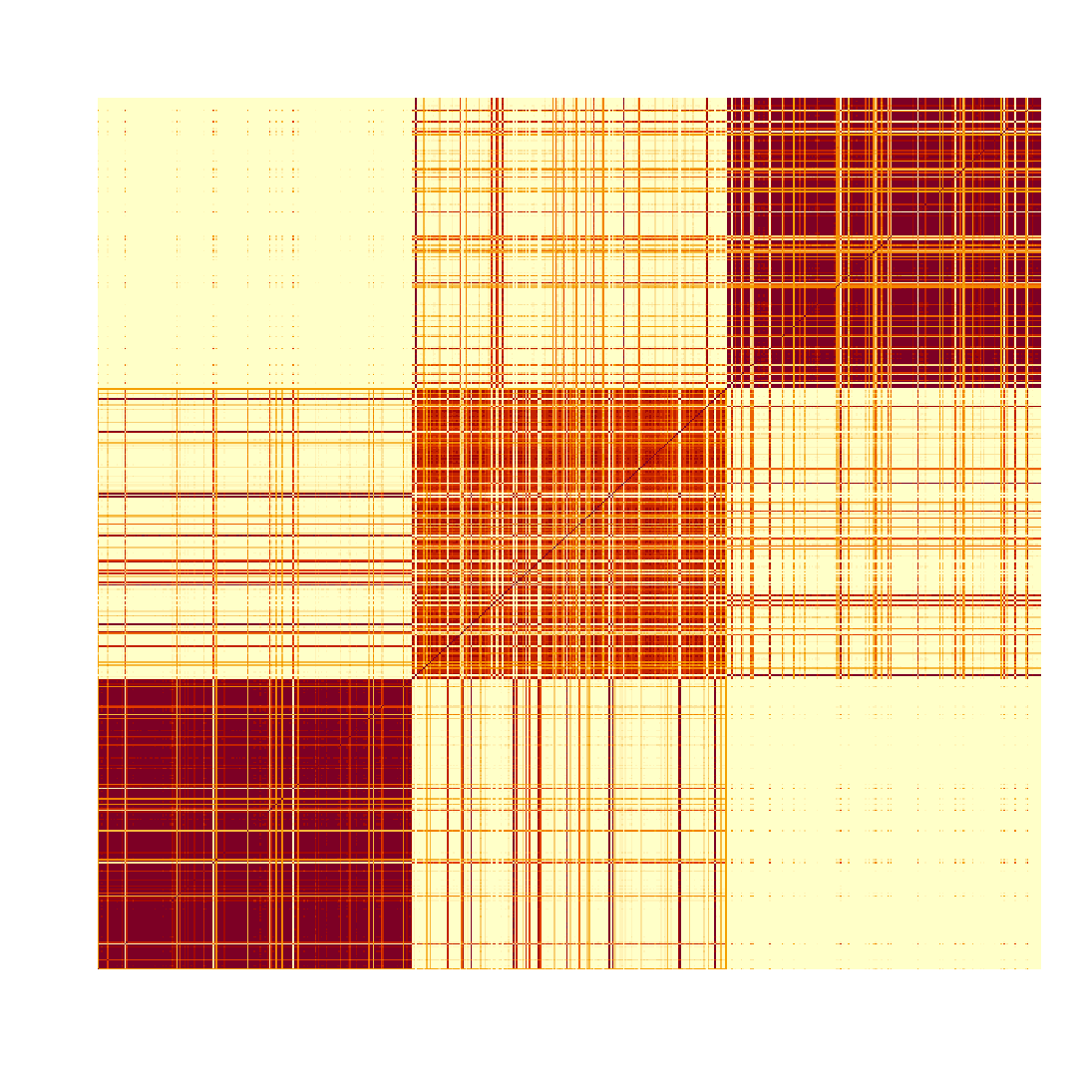}
        \caption{CRP $n=600$}
    \end{subfigure}

    \caption{\textbf{Three equally-balanced clusters}: Posterior co-clustering matrices, computed discarding 5{,}000 iterations of burn-in.}
    \label{fig:tenpanels}
\end{figure}

\FloatBarrier
\subsection{Perturbed Zipf}
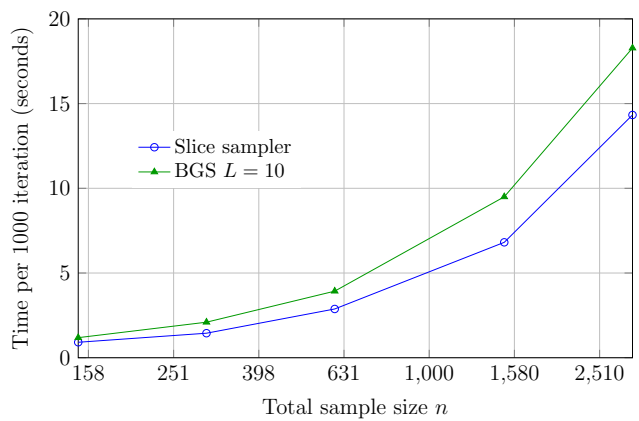
\begin{figure}[htb!]
\centering
\begin{tikzpicture}[scale=0.9]
  \begin{axis}[
    width=12cm, height=8cm,
    xmin=150, xmax=3000,
    ymin=0, ymax=30,
    xlabel={Total sample size $n$},
    ylabel={Time per 1000 iteration (seconds)},
    xmode=log,
    log ticks with fixed point,
    grid=both,
    legend style={
      at={(0.25,0.5)},
      anchor=south,
      draw=none,
      font=\small,
    },
    legend cell align=left,
  ]

    \addplot[blue, mark=o] coordinates {
      (150,   1.4245)
      (300,   2.6285)
      (600,   5.0615)
      (1500,  13.2912)
      (3000,  25.9017)
    };
    \addlegendentry{Slice sampler}

    \addplot[green!60!black, mark=triangle*] coordinates {
      (150,   1.1555)
      (300,   2.0418)
      (600,   3.7855)
      (1500,  8.9662)
      (3000,  17.7758)
    };
    \addlegendentry{BGS $L=10$}
  \end{axis}
\end{tikzpicture}
\caption{\textbf{Perturbed Zipf--generated data}: Time per 1000 iterations in seconds computed as average over 10{,}000 iterations (including burn-in) as a function of input size (x-axis on log scale). Codes are in \textsc{R} and run on a laptop with 13th Gen Intel(R) Core(TM) i7-1370P.} 
\label{fig:time}
\end{figure}

\begin{figure}[htb!]
\centering
\begin{tikzpicture}[scale=0.9]
  \begin{axis}[
    width=12cm, height=8cm,
    xmin=150, xmax=3000,
    ymin=0, ymax=450,
    xlabel={Total sample size $n$},
    ylabel={ESS per second},
    xmode=log,
    log ticks with fixed point,
    grid=both,
    legend style={
      at={(0.75,0.5)},
      anchor=south,
      draw=none,
      font=\small,
    },
    legend cell align=left,
  ]

    \addplot[blue, mark=o] coordinates {
      (150,   428.9083)
      (300,   16.4406)
      (600,   185.1125)
      (1500,  75.23724)
      (3000,  36.97173)
    };
    \addlegendentry{Slice sampler}

    \addplot[green!60!black, mark=triangle*] coordinates {
      (150,   237.8696)
      (300,   40.45296)
      (600,   149.5763)
      (1500,  23.96429)
      (3000,  40.23756)
    };
    \addlegendentry{BGS $L=10$}
  \end{axis}
\end{tikzpicture}
\caption{\textbf{Perturbed Zipf--generated data}: ESS (w.r.t. the log likelihood) per second computed as average over the last 5{,}000 iterations (excluding a burn-in of 5{,}000) as a function of input size (x-axis on log scale). Codes are in \textsc{R} and run on a laptop with 13th Gen Intel(R) Core(TM) i7-1370P.} 
\label{fig:ESS}
\end{figure}

\begin{figure}[htb!]
\centering
\begin{tikzpicture}[scale=0.9]
  \begin{axis}[
    width=12cm, height=8cm,
    xmin=150, xmax=3000,
    ymin=0, ymax=50,
    xlabel={Total sample size $n$},
    ylabel={ESS per second},
    xmode=log,
    log ticks with fixed point,
    grid=both,
    legend style={
      at={(0.75,0.5)},
      anchor=south,
      draw=none,
      font=\small,
    },
    legend cell align=left,
  ]

    \addplot[blue, mark=o] coordinates {
      (150,   45.9102)
      (300,   24.2783)
      (600,   8.182874)
      (1500,  2.94486)
      (3000,  0.912862)
    };
    \addlegendentry{Slice sampler}

    \addplot[green!60!black, mark=triangle*] coordinates {
      (150,   0)
      (300,   0)
      (600,   0)
      (1500,  0)
      (3000,  0)
    };
    \addlegendentry{BGS $L=10$}
  \end{axis}
\end{tikzpicture}
\caption{\textbf{Perturbed Zipf--generated data}: ESS (w.r.t. the number of clusters) per second computed as average over the last 5{,}000 iterations (excluding a burn-in of 5{,}000) as a function of input size (x-axis on log scale). Codes are in \textsc{R} and run on a laptop with 13th Gen Intel(R) Core(TM) i7-1370P.} 
\label{fig:ESS}
\end{figure}

\begin{figure}[htb!]
\centering
\begin{tikzpicture}[scale=0.9]
  \begin{axis}[
    width=12cm, height=8cm,
    xmin=150, xmax=3000,
    ymin=0.85, ymax=1,
    xlabel={Total sample size $n$},
    ylabel={Rand Index},
    xmode=log,
    log ticks with fixed point,
    grid=both,
    legend style={
      at={(0.2,0.8)},
      anchor=south,
      draw=none,
      font=\small,
    },
    legend cell align=left,
  ]

    \addplot[blue, mark=o] coordinates {
      (150,   0.9485)
      (300,   0.9604)
      (600,   0.9628)
      (1500,  0.9767)
      (3000,  0.9764)
    };
    \addlegendentry{Slice sampler}

    \addplot[green!60!black, mark=triangle*] coordinates {
      (150,   0.9192)
      (300,   0.9174)
      (600,   0.9149)
      (1500,  0.9145)
      (3000,  0.9128)
    };
    \addlegendentry{BGS $L=10$}
  \end{axis}
\end{tikzpicture}
\caption{\textbf{Perturbed Zipf--generated data}: Rand index between the partition's point estimate and the true clustering configuration. The point estimate is obtained by minimizing the Binder loss function after discarding the first 5{,}000 iterations as burn-in.} 
\label{fig:time}
\end{figure}
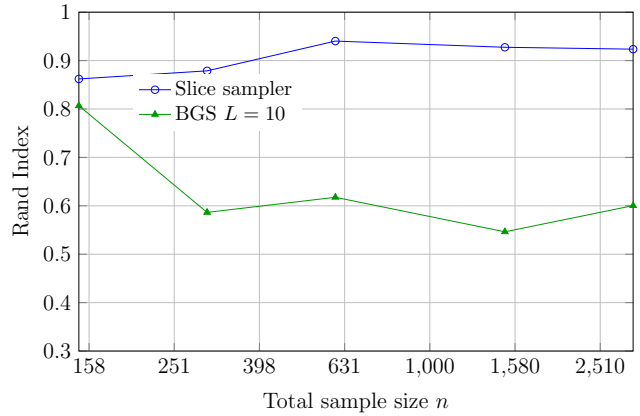

\begin{figure}[htb!]
    \centering

    \begin{subfigure}[t]{0.28\textwidth}
        \centering
        \includegraphics[width=\linewidth, trim={1cm 2cm 0 0},clip]{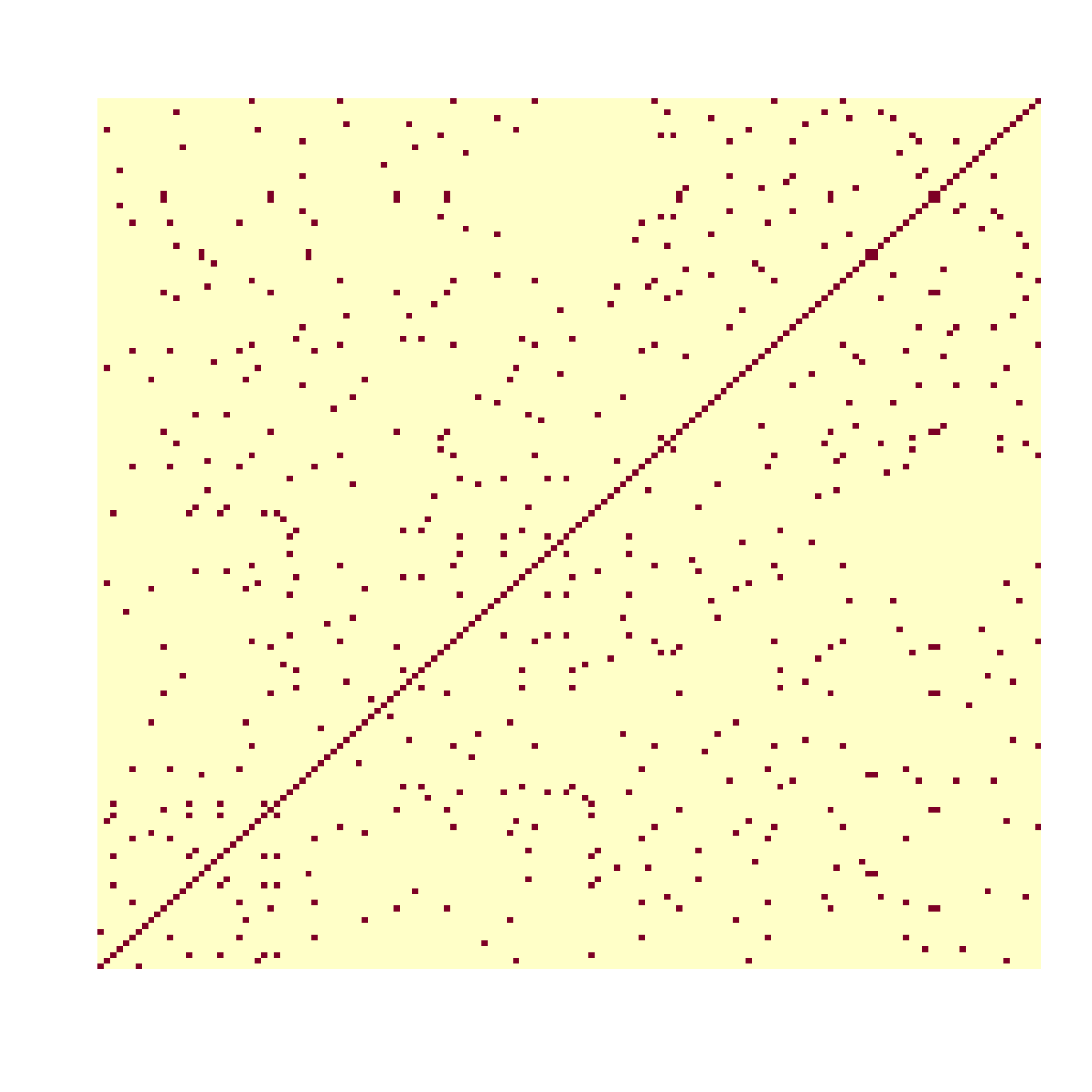}
        \caption{Truth $n=150$}
    \end{subfigure}
    \hfill
    \begin{subfigure}[t]{0.28\textwidth}
        \centering
        \includegraphics[width=\linewidth, trim={1cm 2cm 0 0},clip]{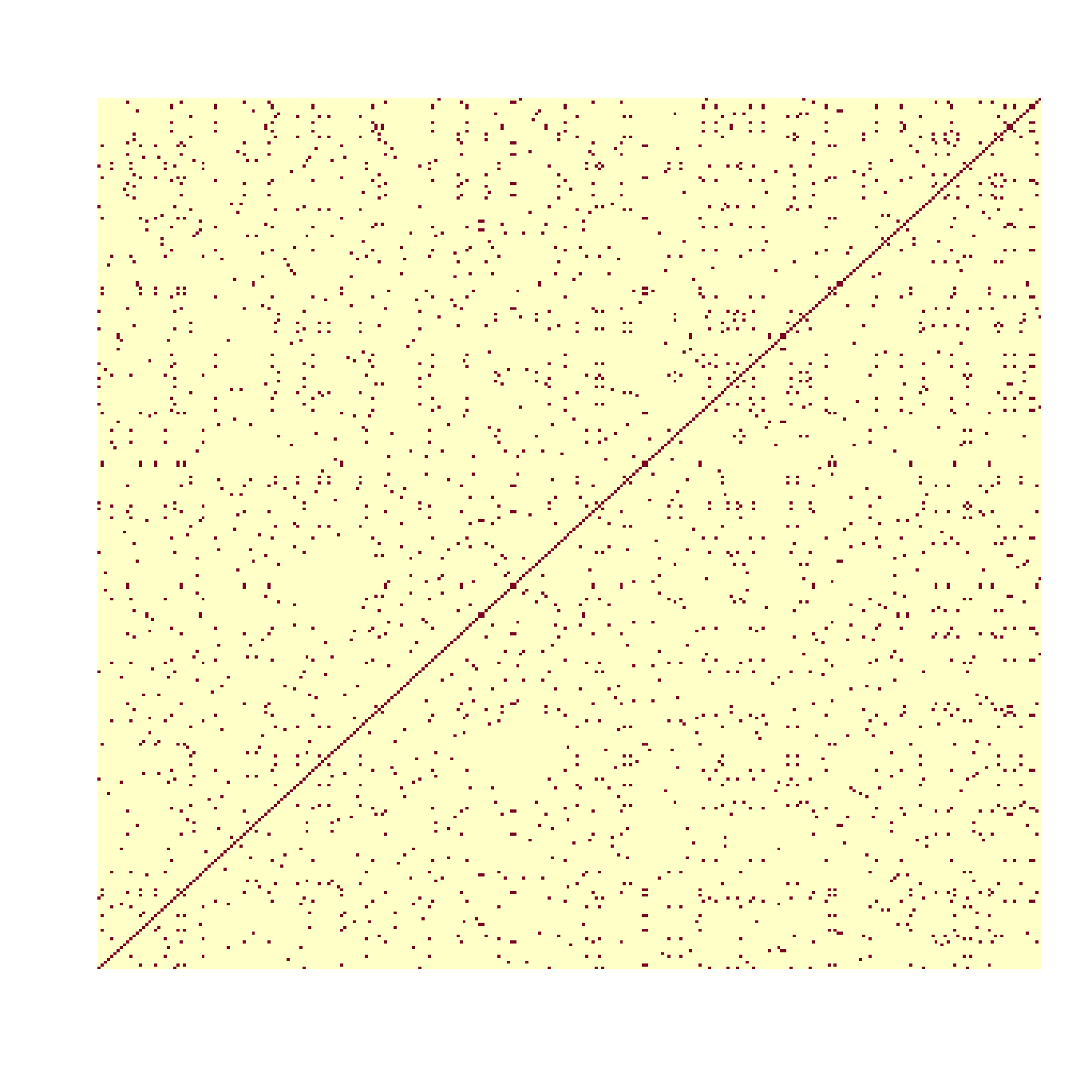}
        \caption{Truth $n=300$}
    \end{subfigure}
    \hfill
    \begin{subfigure}[t]{0.28\textwidth}
        \centering
        \includegraphics[width=\linewidth, trim={1cm 2cm 0 0},clip]{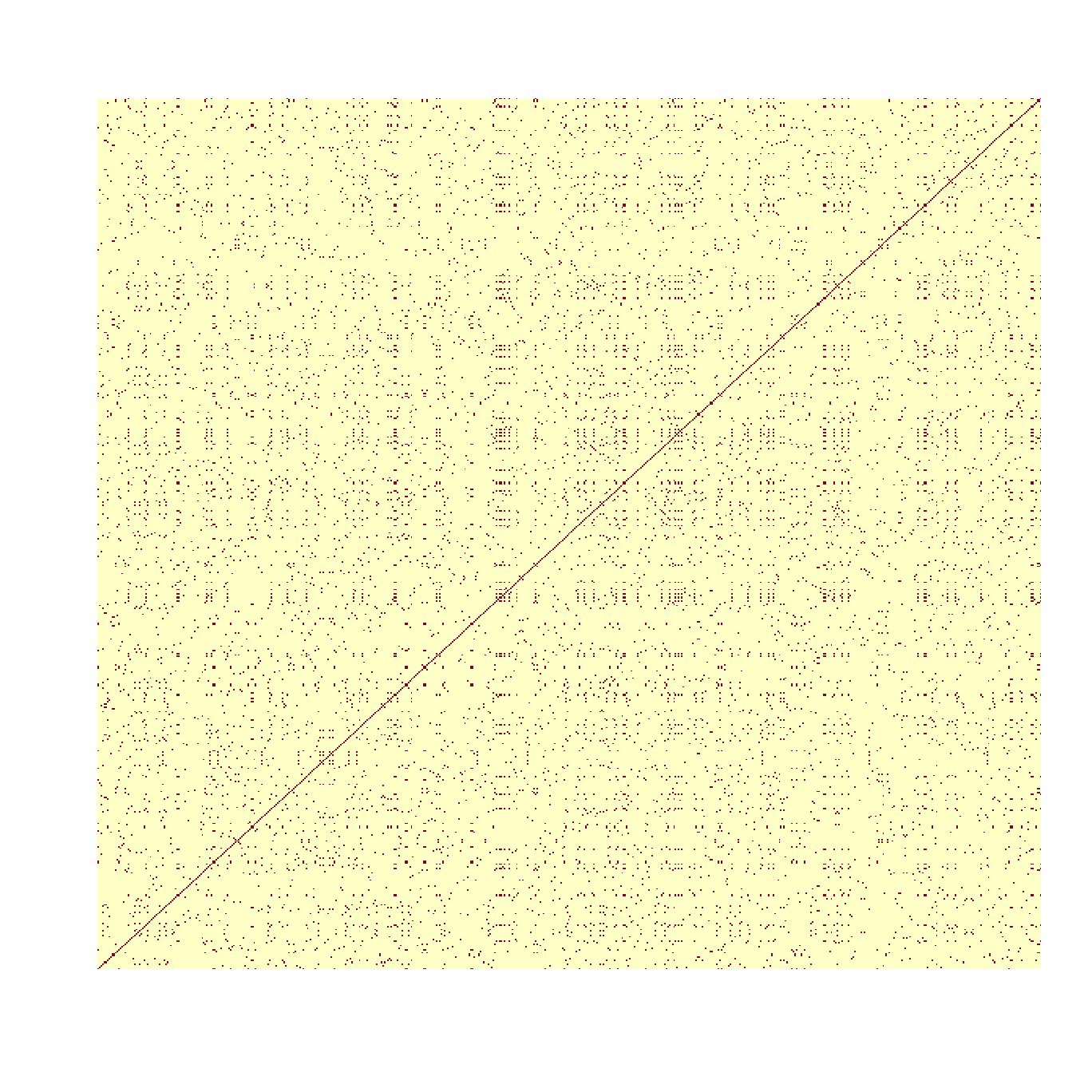}
        \caption{Truth $n=600$}
    \end{subfigure}

    \vspace{0.5em}

    \begin{subfigure}[t]{0.28\textwidth}
        \centering
        \includegraphics[width=\linewidth, trim={1cm 2cm 0 0},clip]{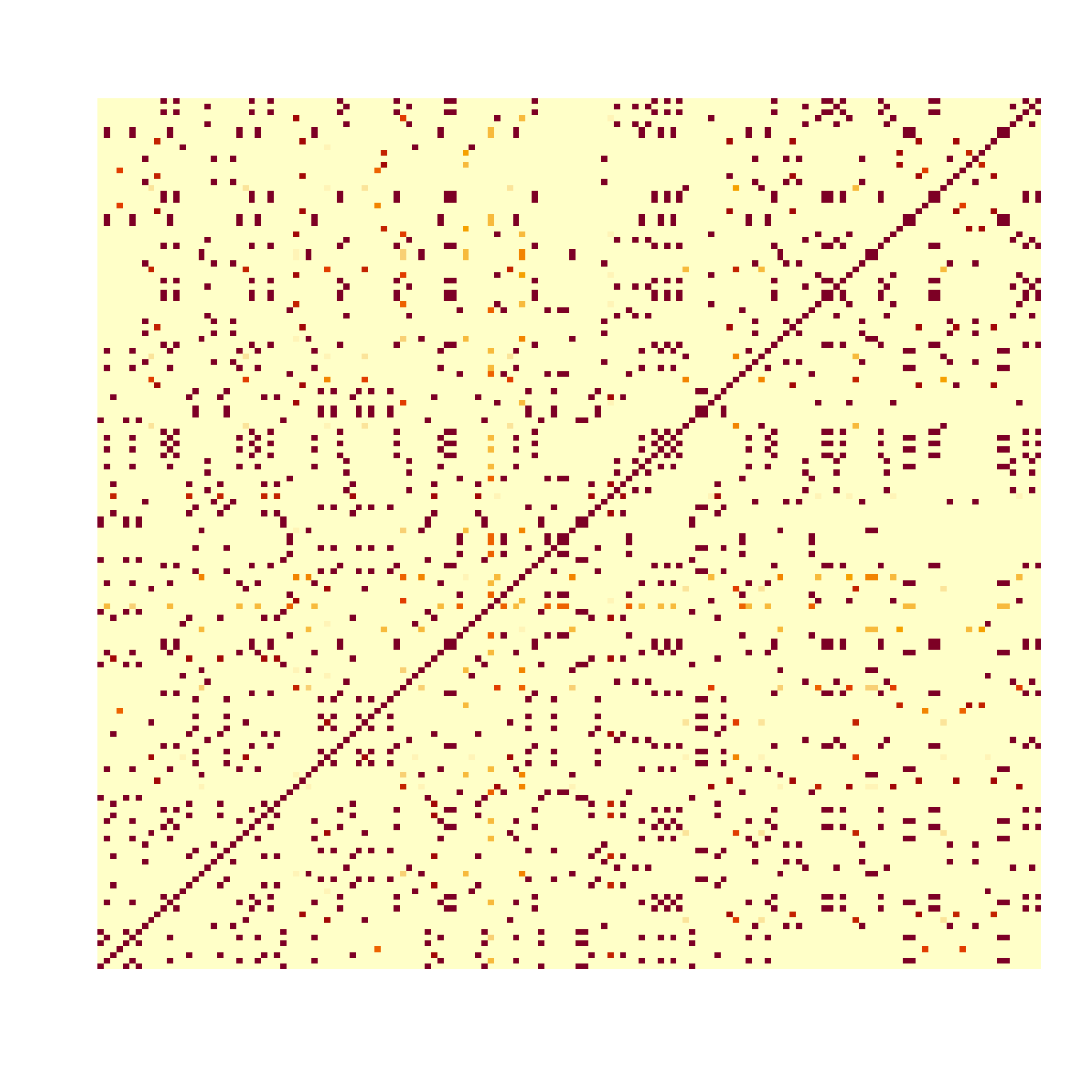}
        \caption{Slice $n=150$}
    \end{subfigure}
    \hfill
    \begin{subfigure}[t]{0.28\textwidth}
        \centering
        \includegraphics[width=\linewidth, trim={1cm 2cm 0 0},clip]{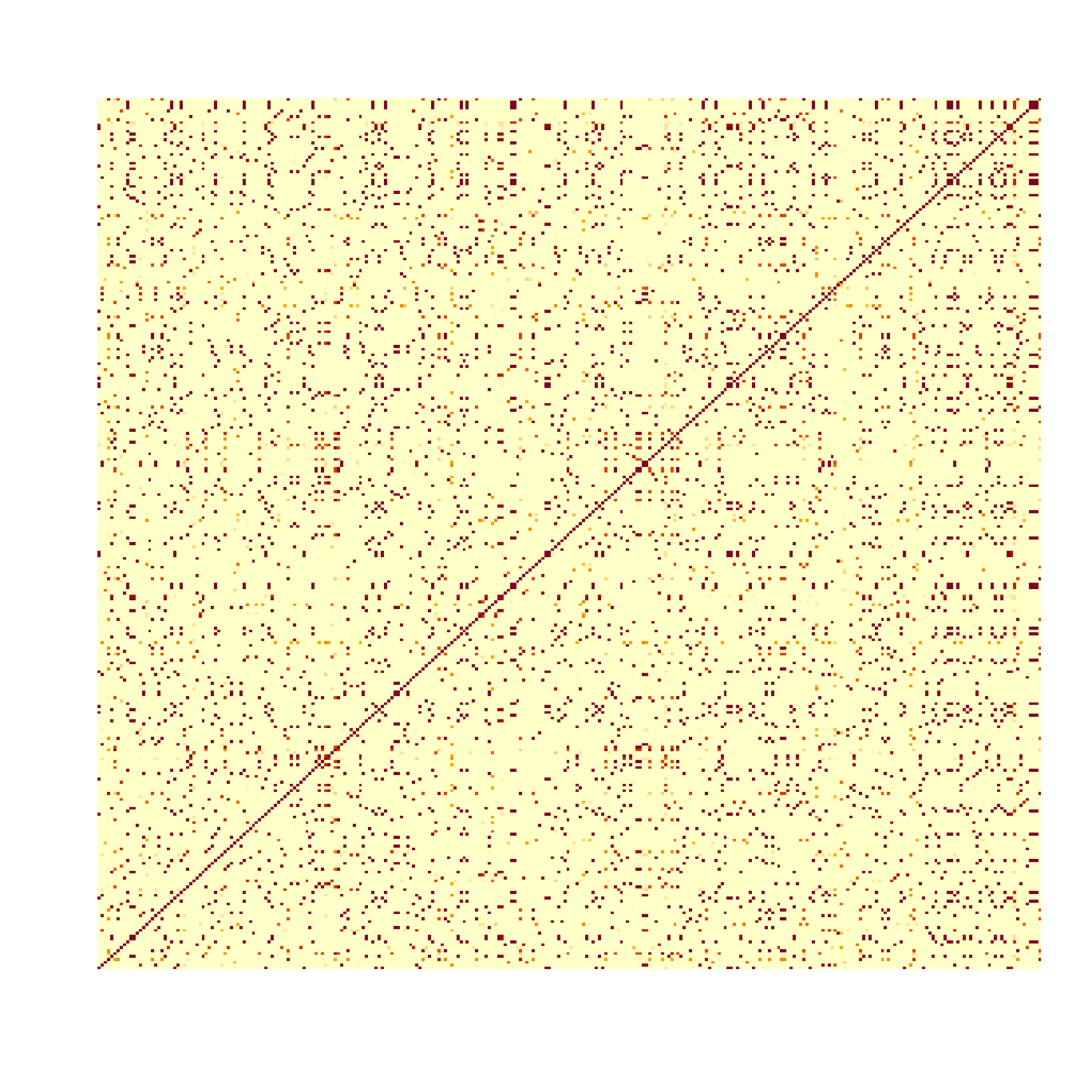}
        \caption{Slice $n=300$}
    \end{subfigure}
    \hfill
    \begin{subfigure}[t]{0.28\textwidth}
        \centering
        \includegraphics[width=\linewidth, trim={1cm 2cm 0 0},clip]{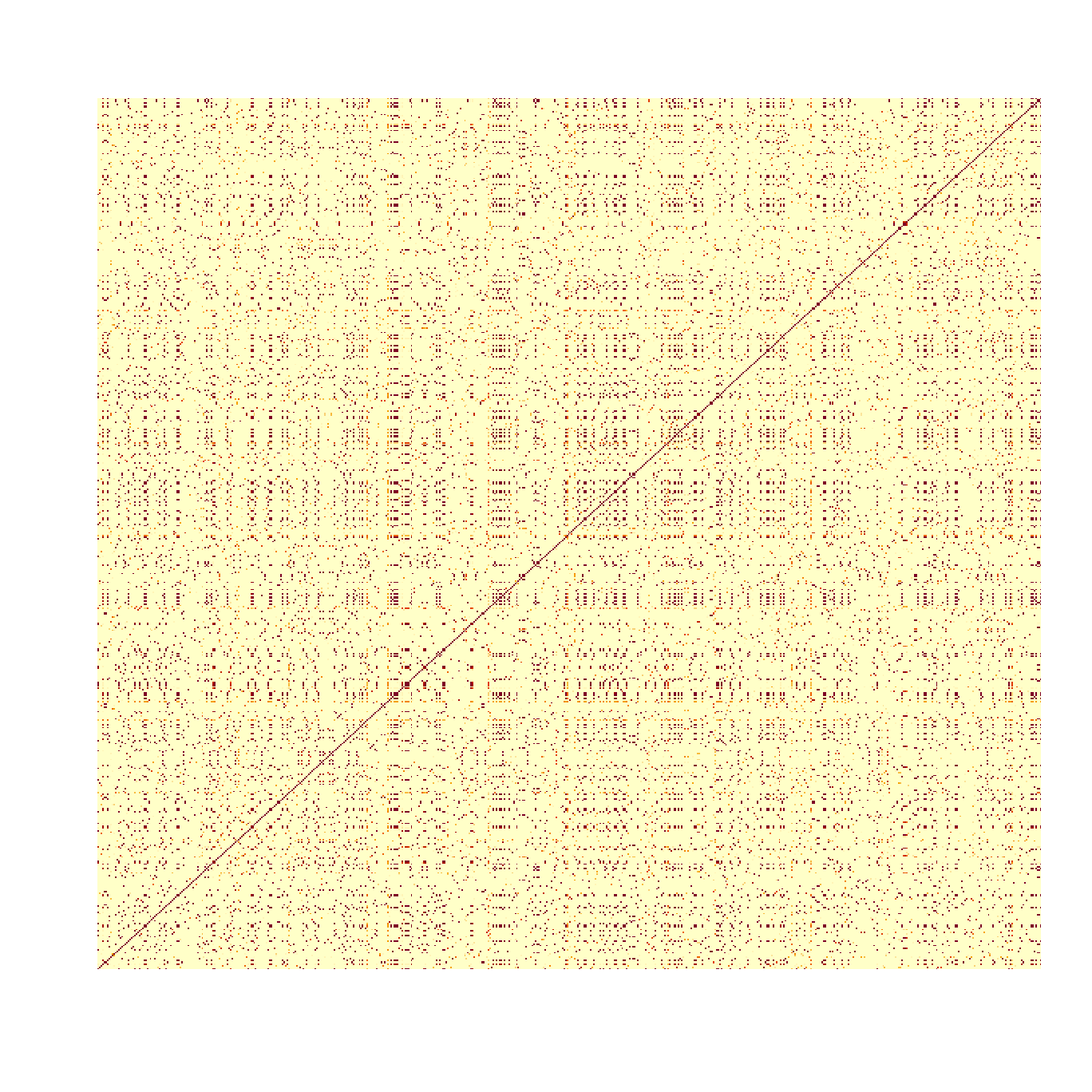}
        \caption{Slice $n=600$}
    \end{subfigure}

    \vspace{0.5em}

    \begin{subfigure}[t]{0.28\textwidth}
        \centering
        \includegraphics[width=\linewidth, trim={1cm 2cm 0 0},clip]{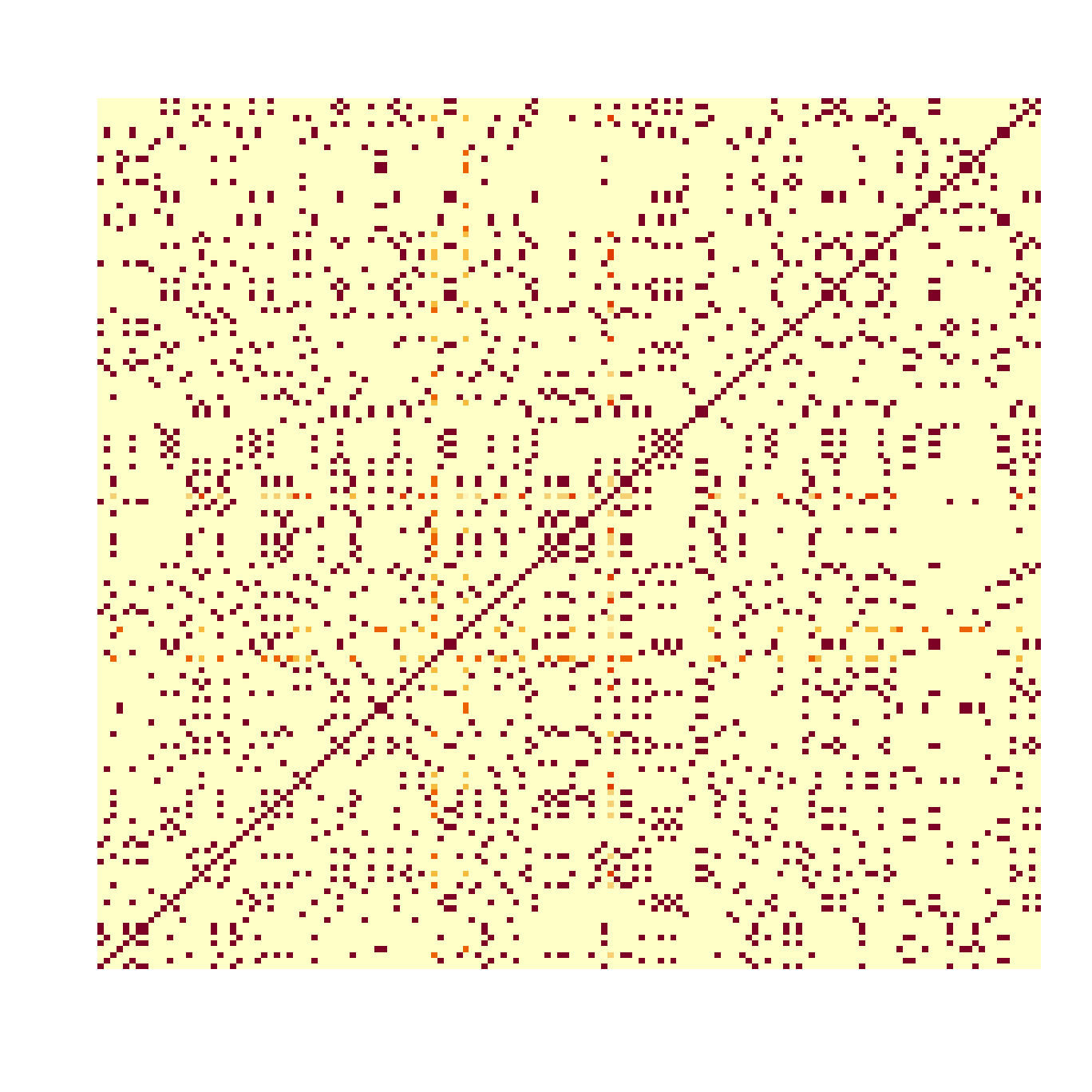}
        \caption{BGS $n=150$}
    \end{subfigure}
    \hfill
    \begin{subfigure}[t]{0.28\textwidth}
        \centering
        \includegraphics[width=\linewidth, trim={1cm 2cm 0 0},clip]{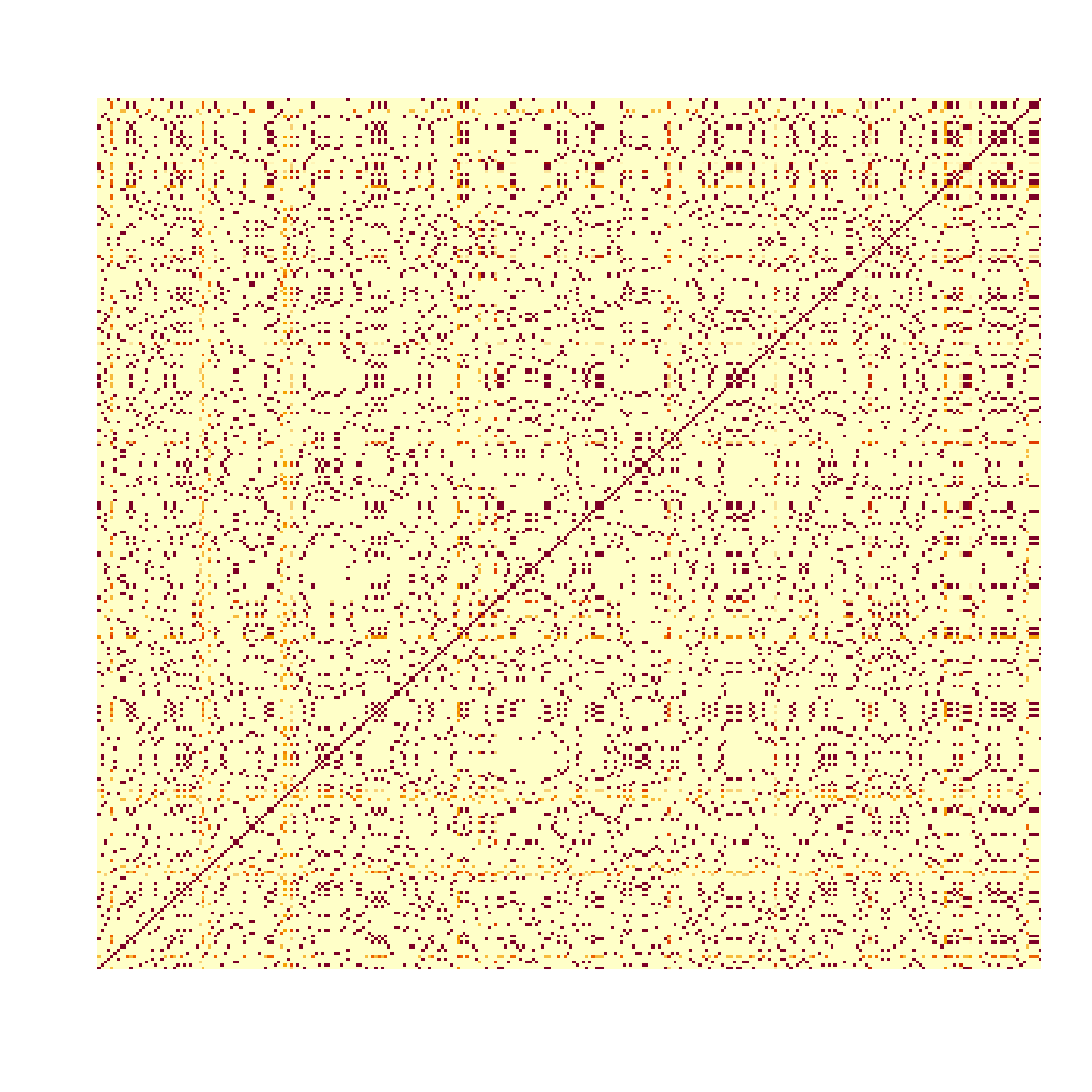}
        \caption{BGS $n=300$}
    \end{subfigure}
    \hfill
    \begin{subfigure}[t]{0.28\textwidth}
        \centering
        \includegraphics[width=\linewidth, trim={1cm 2cm 0 0},clip]{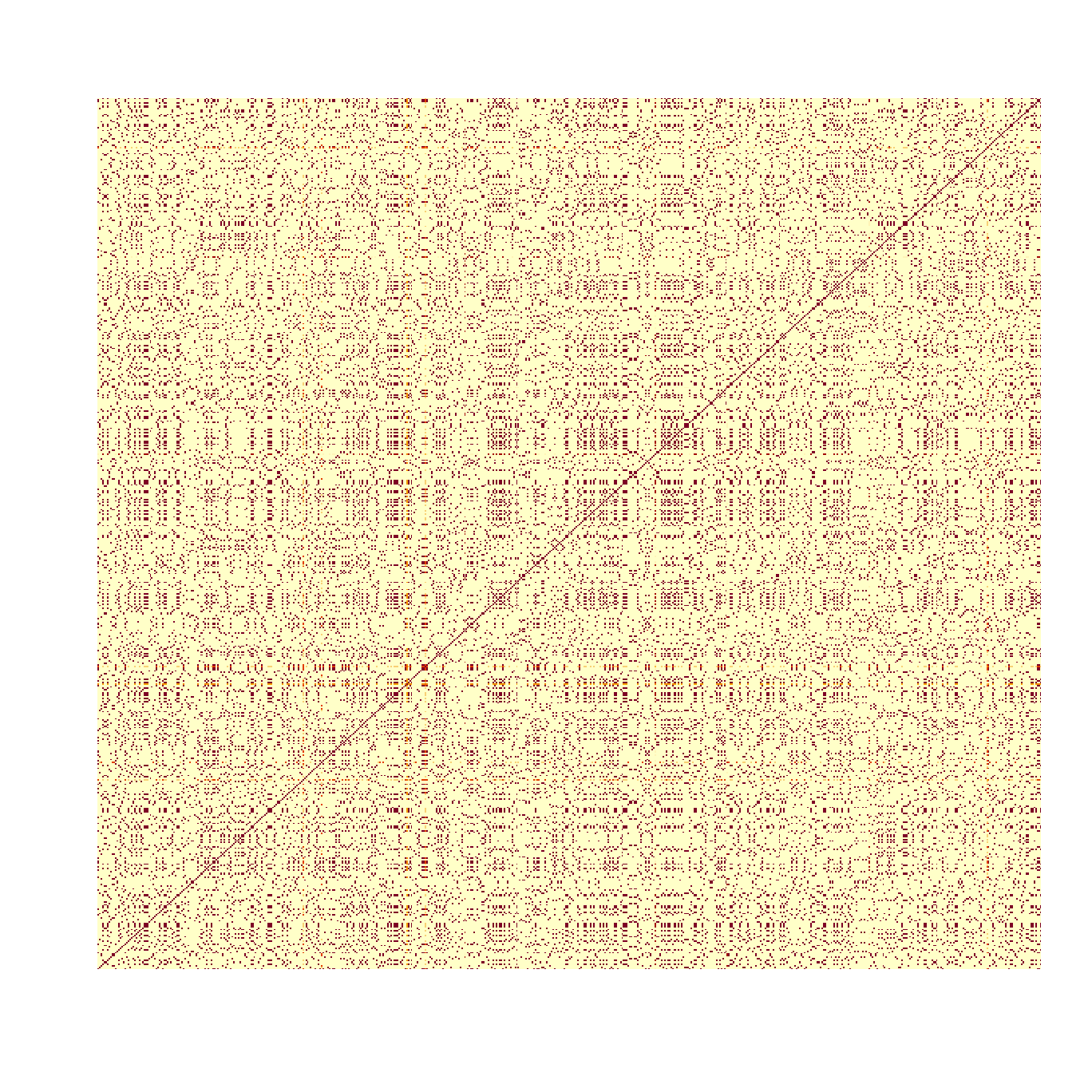}
        \caption{BGS $n=600$}
    \end{subfigure}

    \caption{\textbf{Perturbed Zipf--generated data}: Posterior co-clustering matrices, computed discarding 5{,}000 iterations of burn-in.}
    \label{fig:tenpanels}
\end{figure}

\FloatBarrier
\subsection{Results with fixed concentration parameter}

\begin{table}[htb!]
\centering
\caption{\textbf{Three equally-balanced clusters}: computational and clustering performance results, $\alpha = 1$.}
\label{tab:fixed_alpha}
\resizebox{\textwidth}{!}{
\begin{tabular}{llccccccc}
\toprule
Metric & & $n=150$ & $n=300$ & $n=600$ & $n=1500$ & $n=3000$ & $n=7500$ & $n=12000$ \\
\midrule
Time 1000 iter (sec.)
& Slice & 1.382 & 2.332 & 4.372 & 10.295 & 21.570 & 65.948 & 191.935\\
& Slice no atoms & 1.616 & 2.914 & 5.303 & 12.707 & 27.059 & 73.525 &  NA\\
& BGS  $L =10$  & 1.275 & 2.213 & 4.158 & 9.790 & 20.188 & 59.433 & 106.416\\
& BGS $L =n$& 4.021 & 13.704 & 59.799 & NA & NA & NA & NA\\
& CRP w atoms & 2.198 & 4.415 & 10.141 & 22.485 & NA   &    NA   &    NA \\ 
& CRP &  3.279 & 6.942 & 14.435 & 36.775 & 74.377 &       NA   &    NA\\
\midrule
ESS per second (lik)
& Slice & 474.371 & 428.634 & 147.212 & 54.093 & 18.107 & 3.248 & 9.810 \\
& Slice no atoms & 302.438 & 82.285 & 83.637 & 16.000 & 4.387 & 7.248 & NA\\
& BGS  $L =10$ & 467.188 & 118.002 & 165.963 & 16.529 & 11.848 & 7.010 & 2.766\\
& BGS $L =n$& 145.714 & 24.041 & 12.170 &  NA &        NA  &     NA   &    NA\\
& CRP w atoms & 274.589 &  73.424 &  73.040 &  8.608 &        NA  &     NA  &     NA \\ 
& CRP & 222.056 &  63.014 &  51.481 &  6.290 &  5.883 &       NA   &    NA\\
\midrule
RI
& Slice & 0.909 & 0.858 & 0.878 & 0.894 & 0.888 & 0.890 & 0.893 \\
& Slice no atoms & 0.909 & 0.864 & 0.876 & 0.894 & 0.887 & 0.891 &  NA\\
& BGS  $L =10$  & 0.909 & 0.861 & 0.878 & 0.891 & 0.887 & 0.890 & 0.893 \\
& BGS $L =n$& 0.917 & 0.861 & 0.876 & NA & NA & NA & NA\\
& CRP w atoms & 0.909 & 0.864 & 0.876 & 0.894 &        NA   &     NA    &    NA\\ 
& CRP & 0.917 & 0.858 & 0.876 & 0.893 & 0.887 &        NA   &     NA\\
\midrule
ARI
& Slice & 0.796 & 0.682 & 0.725 & 0.761 & 0.748 & 0.753 & 0.759 \\
& Slice no atoms & 0.796 & 0.696 & 0.721 & 0.763 & 0.747 & 0.755 & NA\\
& BGS $L =10$  & 0.796 & 0.689 & 0.725 & 0.754 & 0.747 & 0.753 & 0.761\\
& BGS $L =n$& 0.812 & 0.690 & 0.721 &       NA &       NA   &     NA   &     NA\\
& CRP w atoms & 0.796 & 0.696 & 0.721 & 0.762 &        NA  &      NA   &     NA\\ 
& CRP &  0.812 & 0.682 & 0.721 & 0.760 & 0.745 &        NA    &    NA\\
\bottomrule
\end{tabular}}
\end{table}

\begin{table}[htb!]
\centering
\caption{\textbf{Perturbed Zipf–generated data}: computational and clustering performance results, $\alpha = 1$.}
\label{tab:zipf_fixed_alpha}
\resizebox{0.7\textwidth}{!}{
\begin{tabular}{llccccc}
\toprule
Metric & & $n=150$ & $n=300$ & $n=600$ & $n=1500$ & $n=3000$ \\
\midrule
Time 1000 iter (sec.)
& Slice & 1.451 & 2.604 & 5.050 & 13.391 & 26.077 \\
& BGS   & 1.163 & 2.047 & 3.788 & 9.119  & 17.915 \\
\midrule
ESS per second (lik)
& Slice & 431.487 & 13.230 & 98.492 & 50.619 & 38.348 \\
& BGS $L =10$  & 236.193 & 41.905 & 142.874 & 21.124 & 39.619 \\
\midrule
ESS per second ($H$)
& Slice & 33.189 & 15.757 & 9.862 & 2.307 & 0.996 \\
& BGS  $L =10$  & 0.000  & 0.000  & 0.000 & 0.000 & 0.000 \\
\midrule
RI
& Slice & 0.949 & 0.960 & 0.966 & 0.976 & 0.976 \\
& BGS   & 0.920 & 0.917 & 0.915 & 0.914 & 0.913 \\
\midrule
ARI
& Slice & 0.450 & 0.540 & 0.587 & 0.630 & 0.646 \\
& BGS   & 0.350 & 0.353 & 0.374 & 0.347 & 0.352 \\
\bottomrule
\end{tabular}
}
\end{table}
\end{document}